\newcommand{\ud}{\mathrm{d}}
\newcommand{\ii}{\mathrm{i}}
\newcommand{\cH}{\mathcal{H}}
\theoremstyle{plain}
\newtheorem{theorem}{Theorem}[section]
\newtheorem{lemma}[theorem]{Lemma}
\newtheorem{proposition}[theorem]{Proposition}
\theoremstyle{definition}
\newtheorem{remark}[theorem]{Remark}
\newtheorem{example}[theorem]{Example}
\numberwithin{equation}{section}
\begin{document}

\title[Self-adjointness in Quantum Mechanics: a pedagogical path]
{Self-adjointness in Quantum Mechanics: \\ a pedagogical path}
\author[A.~Cintio]{Andrea Cintio}
\address[A.~Cintio]{Institute for Chemical and Physical processes, National Research Council (CNR) \\ via G.~Moruzzi 1 \\ I-56124 Pisa (ITALY).}
\email{andrea.cintio@pi.ipcf.cnr.it}
\author[A.~Michelangeli]{Alessandro Michelangeli}
\address[A.~Michelangeli]{Institute for Applied Mathematics and Hausdorff Center for Mathematics, University of Bonn \\ Endenicher Allee 60 \\ 
D-53115 Bonn (GERMANY).}
\email{michelangeli@iam.uni-bonn.de}

%\dedicatory{}

\begin{abstract}
 Observables in quantum mechanics are represented by self-adjoint operators on Hilbert space. Such ubiquitous, well-known, and very foundational fact, however, is traditionally subtle to be explained in typical first classes in quantum mechanics, as well as to senior physicists who have grown up with the lesson that self-adjointness is ``just technical''. The usual difficulties are to clarify the connection between the demand for certain physical features in the theory and the corresponding mathematical requirement of self-adjointness, and to distinguish between self-adjoint and hermitian operator not just at the level of the mathematical definition but most importantly from the perspective that mere hermiticity, without self-adjointness, does not ensure the desired physical requirements and leaves the theory inconsistent. In this work we organise an amount of standard facts on the physical role of self-adjointness into a coherent pedagogical path aimed at making quantum observables emerge as necessarily self-adjoint, and not merely hermitian operators. Next to the central core of our line of reasoning -- the necessity of a non-trivial declaration of a domain to associate with the formal action of an observable, and the emergence of self-adjointness as a consequence of fundamental physical requirements -- we include some complementary materials consisting of a few instructive mathematical proofs and a short retrospective, ranging from the past decades to the current research agenda, on the self-adjointness problem for quantum Hamiltonians of relevance in applications. 
\end{abstract}

\date{\today}

\subjclass[2000]{}
\keywords{
Quantum observables, first quantisation, hermitian operators on Hilbert space, operator and form domain, unbounded hermitian operators, adjoint of an operator, self-adjoint operators, closed operators, Schr\"{o}dinger equation, Schr\"{o}dinger dynamics, strongly continuous unitary groups, Stone's theorem, analytic vectors, closed and semi-bounded quadratic forms, generalised eigenfunctions}

\thanks{This work is partially supported by the Alexander von Humboldt Foundation. The authors warmly acknowledge V.~Bashmakov, M.~Gallone, C.~Kodarin, and R.~Scandone for the many fruitful discussions on the subject}
% grant ``\emph{Cond-Math: Condensed Matter and Mathematical Physics}'' code RBFR13WAET}

\maketitle

%\tableofcontents

\section{Introduction}\label{intro}

Quantum mechanics is a central, mandatory topic in virtually all undergraduate programmes for physicists around the world.
%(in comparison, general relativity, the other major leg of modern physics, is mainly taught at an optional level). 

Physicists in the course of their training are exposed to various degrees of details concerning the mathematical structure of quantum mechanics: this always includes the notion of hermiticity (or symmetry -- we shall consider them as synonymous, as customary) of the operators associated with physical observables, as well as the proof that expectations of hermitian operators are indeed real numbers. The (simple) proof of the inverse implication, namely that by polarisation an operator with real expectations is necessarily hermitian, is often omitted, yet the association of quantum observables with hermitian operators remains physically well grounded and part of the background of any physicist. More seldom it is mentioned that quantum observables are actually self-adjoint, and not merely hermitian operators, yet the two concepts of hermiticity and self-adjointness are usually kept on an equal footing and often used haphazardly, as if they were the same. In other cases self-adjointness is more properly introduced in class, but under the perspective that such extra requirement boils down to mathematical technicalities, possibly of physical relevance, yet not worth being worked out explicitly, the main physical content being the reality of expectation of hermitian operators.

Quantum mechanics is of course learnt also by many mathematicians, usually along the reversed approach from the mathematical axioms to the applications. Self-adjointness, from this perspective, is introduced through its plain mathematical definition, the theory of self-adjoint operators on Hilbert space is developed, and consequences in application to quantum mechanics are derived. This route is clean, but may obfuscate the physical motivation to self-adjointness, and above all it is harder to access for physics undergrads exposed to a first class in quantum mechanics, as well as for those senior physicists who have grown up with the lesson that self-adjointness is ``just technical''.

In this work we propose a pedagogical path, ideally addressed to both the above categories of physics undergrads and professional physicists, which makes the notion of self-adjointness emerge in association with quantum observables in a way that be accessible, mathematically rigorous, physically deep-rooted, and eventually stringent, in the sense that it does not leave room to dismissing the subject as a mere technicality if one wants to develop meaningful physics. These features should also make our line of reasoning appealing for mathematicians approaching quantum mechanics: they certainly do not have difficulties in digesting the definition of self-adjoint operator, but could appreciate seeing how the notion gets shaped in connection to various fundamental physical requirements.

Of course, tacitly speaking we imagine our readership consisting of those experts of the subject that are supposed to teach such topics in class or to their graduate students.

Thus, to stress our perspective, \emph{we are not taking the abstract point of view of the mathematical foundations of quantum mechanics} \cite{Dirac-PrinciplesQM,vonNeumann-MathFoundQM,Mackey-QM-1963-2004,Strocchi-MathQM,DellAnt_QM1-2015} \emph{or the general theory of self-adjoint operators in Hilbert space} \cite{Blank-Exner-Havlicek-2008,Amrein-HilberSpMethods-2009,schmu_unbdd_sa}. 
% 
% focusing here on an efficient route 
% 
% In this respect our natural starting point 
% Our natural starting point is rather operational in nature: we place ourselves at the precise stage at which self-adjointness kicks in in the course of the typical physical discussion of the mathematical framework of quantum mechanics, namely the emergence of quantum observables through `first quantisation' from classical mechanics, and we intend to focus on an efficient route to 
% 
Our natural starting point is rather operational in nature: we move from the precise stage at which quantum observables are introduced as linear and (at least) hermitian operators acting within the Hilbert space of states of the considered quantum system, as done in the most established and traditional physical introductions on the mathematical framework of quantum mechanics -- we have in mind, among others, Dirac \cite{Dirac-PrinciplesQM}, Landau \cite{Landau-Lifshitz-3}, Cohen-Tannoudji et al.~\cite{Cohen-Tannoudji-1977-2020}, Sakurai \cite{sakurai_napolitano_2017}, Weinberg \cite{weinberg_2015} -- and we intend to focus on a conceptually efficient route to make self-adjointness (and not just mere hermiticity) kick in for such operators. The playground we find most instructive from this perspective are those operators emerging through `first quantisation' from classical mechanics: considerations that are made in concrete for this class can be then used in more abstract settings.

First quantisation, rigorously speaking, is a nebulous concept (at a more fundamental level position and momentum operators emerge as generators in the Schr\"{o}dinger representation of the Weyl $C^*$-algebra \cite[Chapter 3]{Strocchi-MathQM}): at this point we just refer to it quite pragmatically, as is indeed done in a typical first introduction to quantum mechanics for physicists. It provides a physically grounded, operational recipe to construct quantum observables from their classical counterparts (up to non-commutative ordering) in the form of linear functional operators acting in the Hilbert space of states for the considered quantum system \cite[Sect.~I.2]{vonNeumann-MathFoundQM}, \cite[\S 15 and \S 17]{Landau-Lifshitz-3}, \cite[Sect.~1.3, 1.4, 3.3]{weinberg_2015}, \cite[Sect.~III.5]{Cohen-Tannoudji-1977-2020}, \cite[Sect.~1.6]{sakurai_napolitano_2017}.

Non-restrictively for our purposes, we shall often consider a quantum system with one spatial degree of freedom, hence a quantum particle of mass $m$ in one dimension, with Hilbert space $L^2(\mathbb{R})$ (or $L^2(a,b)$ for a particle in the box $(a,b)$, and similar choices): then first quantisation associates the quantum position observable with the multiplication by the spatial coordinate $x$ and the quantum momentum observable with the differential operator $-\ii\hbar\frac{\ud}{\ud x}$. (We keep the one-dimensional setting for many of our examples having in mind that this is the first playground used in class, but of course the whole material can be re-phrased in arbitrary dimension.) This allows one to pass from a classical Hamiltonian function $H(p,q)$ on $\mathbb{R}^2$ to the quantum Hamiltonian counterpart $H(-\ii\hbar\frac{\ud}{\ud x},x)$ on $L^2(\mathbb{R})$ (up to ordering, as said, due to the non-commutativity of position and momentum, although the above correspondence is unambiguous for classical Hamiltonians $H(p,q)=\frac{1}{2m}p^2+V(q)$). Thus, in practice, we shall discuss the emergence of the notion of self-adjointness having in mind usual quantum observables such as
\begin{equation}\label{eq:op-list}
 \begin{array}{ccl}
  \textrm{multiplication by $x$} & & \textrm{(position operator)} \\
  \displaystyle-\ii\frac{\ud}{\ud x}\qquad  & & \textrm{(momentum operator)} \\
  \quad \displaystyle-\frac{\ud^2}{\ud x^2} & & \textrm{(kinetic energy operator)} \\
  \displaystyle\Big(-\ii\frac{\ud}{\ud x}-A(x)\Big)^2 & & \textrm{(magnetic kinetic energy operator)} \\
  \displaystyle-\frac{\ud^2}{\ud x^2}+V(x)\qquad  & & \textrm{(Schr\"{o}dinger operator)} \\
  \sqrt{\displaystyle-\frac{\ud^2}{\ud x^2}+1}+V(x) & & \textrm{(semi-relativistic Schr\"{o}dinger operator)}
 \end{array}
\end{equation}
and so forth, where inessential physical constants have been re-scaled out.

At this stage one encounters also observables that are inherently quantum (i.e., with no classical analogue), and hence are not the outcome of first quantisation, such as the spin: but such observables actually correspond to $n\times n$ matrices acting on $\mathbb{C}^n$ for some $n\in\mathbb{N}$, for which the request of hermiticity already ensures (is in fact equivalent to) self-adjointness.

For a vast part of the physical discussion on the general principles of quantum mechanics, observables like \eqref{eq:op-list} are introduced as formal operators acting on square-integrable functions (`wave functions') over $\mathbb{R}$, `formal' here meaning the lack of reference to operator domains. The first step to build up the notion of self-adjointness in such physical context is to make the declaration of the operator (or form) domain somewhat ``physically inescapable'' (which is at the very opposite to being a mere mathematical technicality). Section \ref{sec:domains} is devoted to this first goal.

As elementary as it is, this first part of our programme is aimed at stressing that the sole formal action of an explicit operator, emerging, for example, from first quantisation arguments, does not qualify it as an observable. Nor is it possible to dismiss the domain choice to a sort of ``automatic'' assignment, for instance identifying the domain as the largest or the smallest selection of vectors of the underlying Hilbert space the considered formal operator can be meaningfully applied to. Both are tacit and typical ``temptations'' among many physicists and this explains our emphasis on this point. Of course, such temptations are harmless for all those quantum observables represented by bounded linear operators on Hilbert space, but clearly first quantisation produces also unbounded operators in the Schr\"{o}dinger representation.

Besides, once one finally convinces oneself that quantum observables are a special class of linear operators on Hilbert space identified by the simultaneous declaration of their domain and their action on the vectors of such domain, it is natural to define self-adjointness and hermiticity as two similar but in general distinct notions, as well as to introduce the physically meaningful, auxiliary concept of closed operators. These are the topics of the final part of Section \ref{sec:domains}.

In Section \ref{sec:emergence_SA} we turn to the core of our programme, that is, the discussion of the main physical motivations that  translate mathematically into the requirement that quantum observables be self-adjoint, and not just hermitian operators on Hilbert space. Hermiticity, as said, stems from the physical need of real expectations and real eigenvalues, and now we want to present physically grounded reasons for the stronger demand of self-adjointness.

\small

 \begin{table}[t!]
 \begin{tabular}{|c|c|}
 \hline\hline
  \!\!\begin{tabular}{c}
   \textsf{\textbf{Quantum observables} are represented by linear operators $A$} \\
  \textsf{ on Hilbert space $\cH$. Physical relevance of hermiticity:} \\
   \textsf{an operator $A$ has only real expectations $\langle\psi,A\psi\rangle$} $\Leftrightarrow$ \textsf{$A$ is hermitian} 
  \end{tabular}\!\!  & \begin{tabular}{c} \textsf{Sect.~\ref{intro}} \\ \textsf{Sect.~\ref{sec:domains} (intro)} \\ \textsf{Eqn.}~\eqref{eq:defsa1}-\eqref{eq:defsa1ter} \end{tabular} \\
  \hline\hline
  \!\!\begin{tabular}{c}
   \textsf{first quantisation $\rightarrow$ yields operators on $L^2$-space like \eqref{eq:op-list}} \\ \textsf{they are \textbf{formally} hermitian}   
  \end{tabular}\!\! &  \begin{tabular}{c} \textsf{Sect.~\ref{intro}} \\ \textsf{Sect.~\ref{sec:domains} (intro)} \end{tabular} \\
  \hline
  \!\!\begin{tabular}{c|c}
   \textsf{the formal action $\psi\mapsto A\psi$} & \textsf{the formal expectation $\psi\mapsto\langle\psi,A\psi\rangle$} \\
   \textsf{of certain formally hermitian} & \textsf{of certain formally hermitian} \\
   \textsf{operators on $L^2$-space} & \textsf{operators on $L^2$-space} \\
   \textsf{is not applicable to all $\psi$'s:} & \textsf{is not applicable to all $\psi$'s:} \\
   \textsf{non-$L^2$ output may occur} & \textsf{infinite expectation may occur}
  \end{tabular}\!\! & \begin{tabular}{c} \textsf{Sect.~\ref{sec:op-form-domain}} \\ \textsf{Example \ref{ex:not-in-operator-domain}} \\ \textsf{Example \ref{ex:not-in-form-domain}} \end{tabular} \\
  \hline
  \!\!\begin{tabular}{c}
  \textsf{\textbf{operator domain} $\mathcal{D}(A)$ and \textbf{form domain} $\mathcal{D}[A]$ of certain} \\ \textsf{hermitian operators $A$ are proper subspaces of $\cH$}
  \end{tabular}\!\! &  \textsf{Sect.~\ref{sec:op-form-domain}} \\
  \hline
  \!\!\begin{tabular}{c}
  $\rightarrow$ \textsf{the reason is the \textbf{unboundedness} of $A$:} \\ \textsf{an everywhere-defined hermitian operator is necessarily bounded}
  \end{tabular}\!\! & \!\!\begin{tabular}{c} \textsf{Hellinger-Toeplitz} \\ \textsf{Theorem \ref{thm:hellingertoeplitz} } \\ \textsf{Example \ref{ex:unbdd-everywhere}} \end{tabular}\!\! \\
  \hline
    \!\!\begin{tabular}{c}
  \textsf{the \textbf{canonical commutation relation} $QP-PQ=\ii\hbar$ on $L^2(\mathbb{R}^d)$} \\ \textsf{can only be satisfied if at least one among $P,Q$ is unbounded}
  \end{tabular}\!\! & \!\!\begin{tabular}{c} \textsf{Example \ref{ex:Winter-Wielandt}} \\ \textsf{Example \ref{Popas-example}} \end{tabular}\!\! \\
  \hline
  \!\!\begin{tabular}{c}
  \textsf{is the association of $\mathcal{D}(A)$ and $\mathcal{D}[A]$ to a formal $A$ `automatic'? \textbf{no}:} \\
  \textsf{$\rightarrow$ choosing the domain \emph{maximally} is incompatible with hermiticity} \\
  \textsf{$\rightarrow$ there may be no maximal domain of hermiticity} \\
   \textsf{$\rightarrow$ no non-trivial \& unambiguous notion of minimal domain of hermiticity}
  \end{tabular}\!\! & \!\!\begin{tabular}{c} \\ \textsf{Sect.~\ref{sec:no-domain-maximally}} \\ \textsf{Sect.~\ref{sec:no-max-domain-hermiticity}}  \\ \textsf{Sect.~\ref{sec:nominimaldom}}  \end{tabular}\!\! \\
%   \hline
%   \!\!\begin{tabular}{c}
%        \textsf{neglecting at all observables' domain by introducing } \\
%        \textsf{\textbf{generalised vectors} and \textbf{infinite energies} yields inconsistencies}
%       \end{tabular}
%   \!\! & \textsf{Sect.~\ref{sec:infinities_ambiguous}} \\ 
  \hline
  \!\!\begin{tabular}{c}
  \textsf{conclusion: difference between formal action $A$ and operator $(A,\mathcal{D}(A))$;} \\ \textsf{$(A,\mathcal{D}_1)$ and $(A,\mathcal{D}_2)$ with $\mathcal{D}_1\neq\mathcal{D}_2$ are different observables}
   \end{tabular}\!\! & \textsf{Sect.~\ref{sec:ADA}} \\
   \hline
  \!\!\begin{tabular}{c}
  \textsf{additional conclusion: there is physics in the domain declaration;} \\ \textsf{role of boundary conditions}
   \end{tabular}\!\! & \textsf{Sect.~\ref{sec:bc}} \\
   \hline
  \textsf{only when $\mathcal{D}(A)$ is dense in $\cH$, is the \textbf{adjoint} $A^\dagger$ unambiguously defined} & \textsf{Sect.~\ref{sec:density-of-domain}} \\
   \hline
   \!\!\begin{tabular}{c}
   \textsf{once the non-triviality of the domain declaration is understood,} \\ \textsf{it finally makes sense to define \textbf{hermitian} vs \textbf{self-adjoint}:} \\ \textsf{identical notions for bounded $A$, in general distinct for unbounded $A$}
    \end{tabular}\!\! & \textsf{Sect.~\ref{sec:hermitian-selfadj}}  \\
   \hline
    \!\!\begin{tabular}{c}
   \textsf{self-adjoint operators may be unbounded (hence non-continuous),} \\ \textsf{but at least are all \textbf{closed operators};} \\ \textsf{instead, unbounded hermitian operators are not necessarily closed
} \end{tabular}\!\! & \textsf{Sect.~\ref{sec:closedoperators}}  \\
   \hline
    \!\!\begin{tabular}{c} 
   \textsf{\textbf{algebraic manipulation} of \textbf{unbounded} quantum observables} \\
   \textsf{is a touchy business: paradoxes and erroneous conclusions} \\
   \textsf{if \textbf{domain issues} are overlooked}
   \end{tabular}\!\! & \textsf{Sect.~\ref{sec:domains-touchy}} \\
   \hline\hline
 \end{tabular}
 \vspace{0.1cm}
 \caption{\label{tab:scheme1} Synoptic scheme of the main conceptual steps -- first part (Section \ref{sec:domains})} 
  \end{table}

 \begin{table}[t!]
 \begin{tabular}{|c|c|}
 \hline\hline
   \!\!\begin{tabular}{c} 
   \textsf{What physical requirements on quantum observables} \\ \textsf{ prescribe them to be self-adjoint and not merely hermitian?}
   \end{tabular}\!\! & \\
   \hline
   \!\!\begin{tabular}{c} 
   \textsf{the Schr\"{o}dinger equation $\ii\partial_t\psi(t)=H\psi(t)$, $\psi(0)=\psi_0\in\mathcal{D}(A)$} \\ \textsf{determines a \textbf{unique} solution $\psi(t)\in\mathcal{D}(H)$ that evolves } \\
   \textsf{\textbf{unitarily}, \textbf{strongly continuously (and with group property) in time}} \\
   \textsf{\emph{if and only if} the Hamiltonian $H$ is self-adjoint, not merely hermitian}
   \end{tabular}\!\! & \begin{tabular}{c} \textsf{Sect.~\ref{sec:selfadj-spectralthm}} \\ \textsf{Theorem \ref{thm:selfadj-SchrEq}} \end{tabular}\\
   \hline
      \!\!\begin{tabular}{c} 
   \textsf{self-adjointness of $H$ gives rise to \textbf{spectral theorem} / \textbf{functional calculus}} \\ \textsf{so as to build the \textbf{Schr\"{o}dinger propagator} $e^{-\ii t H}$:} \\
   \textsf{requiring a \textbf{dense of analytic vectors} $\psi$ with $ \sum_{n=0}^\infty\|H^n\psi\| t^n/n!<+\infty$} \\ % $e^{-\ii t H}\psi=\sum_{n=0}^\infty\frac{(-\ii t)^n}{n!} H^n \psi$
   \textsf{ necessarily makes a closed hermitian $H$ self-adjoint}
   \end{tabular}\!\! & \!\!\!\begin{tabular}{c} \textsf{Sect.~\ref{sec:analyticvectors}} \\ \textsf{Nelson's theorem:} \\ \textsf{Theorem \ref{thm:nelson}} \end{tabular} \!\!\!\!\\
   \hline
   \!\!\begin{tabular}{c}
   \textsf{an unphysical phenomenon:} \\
   \textsf{Schr\"{o}dinger dynamics $\psi(t)$ originating from a given initial datum $\psi_0$} \\
   \textsf{is non-unique in the lack of an explicit declaration of self-adjointness}
   \end{tabular}\!\! & \textsf{Sect.~\ref{sec:non-uniquedynamics}} \\
   \hline
   \!\!\begin{tabular}{c}
   \textsf{for a generic quantum observable $A$, self-adjointness (and not mere} \\
   \textsf{hermiticity) is imposed by analogy with the quantum Hamiltonian} 
   \end{tabular}\!\! & \textsf{Sect.~\ref{sec:SA-generic-obs}} \\
   \hline
   \!\!\begin{tabular}{c}
   \textsf{requiring a quantum observable, as a \underline{closed} hermitian operator $A$,} \\
   \textsf{to have an \textbf{orthonormal basis of eigenstates} makes $A$ self-adjoint} 
   \end{tabular}\!\! & \begin{tabular}{c} \textsf{Sect.~\ref{sec:OBN-estates}} \\ \textsf{Theorem \ref{thm:onbEV-sa}} \\ \textsf{Example \ref{ex:qao}} \end{tabular}\\
   \hline
   \!\!\begin{tabular}{c}
   \textsf{requiring certain observable expectations to behave as} \\
   \textsf{a \textbf{densely defined}, \textbf{lower semi-bounded}, \textbf{closed quadratic form}}  \\
   \textsf{forces the underlying linear operator $A$ associated with the form} \\
   \textsf{to be self-adjoint (and not merely hermitian)}
   \end{tabular}\!\! & \begin{tabular}{c} \textsf{Sect.~\ref{sec:closedsemibdd-sa}} \\ \textsf{Theorem \ref{thm:closed-sa}}  \end{tabular}\\
   \hline
   \!\!\begin{tabular}{c}
    \textsf{only self-adjointness of the observable $A$ ensures the consistent} \\
    \textsf{and non-ambiguous expansion} $\psi=\sum_n c_n\psi_n+\int c(\lambda)\psi_\lambda\ud\lambda$ $\forall\psi\in\cH$ \\
     \textsf{in terms of \textbf{eigenfunctions} and  \textbf{generalised eigenfunctions} of $A$}
   \end{tabular} & \textsf{Sect.~\ref{sec:generalized-EF}} \\
   \hline\hline
 \end{tabular}
 \vspace{0.1cm}
 \caption{\label{tab:scheme2} Synoptic scheme of the main conceptual steps -- second	 part (Section \ref{sec:emergence_SA})} 
  \end{table}

%      \hline
%    \!\!\begin{tabular}{c}
%    \textsf{only self-adjointness of the observable $A$} \\
%    \textsf{ensures the expansion $\psi=\sum_n c_n\psi_n+\int c(\lambda)\,\psi_\lambda\,\ud\lambda$}  \\
%    \textsf{in terms of eigenfunctions and generalised eigenfunctions of $A$}
%    \end{tabular}\!\! & \textsf{Sect.~\ref{sec:generalized-EF}}  \end{tabular}\\
  
  %\psi\;=\;\sum_n c_n\psi_n+\int c(\lambda)\,\psi_\lambda\,\ud\lambda
  
\normalsize

We follow a sort of hierarchical order in importance, starting with the most relevant observables: quantum Hamiltonians (the observables governing the evolution in time of the considered quantum systems). We thus discuss self-adjointness as that feature of quantum Hamiltonians that, unlike mere hermiticity, makes them the generators of the quantum dynamics with all the expected physical characteristics (unitarity, strong continuity in time, group property at different instants of time). 
%%%%%%%%%%%%%%%%%%%%%%%%%%%%%%%%%%%%%%%%%%%%%%%%%%%%
%%%%%%%%%%%%%%%%%%%%%%%%%%%%%%%%%%%%%%%%%%%%%%%%%%%%
%%%%%%%%%%%%%%%%%%%%%%%%%%%%%%%%%%%%%%%%%%%%%%%%%%%%
Next, we discuss how self-adjointness is dictated from the frequent and physically relevant circumstance where the operator domain contains an orthonormal basis of eigenvectors (an old line of reasoning that is already present in the first historical constructions of the mathematical structure of quantum mechanics). In addition, we examine the necessity of self-adjointness owing to the physical requirement that the expectations of certain relevant observables be uniformly bounded from below (like for the Hamiltonian of a stable quantum system) and behave as a lower semi-continuous quadratic form. Last, we outline the crucial relevance that self-adjointness (unlike mere hermiticity) has in allowing for the expansion of a generic state (vector in the Hilbert space) into ordinary and generalised eigenvectors of a quantum observable.

The main steps of our programme are visualised in the synoptic schemes of Tables \ref{tab:scheme1} and \ref{tab:scheme2}.

%Additional conclusion: there is physics in the declaration of the domain. Role of the boundary conditions

The pedagogical path developed throughout Sections \ref{sec:domains} and \ref{sec:emergence_SA} is formulated with an amount of mathematics (basics from functional analysis and operator theory) that lies presumably at the edge of the technical arsenal one is equipped with in the course of a first physical introduction to quantum mechanics at an undergraduate level -- but is surely part of the minimal background on mathematical methods for physics which one learns soon after  (we give for granted the notion of dense subspace in an infinite-dimensional Hilbert space, orthonormal vs algebraic basis, orthogonal complement, concrete $L^2$-spaces, and relevant subspaces such as the Schwartz functions, whereas we revisit the notion of adjoint operator and introduce standard Sobolev spaces ``operationally'' with only a tacit reference to weak derivatives and distributions). Most importantly, we intended all such mathematical machinery to emerge and be dealt with in very close connection with the physical reasoning that unfolds along the way.
%, so as to keep stringent physical grounds to accomodate quantum observables in the theory in the form of operators that are self-adjoint, and not merely hermitian. 
We do so also by presenting a progression of concrete examples that we believe can be instructively worked out in the course of the main line of reasoning.

Two additional Sections contain supplementary materials that we reckon to be equally instructive. In Section \ref{sec:selfadjproblem} we collected a list of the most representative categories of self-adjointness problems, solved or still under investigation in quantum mechanics: we refer here to the problem of rigorously proving self-adjointness for quantum observables of relevance in the applications, where the formal action of such observables is dictated by physical heuristics. The two-fold goal is to emphasise that this has been a non-trivial problem in the past, and it is still active for quantum models of recent theoretical of applied importance.

Last, we deferred to Section \ref{sec:math-proofs} the mathematical proofs of certain fundamental results that translate various physical requirements into the notion of self-adjointness (Theorems \ref{thm:selfadj-SchrEq}, \ref{thm:onbEV-sa}, \ref{thm:closed-sa}). Such proofs, as classical as they are by now, are somewhat more elaborated than the rest of the discussion in Sections \ref{sec:domains} and \ref{sec:emergence_SA}, and we find convenient not to interrupt the main reasoning therein. Yet, we believe that it is beneficial to include such proofs as part of our pedagogical path, and for this reason we did not merely made reference to the literature (which would require a multiplicity of separate facts to be cited from different contexts), but we instead assembled them in a form that makes them accessible to any mathematically educated physical readership.

\section{First part: declaring the domain is inescapable}\label{sec:domains}

We start our discussion from the typical observables in the Schr\"{o}dinger representation for a one-dimensional (spinless) quantum particle, namely formal operators of the type \eqref{eq:op-list} acting in the Hilbert space $L^2(\Omega)$, where $\Omega\subset\mathbb{R}$ is in practice a finite or infinite interval, or union of intervals, or the whole real line.

This is the concrete playground for a more abstract setting in which $\cH$ is a \emph{complex} Hilbert space (with the convention, throughout this work, that the associated scalar product $\langle\cdot,\cdot\rangle$ is anti-linear in the first entry and linear in the second) and $A$ is a linear operator acting on $\cH$, in particular an operator associated with a generic quantum observable. For the choices \eqref{eq:op-list} $A$ is a (pseudo-)differential operator of at most second order on $L^2(\Omega)$.

We should rather write `formal operator' $A$ as long as its domain remains unspecified, meaning that for the time being we only refer to the recipe $\psi\mapsto A\psi$ that produces the output $A\psi$ given the input $\psi$.

As commented already, \emph{hermiticity} of $A$ is a fairly understandable feature at any however elementary level of physical discussion of the mathematical framework of quantum mechanics, as hermitian operators are the sole class of operators in Hilbert space with \emph{real} expectations (besides, a hermitian operator only admits real eigenvalues). When $A$ is everywhere defined and bounded on the considered Hilbert space $\cH$, the symmetry property
\begin{equation}\label{eq:symmetry}
  \langle \psi,A\psi\rangle\;=\;\langle A\psi,\psi\rangle
\end{equation}
is spelled over every vector $\psi\in\cH$. Owing to the unboundedness of operators like \eqref{eq:op-list}, for them the ``generic practitioner'' of quantum mechanics only checks \eqref{eq:symmetry} on a tacitly meaningful linear subspace of functions $\psi\in L^2(\mathbb{R})$ that are suitably smooth and vanish sufficiently fast at infinity, so as to make each side of \eqref{eq:symmetry} well defined: in this case the identity \eqref{eq:symmetry} follows from integration by parts. (One customarily says that a differential operator like \eqref{eq:op-list} is `formally self-adjoint' \cite[Sect.~4.1]{Grubb-DistributionsAndOperators-2009}.)

In this Section we intend to focus on the association of the formal operator action $A$ with an operator domain of states it acts on, which is to be declared \emph{simultaneously} with the declaration of the formal action of $A$. We do not want to merely associate a domain with $A$ as a part of a mathematical definition (with the risk of downgrading it to a technicality): we rather want the ``need for a domain'' to emerge as a non-trivial, non-automatic, physically meaningful declaration of admissible states, that no practitioner of quantum mechanics can escape.

Of course all this is fairly basic in functional analysis and operator theory, but let us recall once again that we are having in mind a conceptual path where the mathematical formalisation emerges in a physical context, as is the case for those ideal readers we referred to in the introduction.

\subsection{Operator domain and form domain}\label{sec:op-form-domain}~

This is standard mathematical language, essentially digestible at any level, so let us introduce it once for all. Already the first examples we provide should convince that such language is not void.

By `domain' one means a suitable linear subspace of $\cH$ on which the action of $A$ is meaningful -- linearity of the domain is necessary for consistency with the linearity of $A$ and with the superposition principle in quantum mechanics.

In particular, with `operator domain' \emph{associated with the formal action $A$}, one refers to a (linear) subspace $\mathcal{D}(A)\subset\cH$ whose elements $\psi$ satisfy $A\psi\in\cH$, namely the output of the formal action $A$ applied to each such $\psi$ is a vector in $\cH$. This notion is practically irrelevant when the formal operator $A$ is \emph{bounded} on $\cH$, namely when $\|A\|_{\mathrm{op}}<+\infty$, where
\begin{equation}\label{eq:boundedness}
 \|A\|_{\mathrm{op}}\;:=\;\sup_{\substack{\psi\in\cH \\ \|\psi\|\neq 0}}\frac{\|A\psi\|}{\|\psi\|}\,,
\end{equation}
for then the formal action of $A$ on \emph{any} $\psi\in\cH$ yields $A\psi\in\cH$. It is under such tacit assumption of boundedness that one discusses quantum observables in a first physical introduction to quantum mechanics, like in Dirac's celebrated Principles of Quantum Mechanics: 
%``A linear operator is considered to be completely defined when the result of its application to every ket vector is given'' \cite[Sect.~7]{Dirac-PrinciplesQM}. 

\begin{quote} 
\centering 
``A linear operator is considered to be completely defined when the result of its application to every ket vector is given'' \cite[Sect.~7]{Dirac-PrinciplesQM}.
\end{quote}

It is straightforward, on the other hand, to produce examples where the formal action of $A$ does not map $\psi$ into a vector in $\cH$.

\begin{example}\label{ex:not-in-operator-domain}
With $\cH=L^2(\mathbb{R})$, let $A=-\ii\frac{\ud}{\ud x}$ or $A=-\frac{\ud^2}{\ud x^2}$ and $\psi(x)=|x|^{-1/4}e^{-x^2}$. Then $\psi\in L^2(\mathbb{R})$ but $\psi'\notin L^2(\mathbb{R})$ and $\psi''\notin L^2(\mathbb{R})$. Analogously, let $A$ be the multiplication by $x$ and $\psi(x)=(1+x^2)^{-3/4}$: then $\psi\in L^2(\mathbb{R})$ but $x\psi\notin L^2(\mathbb{R})$. For such observables, the admissible $\psi$'s on which to evaluate $A$ constitute a necessarily proper subspace of the Hilbert space. 
\end{example}

As for the `form domain' \emph{associated with the formal action $A$} (also called quadratic form or energy form of $A$, depending on the context), this is another linear subspace, for which the notation is now $\mathcal{D}[A]$, consisting of vectors on which the expectation of $A$ can be computed and is finite. More precisely, the actual quantity one is meant to evaluate here is a \emph{generalisation} of the ordinary expectation $\langle\psi, A\psi\rangle$ (namely the scalar product of two vectors in $\cH$, if $\psi\in\mathcal{D}(A)$), and is denoted for this reason with the new symbol $A[\psi]$ (in Sect.~\ref{sec:closedsemibdd-sa} we will also write $\mathcal{E}_A[\psi]$ to emphasise its meaning of ``energy''). Such generalised quantity $A[\psi]$ is defined by redistributing the formal action of $A$ in a quadratic sense (see examples in a moment) with the prescription that for vectors in the \emph{operator} domain one must have
\begin{equation}\label{eq:form-op-id}
 A[\psi]\;=\;\langle\psi, A\psi\rangle\qquad (\psi\in\mathcal{D}(A))\,.
\end{equation}
Thus, for instance, when $\cH=L^2(\mathbb{R})$ and $A=-\frac{\ud^2}{\ud x^2}$, strictly speaking
\[
 \langle\psi,A\psi\rangle\;=\;-\langle\psi,\psi''\rangle\;=\;-\int_\mathbb{R}\overline{\psi(x)}\psi''(x)\,\ud x\,;
\]
therefore, if the operator domain $\mathcal{D}(A)$ consists, say, of suitably regular and fast decreasing functions $\psi$, then integration by parts yields
\[
 \langle\psi,A\psi\rangle\;=\;-\int_\mathbb{R}\overline{\psi(x)}\psi''(x)\,\ud x\;=\;\int_\mathbb{R}|\psi'(x)|^2\,\ud x\qquad\forall\psi\in\mathcal{D}(A)\,.
\]
This leads one to define in this case the energy expectation of $A$ as
\[
 A[\psi]\;:=\;\int_\mathbb{R}|\psi'(x)|^2\,\ud x\,,
\]
with the second derivative originally hitting one $\psi$ only now redistributed as first derivative on both $\overline{\psi}$ and $\psi$. The latter is the expression every physicists knows well for the free kinetic energy of the state $\psi$. Observe that the latter identity \emph{defines} $A[\psi]$ for the considered formal action $A=-\frac{\ud^2}{\ud x^2}$ and satisfies \eqref{eq:form-op-id} above, \emph{but} \eqref{eq:form-op-id} is \emph{not} the definition of $A[\psi]$, it only equates $A[\psi]$ to $\langle\psi,A\psi\rangle$ for those special $\psi$'s for which $A\psi$ is a vector in the Hilbert space.

This way one declares two subspaces $\mathcal{D}(A)$ and $\mathcal{D}[A]$ for a given formal action $A$ on $\cH$. Owing to the constraint \eqref{eq:form-op-id}, obviously $\mathcal{D}(A)\subset \mathcal{D}[A]$. In general such two domains do not coincide, nor is $\mathcal{D}[A]$ in general the whole $\cH$. In particular, there may be states $\psi$ for which $A[\psi]$ is finite, and therefore $\psi\in\mathcal{D}[A]$, but $\langle\psi,A\psi\rangle$ is infinite, and therefore $\psi\notin\mathcal{D}(A)$.

\begin{example}\label{ex:not-in-form-domain}
With respect to $\cH=L^2(\mathbb{R})$, when $A=-\frac{\ud^2}{\ud x^2}$ and $\psi(x)=|x|^{3/2}e^{-x^2}$ one has $\psi\in L^2(\mathbb{R})$ and $\psi''\notin L^2(\mathbb{R})$, meaning that $\psi$ cannot be ascribed to $\mathcal{D}(A)$, yet $\psi'\in L^2(\mathbb{R})$ and therefore $\psi$ can be ascribed to $\mathcal{D}[A]$ (in short: on $\psi$ one cannot evaluate the kinetic energy operator, but can evaluate its expectation). Instead, with $\psi(x)=|x|^{1/2}e^{-x^2}$ one has $\psi\in L^2(\mathbb{R})$ and $\psi'\notin L^2(\mathbb{R})$, meaning that one cannot evaluate the expectation of $A$ on $\psi$, thus $\psi$ cannot be ascribed to $\mathcal{D}[A]$. Analogously, when $A$=multiplication by $x$ and $\psi(x)=(1+x^2)^{-3/4}$, one has $\psi\in L^2(\mathbb{R})$ and $x\psi\notin L^2(\mathbb{R})$ (one cannot evaluate the position operator on $\psi$), yet $\int_\mathbb{R}x|\psi(x)|^2\ud x$ is finite (one can evaluate on $\psi$ the expectation of the position operator).
\end{example}

%As is clear from the above informal definitions, the subspace $\mathcal{D}(A)$ is more natural when abstractly considering $A$ as a linear mapping in $\cH$, whereas the subspace $\mathcal{D}[A]$ is more natural when one wants to qualify the class of states on which the expectation of $A$ is evaluated.

\subsection{$\mathcal{D}(A)$ and $\mathcal{D}[A]$ are in general proper subspaces. Connection with unboundedness - I}~

%We have thus recognised the rather evident fact that for certain observables $A$ not on all vectors of $\cH$ can one evaluate the action of $A$ or the expectation of $A$. In other words, $A$ in general only acts on an operator domain $\mathcal{D}(A)$ and a form domain $\mathcal{D}[A]$ that are \emph{proper} subspaces of $\cH$.

Examples \ref{ex:not-in-operator-domain}-\ref{ex:not-in-form-domain} show the rather evident fact that for certain observables $A$ their formal action or expectation cannot make sense on all the vectors of $\cH$, that is,
operator domain $\mathcal{D}(A)$ and form domain $\mathcal{D}[A]$ associated with the formal action of $A$ are only \emph{proper} and \emph{distinct} subspaces of $\cH$. Such examples involved observables from first quantisation which are \emph{unbounded} on $L^2(\mathbb{R})$, that is, whose operator norm \eqref{eq:boundedness} is infinite. This does not show, though, that it is precisely unboundedness to prevent $A$ to be everywhere defined.
%-- we merely discussed circumstances in which $\psi$ belongs to $L^2(\mathbb{R})$, but $A\psi$ does not or $A[\psi]$ is not finite. 

Such a point surely deserves being highlighted at this stage of our proposed pedagogical path: 
\begin{center}
\begin{tabular}{c}
 \emph{it is the fact that observables are \emph{hermitian} that makes for them} \\ 
 \emph{incompatible to be simultaneously unbounded and everywhere defined}.
\end{tabular}
\end{center}
One has indeed the following.

\begin{theorem}[The Hellinger-Toeplitz theorem]\label{thm:hellingertoeplitz}
Let $A$ be an everywhere defined linear operator on a Hilbert space $\cH$ with $\langle \psi,A\phi\rangle=\langle A\psi,\phi\rangle$ for all $\psi$ and $\phi$ in $\cH$. Then $A$ is bounded. 
\end{theorem}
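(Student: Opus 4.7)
The plan is to deduce boundedness of $A$ from the Closed Graph Theorem, exploiting hermiticity to show that $A$ has closed graph. Since $A$ is assumed to be defined on all of $\cH$, proving closedness is the only remaining ingredient that closed graph requires, and this is where the symmetry hypothesis $\langle\psi,A\phi\rangle=\langle A\psi,\phi\rangle$ enters the picture in a crucial way.

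Concretely, I would argue as follows. Take an arbitrary sequence $(\psi_n)_{n\in\N}$ in $\cH$ with $\psi_n\to\psi$ and $A\psi_n\to\phi$ for some $\psi,\phi\in\cH$; the goal is to show $A\psi=\phi$. For every test vector $\chi\in\cH$, I would chain the identities
\begin{equation*}
\langle\chi,A\psi\rangle\;=\;\langle A\chi,\psi\rangle\;=\;\lim_{n\to\infty}\langle A\chi,\psi_n\rangle\;=\;\lim_{n\to\infty}\langle\chi,A\psi_n\rangle\;=\;\langle\chi,\phi\rangle,
\end{equation*}
where the first and third equalities use hermiticity (applicable to any pair of vectors precisely because $A$ is everywhere defined), and the limits exploit continuity of the scalar product together with $\psi_n\to\psi$ and $A\psi_n\to\phi$. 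Since $\chi$ is arbitrary, $A\psi-\phi$ is orthogonal to all of $\cH$, hence $A\psi=\phi$. Thus the graph of $A$ is closed in $\cH\oplus\cH$.

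Having a closed, everywhere-defined linear operator between Banach spaces, I would invoke the Closed Graph Theorem to conclude that $A$ is bounded. The genuinely non-trivial input of the whole argument is bundled inside this invocation: the Closed Graph Theorem ultimately rests on the Baire category theorem, so any honest pedagogical presentation would have to acknowledge (or at least cite) this ``deep'' functional-analytic fact as the real workhorse behind the statement.

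The step I expect to require the most care is the justification that the two equalities $\langle\chi,A\psi\rangle=\langle A\chi,\psi\rangle$ and $\langle A\chi,\psi_n\rangle=\langle\chi,A\psi_n\rangle$ are legitimately allowed for \emph{every} $\chi$: this is exactly the place where the hypothesis that $A$ is defined on the whole $\cH$ (and not merely on some proper subspace) is indispensable, and it is precisely the absence of this hypothesis in the unbounded case which prevents the same chain of equalities from yielding an analogous conclusion. An alternative route would replace the Closed Graph Theorem by a direct application of the Uniform Boundedness Principle to the family of continuous linear functionals $\phi\mapsto\langle A\psi,\phi\rangle$ indexed by $\psi$ in the unit ball, but the underlying mechanism and its reliance on Baire's theorem would be the same.
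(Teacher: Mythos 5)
Your proof is correct and follows exactly the route the paper indicates: hermiticity plus everywhere-definedness give closedness of the graph, and the Closed Graph Theorem (resting on completeness via Baire category, as you rightly emphasize) then yields boundedness. The paper itself only sketches this by reference to \cite[Sect.~III.5]{rs1}, so your write-up is a faithful filling-in of the same argument.
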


The Hellinger-Toeplitz theorem is a standard consequence of the closed graph theorem, which is in turn a consequence of the completeness of $\cH$ via the Baire category theorem (see, e.g., \cite[Sect.~III.5]{rs1}): thus, although the steps for its proof pertain an abstract level that may divert one from a more physical reasoning, it should at least be stressed that we are encountering here an \emph{effect of completeness}. Completeness is indeed a feature of Hilbert spaces (among other topological structures) that is apparently innocent when its mathematical definition is laid down in class, but whose actual relevance in the conceptual structure of quantum mechanics is not immediate to spot.
%-- we shall come on this point in Subsect.~**** %\ref{sec:density-of-domain}.

Besides, should one not want to enter the proof of the Hellinger-Toeplitz theorem, it would be instructive to show at least a `concrete' example of an everywhere defined and unbounded operator $A$ on Hilbert space and to check for it the lack of hermiticity, thus making $A$ unsuited to represent a quantum observable. In fact, an example of that sort cannot be so ``concrete'', in that it requires the axiom of choice; nevertheless it may result useful also for a physical audience.

\begin{example}\label{ex:unbdd-everywhere}
 Let $\cH$ be an infinite-dimensional separable Hilbert space (meaning that $\cH$ admits a countably-infinite \emph{orthonormal basis}) like for example $\cH=L^2(\mathbb{R})$, and let $\mathcal{B}$ be an \emph{algebraic basis} of $\cH$, with all elements non-restrictively normalised to 1. Observe that the existence of $\mathcal{B}$ requires the axiom of choice and that $\mathcal{B}$ is necessarily uncountable. Moreover, let $\mathcal{B}'=\{\psi_n|n\in\mathbb{N}\}$ be a countable subset of $\mathcal{B}$ and let $\phi_\circ\in\cH$ with $\|\phi_\circ\|=1$. Define the linear operator $A$ by linear extension of
 \[
 A\psi\;:=\;
 \begin{cases}
  n\phi_\circ & \textrm{if $\psi=\psi_n$ for some $n$} \\
  \; 0 & \textrm{if }\,\psi\in \mathcal{B}\setminus\mathcal{B}'\,.
 \end{cases}
 \]
 By construction $A$ is everywhere defined. It is also unbounded, for $\|A\psi_n\|=n$. On the other hand, $\langle\psi_n,A\psi_n\rangle=n\langle\psi_n,\phi_\circ\rangle$, whereas $\langle A\psi_n,\psi_n\rangle=n\langle\phi_\circ,\psi_n\rangle$. Now it is easy to choose $\phi_0$ and $\psi_n$ so as to conclude that $A$ is not hermitian. 
\end{example}

\subsection{Connection with unboundedness - II}~

What argued so far should already be enough to explain the fact of life that most operators of relevance in quantum mechanics are unbounded and hence, owing to the additional requirement of hermiticity, cannot be defined everywhere on the underlying Hilbert space (Hellinger-Toeplitz theorem).

It is helpful at this stage to observe that from another perspective unboundedness is also dictated by the standard formulation of the canonical commutation relation ``$QP-PQ=\ii\hbar$'', with $Q$ and $P$ acting on $L^2(\mathbb{R})$ respectively as multiplication by $x$ and $-\ii\hbar\frac{\ud}{\ud x}$.

\begin{example}[Winter-Wielandt]\label{ex:Winter-Wielandt}
 If two everywhere defined and bounded operators $Q$ and $P$ on $\cH$ satisfied $QP-PQ=\ii\mathbbm{1}$, then $Q^2P-PQ^2=Q(PQ+\ii\mathbbm{1})-PQ^2=2\ii Q$, and inductively $Q^nP-PQ^n=\ii n Q^{n-1}$, whence
 \[
  n\|Q^{n-1}\|_{\mathrm{op}}\;\leqslant\; 2 \,\|Q^{n-1}\|_{\mathrm{op}}\|Q\|_{\mathrm{op}}\|P\|_{\mathrm{op}}\qquad \forall n\in\mathbb{N}\,.
 \]
If $\|Q^{n-1}\|_{\mathrm{op}}=0$ for some $n$, then solving the above hierarchy backwards would yield $Q=\mathbbm{O}$, which is incompatible with $QP-PQ=\ii\mathbbm{1}$. Then necessarily it is always $\|Q^{n-1}\|_{\mathrm{op}}>0$, whence $\|Q\|_{\mathrm{op}}\|P\|_{\mathrm{op}}\geqslant\frac{1}{2}n$. As $n\in\mathbb{N}$ is arbitrary, this shows that at least one among $Q$ and $P$ cannot be bounded, thus contradicting the assumption. This fact was noticed first by Wintner \cite{Wintner-1947-pq-qp} in 1947, who gave a proof based on the spectra of $PQ$ and $QP$, and re-proved soon after by Wielandt \cite{Wielandt-1949-pq-qp} in 1949 with the algebraic argument used here. 
\end{example}

Interestingly enough, the emergence of unboundedness described in the last example can be made quantitative.

\begin{example}[Popa]\label{Popas-example}
 Assume that on a given Hilbert space $\cH$ two everywhere defined and bounded operators $Q$ and $P$ satisfy 
 \[
  \big\|[Q,P]-\ii\mathbbm{1}\big\|_{\mathrm{op}}\;\leqslant\;\varepsilon
 \]
 for some $\varepsilon>0$. Then
 \[
  \|Q\|_{\mathrm{op}}\|P\|_{\mathrm{op}}\;\geqslant\;\frac{1}{2}\,\log\frac{1}{\varepsilon}\,.
 \]
 This quantifies the tendency of $Q$ or $P$ to have larger and larger norms in terms of the norm of the displacement between their commutator and the identity. This fact too is rather easy to see (we show its proof in Section \ref{sec:math-proofs}) and was observed by Popa \cite{Popa-1982-XD-DX} in 1981. The bound $\|Q\|_{\mathrm{op}}\|P\|_{\mathrm{op}}\geqslant\frac{1}{2}\log\frac{1}{\varepsilon}$ is expected to be optimal: examples of everywhere defined and bounded operators $Q$ and $P$ with $\big\|[Q,P]-\ii\mathbbm{1}\big\|_{\mathrm{op}}\leqslant\varepsilon$ and $\|Q\|_{\mathrm{op}}\|P\|_{\mathrm{op}}=O(\varepsilon^{-2})$ were constructed in \cite{Popa-1982-XD-DX}, and recently Tao \cite{Tao-2019-commutators-identity} produced examples of $Q,P$ with the same assumption and with $\|Q\|_{\mathrm{op}}\|P\|_{\mathrm{op}}=O(\log^5\frac{1}{\varepsilon})$. 
 \end{example}

It is true that at a later, and more fundamental stage, one reformulates Heisenberg's canonical commutation relation in the form of Weyl's commutation relation \cite[Section 3.1]{Strocchi-MathQM}, which is an identity between bounded operators and hence immune from domain issues. So, the argument of this Subsection alone can be regarded as just technical in a sense. Yet, it was instructive to present it as an additional point in the general perspective of the emergence of unbounded quantum observables.

\subsection{Declaring $\mathcal{D}(A)$ or $\mathcal{D}[A]$ maximally is incompatible with hermiticity}\label{sec:no-domain-maximally}~

After realising that the formal action $A$ of an operator representing a quantum observable may be only applicable to a proper subspace of states of the Hilbert space, \emph{one could legitimately suspect that there is an ``automatic'' way to declare the subspaces $\mathcal{D}(A)$ and $\mathcal{D}[A]$, with physical unambiguous content, once the formal action of $A$ is given}.

To begin with, one might decide to always associate with the formal operator $A$ a domain $\mathcal{D}(A)$ (respectively, $\mathcal{D}[A]$) consisting of the \emph{largest} possible subspace of vectors of $\cH$ for which $A\psi$ is still a vector in $\cH$ (respectively, for which the expectation of $A$ on $\psi$ is finite). It is easy to realise, though, that this prescription is in general incompatible with the hermiticity of $A$.

\begin{example}\label{eq:maximal-looses-hermiticity}
 Let $\cH=L^2(0,1)$ and $A=-\frac{\ud^2}{\ud x^2}$ (kinetic energy operator for a quantum particle in a box). 
 \begin{itemize}
  \item[(i)] The above-mentioned maximal choice for the operator domain of $A$ is clearly the linear subspace of all the $L^2$-functions on the interval $(0,1)$ whose second derivative is still square-integrable (understanding derivatives in the \emph{weak} sense): this is a classical functional space, the Sobolev space of second order
 \[
  \mathscr{H}^2(0,1)\;=\;\{\psi\in L^2(0,1)\,|\,\psi''\in L^2(0,1)\}\,.
 \]
 However, setting $\mathcal{D}(A)=\mathscr{H}^2(0,1)$ breaks the required hermiticity of $A$: for instance, with $\psi(x)=6x^2+(\ii-2)$, one has $\psi\in \mathscr{H}^2(0,1)$, but $\langle\psi,A\psi\rangle=12\,\ii\notin\mathbb{R}$.
 \item[(ii)] Analogously, the maximal choice for the form domain of $A$ is
 \[
  \mathscr{H}^1(0,1)\;=\;\{\psi\in L^2(0,1)\,|\,\psi'\in L^2(0,1)\}\,,
 \]
 and the same $\psi$ from (i) shows the loss of hermiticity.
 \end{itemize}
 \end{example}

\subsection{There may be no maximal domain of hermiticity}\label{sec:no-max-domain-hermiticity}~

As just seen, declaring $\mathcal{D}(A)$ `maximally' is in general incompatible with hermiticity. As a remedy for such an obstruction, \emph{still inquiring whether the formal action $A$ can be equipped with a canonically chosen domain}, one might decide to declare $\mathcal{D}(A)$ in a `conditionally maximal' sense as the largest possible subspace of $\cH$ on which $\langle\psi,A\phi\rangle=\langle A\psi,\phi\rangle$ for all $\psi$ and $\phi$ in the domain (or analogously for $\mathcal{D}[A]$).

However, the subspaces of $\cH$ on which the hermiticity of $A$ is guaranteed in general are not ordered by inclusion, and therefore there is no such largest of them.

\begin{example}\label{ex:momentum-box}
 Let $\cH=L^2(0,1)$ and $A=-\ii\frac{\ud}{\ud x}$ (momentum operator for a quantum particle in a box). For each $\theta\in[0,2\pi)$ consider the subspace
 \[
  \mathcal{D}_\theta\;:=\;\{\psi\in \mathscr{H}^1(0,1)\,|\,\psi(1)=e^{\ii\theta}\psi(0)\}\,,
 \]
 where $\mathscr{H}^1(0,1)$ is the Sobolev space of first order
 \[
  \mathscr{H}^1(0,1)\;=\;\{\psi\in L^2(0,1)\,|\,\psi'\in L^2(0,1)\}\,.
 \]
 Integration by parts shows that for each fixed $\theta$ one has
 \[
  \langle \psi, A\phi\rangle\;=\;\langle A\psi,\phi\rangle\qquad\forall\psi,\phi\in\mathcal{D}_\theta\,,
 \]
 therefore each subspace $\mathcal{D}_\theta$ is a domain of hermiticity for the formal operator $A$. However the $\mathcal{D}_\theta$'s are not ordered by inclusion and one cannot speak of the largest subspace of $L^2(0,1)$ on which the formal momentum operator is hermitian. 
\end{example}

\subsection{There is no non-trivial notion of minimal domain of hermiticity}\label{sec:nominimaldom}~

One more conceivable possibility for a canonical identification of the domain of a formal operator representing a quantum observable is to declare $\mathcal{D}(A)$ as the \emph{smallest} possible subspace of $\cH$ of hermiticity for $A$.

Yet, one soon sees that this does not work either. First of all, here ``smallest'' must be interpreted in a meaningful sense, so that the domain contains sufficiently many states to describe the quantum system under consideration, for instance a dense domain (for otherwise there is always the subspace $\{0\}$, consisting of only the zero vector, which is obviously a domain of hermiticity!). But ``smallest'' should also be understood unambiguously, if one insists on the case that assigning a domain to the formal action of a quantum observable is only an automatic, technical issue.

Now, it is easy to convince oneself that such a notion of minimal domain of hermiticity cannot exist. On the one hand, there is an inevitable arbitrariness in picking a domain among various candidate dense subspaces, when they are nested one into the other (Example \ref{ex-position-cInf0-S}). Besides, selecting a minimal domain by taking the intersection of all meaningful domains of hermiticity may well lead to a trivial domain, hence physically not informative (Example \ref{ex:position-op-schwartz-simple}).

\begin{example}\label{ex-position-cInf0-S}
 Let $\cH=L^2(\mathbb{R})$ and let $A$ be the multiplication by $x$. Consider the two subspaces
 \[
  \begin{split}
   \mathcal{D}_1\;&:=\;C^\infty_0(\mathbb{R})\;\equiv\;
   \left\{
   \begin{array}{c}
    \textrm{$\mathbb{R}\to\mathbb{C}$ functions that are} \\
    \textrm{infinitely many times differentiable,} \\
    \textrm{and whose support is a compact subset of $\mathbb{R}$} 
   \end{array}
   \right\}\,, \\
   \mathcal{D}_2\;&:=\;\mathcal{S}(\mathbb{R})\;\equiv\;
   \left\{\!
   \begin{array}{c}
    \textrm{`Schwartz functions', i.e.,} \\
    \textrm{infinitely differentiable $\mathbb{R}\to\mathbb{C}$ functions} \\
    \textrm{vanishing at infinity, together with all derivatives,} \\
    \textrm{faster than any polynomial}
   \end{array}\!\!
   \right\}\,.
  \end{split}
 \]
 As a matter of fact, both $\mathcal{D}_1$ and $\mathcal{D}_2$ are dense in $L^2(\mathbb{R})$, both are invariant under multiplication by $x$, that is, $A\mathcal{D}_1\subset\mathcal{D}_1$ and $A\mathcal{D}_2\subset\mathcal{D}_2$, and moreover $\langle\psi,A\phi\rangle=\langle A\psi,\phi\rangle$ for every pair $\psi,\phi\in\mathcal{D}_1$, and also every pair $\psi,\phi\in\mathcal{D}_2$. Thus, both $\mathcal{D}_1$ and $\mathcal{D}_2$ are legitimate dense domains of hermiticity for $A$. In fact, $\mathcal{D}_1\varsubsetneq\mathcal{D}_2$: which of the two is to be picked as ``\emph{the}'' domain of the position observable?
\end{example}

\begin{example}\label{ex:position-op-schwartz-simple}
 Let $\cH=L^2(\mathbb{R})$ and let $A$ be the multiplication by $x$. Consider the two subspaces
 \[
  \begin{split}
   \mathcal{D}_1\;:&=\;\{\textrm{Schwartz functions $\mathbb{R}\to\mathbb{C}$}\}\,, \\
   \mathcal{D}_2\;:&=\;\{\textrm{really simple functions $\mathbb{R}\to\mathbb{C}$}\}\,,
  \end{split}
 \]
 where `really simple functions' are those step-wise constant functions given by finite linear combinations of characteristic functions of intervals of finite length \cite[Sect.~1.17-1.18]{Lieb-Loss-Analysis}. Both $\mathcal{D}_1$ and $\mathcal{D}_2$ are dense in $L^2(\mathbb{R})$. Clearly $x\psi\in L^2(\mathbb{R})$ and $\langle\psi,A\phi\rangle=\langle A\psi,\phi\rangle$ for every pair $\psi,\phi\in\mathcal{D}_1$, and also every pair $\psi,\phi\in\mathcal{D}_2$. Thus, both $\mathcal{D}_1$ and $\mathcal{D}_2$ are dense domains of hermiticity for the quantum position operator. However, $\mathcal{D}_1\cap \mathcal{D}_2=\{0\}$: the smallest subspace of hermiticity for $A$ is the trivial subspace. 
 %The same happens in the setting of Example \ref{ex:momentum-box} above: $\cap_{\theta}\mathcal{D}_\theta=\{0\}$. 
\end{example}

\subsection{Temporary conclusion: Attached to the formal action of $A$ is a non-trivial domain $\mathcal{D}(A)$ or $\mathcal{D}[A]$}\label{sec:ADA}~

The simple reasonings of Subsect.~\ref{sec:op-form-domain}-\ref{sec:nominimaldom} show that the formal action of the operator corresponding to a quantum observable only makes sense in association with a domain of admissible states for the operator's action or the operator's expectation: such a domain is in general -- namely for hermitian unbounded operators -- a proper subspace of the underlying Hilbert space, and there is no canonical way of declaring it on the sole basis of the formal action.

This means that for the \emph{same} formal action $A$ and any two admissible domains $\mathcal{D}_1$ and $\mathcal{D}_2$ of hermiticity for $A$ with $\mathcal{D}_1\neq\mathcal{D}_2$, the pairs $(A,\mathcal{D}_1)$ and $(A,\mathcal{D}_2)$ are candidate to represent \emph{distinct} quantum observables.

As is customary, one then tacitly adopts the convention that when referring to the `operator $A$' one is indeed declaring \emph{simultaneously} the formal action of $A$ and its (operator or form) domain.

Once again all this is standard from the perspective of abstract functional analysis and operator theory, but we believe that the simple considerations developed so far fit well the typical physical introduction to the mathematical framework of quantum mechanics: an observable is represented by a (so far) hermitian operator which comes with its own domain.

Besides, it is only after having digested that an unbounded hermitian operator on an infinite-dimensional Hilbert space has a non-trivial domain that cannot be inferred by the sole formal action of the operator, that one can understand the notion of self-adjointness in comparison with the weaker notion of mere hermiticity (see Subsect.~\ref{sec:hermitian-selfadj} below).

\subsection{Additional conclusion: there is physics in the declaration of the domain. Interpretation of the boundary conditions.}\label{sec:bc}~

In retrospect, the previous reasonings provide the ground for this additional and fundamental conclusion: there is a physical content also in the declaration of the domain for the formal action of a quantum observable. Indeed, from the \emph{same} formal action $A$ one can well have \emph{distinct} observables $(A,\mathcal{D}_1)$ and $(A,\mathcal{D}_2)$, whose difference lies in the distinct domains $\mathcal{D}_1$ and $\mathcal{D}_2$.

This too is fairly standard from a more advanced perspective. We place it at this stage of our proposed pedagogical path to complement the tacit point of view of typical physical introductions to quantum mechanics, where the physical meaning of a quantum observable is solely attributed to the formal action of the operator that represents the observable -- think of the commonplace discussion on the momentum observable $-\ii\frac{\ud}{\ud x}$ when its \emph{action} is interpreted as the generator of spatial translations \cite[Sect.~22]{Dirac-PrinciplesQM}, \cite[\S 15]{Landau-Lifshitz-3}, \cite[Sect.~1.6]{sakurai_napolitano_2017}, \cite[Sect.~3.5]{weinberg_2015}, with no reference to the operator domain.

In practice, in the first quantisation setting, distinct domains of hermiticity (and eventually of self-adjointness, which is the true physical requirement) differ by the conditions assigned at the boundaries of the spatial region where the considered quantum system lives in.

Quite often each boundary condition has a transparent physical interpretation. In this regard, probably the most typical example used in class is the following.

\begin{example}\label{ex:bc}
 For a free quantum particle confined in the interval $[0,1]$, thus with Hilbert space $\cH=L^2(0,1)$, the `free energy' Hamiltonian has the formal action $H=-\frac{\ud^2}{\ud x^2}$. The subspaces
 \[
  \begin{split}
   \mathcal{D}_D\;&:=\;\left\{
   \psi\in L^2(0,1)\left|
   \begin{array}{c}
    \psi''\in L^2(0,1), \\
    \psi(0)=0=\psi(1)
   \end{array}\!\!
   \right.\right\}, \\
    \mathcal{D}_P\;&:=\;\left\{
   \psi\in L^2(0,1)\left|
   \begin{array}{c}
    \psi''\in L^2(0,1), \\
    \psi(0)=\psi(1), \; \psi'(0)=\psi'(1)
   \end{array}\!\!
   \right.\right\}
  \end{split}
 \]
 are two dense domains of hermiticity for $H$ (in fact, of \emph{self-adjointness}, as we shall see later on): the symmetry property \eqref{eq:symmetry} for $H$ is ensured precisely by the fact that both the \emph{Dirichlet} and the \emph{periodic} boundary conditions above make the boundary terms in the integration by parts vanish. The pairs $(H,\mathcal{D}_D)$ and $(H,\mathcal{D}_P)$ represent two distinct quantum observables: the free Hamiltonian with Dirichlet boundary conditions describes the physics where the walls at $x=0$ and $x=1$ \emph{repel} the particle away, whereas the periodic boundary conditions encode a kind of \emph{attractive} walls, as the particle is allowed to have a non-vanishing wave-function in their vicinity. The two operators display other substantial differences such as in their spectra (see Subsect.~\ref{sec:non-uniquedynamics} below). 
\end{example}

The role and physical interpretation of different boundary conditions for the same formal Schr\"{o}dinger operator can be ubiquitously spotted for more involved and realistic quantum mechanical Hamiltonians and quantum observables in general -- the physics-oriented overview \cite{Araujo-Coutinho-Perez-2004} present a series of instructive examples.  This is the case, technically speaking, whenever the formal operator under consideration admits a multiplicity of distinct self-adjoint realisations, that in practice are identified by distinct boundary conditions. Such perspective will be discussed again in Subsect.~\ref{sec:non-uniquedynamics}.

\subsection{On the density of the domain}\label{sec:density-of-domain}~

Let us collect some side remarks
%, that can be made already at this intermediate stage of our proposed conceptual path, 
concerning a feature of the domain of a quantum observable: its density in the underlying Hilbert space.

The preceding arguments do not decide whether the domain of a quantum observable need be actually dense. In retrospect, self-adjointness does require density of the domain, and regarding for instance position and momentum operators as \emph{generators} of the Weyl $C^*$-algebra, or a quantum Hamiltonian as the generator of the unitary evolution, Stone's theorem then implies that such generators are all densely defined. Yet, it is not immediate to kick density in at this level and by means of physically grounded motivation (apart from the generic request to have ``sufficiently many'' physical states).

In fact, the very definition of hermitian operator does not require its domain to be a dense subspace. Thus, the multiplication by $x$ on $L^2(\mathbb{R})$ is obviously hermitian both on the subspace $\mathcal{S}(\mathbb{R})$ of Schwartz functions of the real line, or on the subspace of $L^2$-functions whose support lies inside the interval $(-1,1)$, or on the subspace of functions of the form $\mathbf{1}_{\mathbb{R}^+}\psi$, where $\psi\in\mathcal{S}(\mathbb{R})$ and $\mathbf{1}_{\mathbb{R}^+}$ is the characteristic function of the positive half-line: the former subspace is dense in $L^2(\mathbb{R})$, the last two are not.

The typical course of quantum mechanics for physicists at this stage of the discussion of the mathematical structure introduces the notion of the adjoint $A^\dagger$ of an operator $A$, which can give the opportunity to discuss the density of an observable's domain.

Quite invariably in that context, and in partial abdication to mathematical rigour, $A^\dagger$ is defined through its `matrix elements' as a generalisation of the conjugate transpose of a finite-dimensional matrix \cite[Sect.~8]{Dirac-PrinciplesQM}, \cite[\S 3]{Landau-Lifshitz-3}, \cite[Sect.~1.2-1.3]{sakurai_napolitano_2017}, \cite[Sect.~3.3]{weinberg_2015}, 
%meaning that the action of $A^\dagger$ is only qualified on the elements of the domain of $A$, 
and hermiticity is translated into the property $A=A^\dagger$. This is harmless for bounded and everywhere defined $A$'s, because requiring that $\langle A^\dagger\psi,\phi\rangle=\langle\psi,A\phi\rangle$ for every $\psi,\phi\in\cH$ does characterise, given $A$, the everywhere defined and bounded operator $A^\dagger$: indeed, for a given $\psi\in\cH$ there cannot be two distinct vectors $\xi_1,\xi_2$ both satisfying $\langle\xi_1,\phi\rangle=\langle\psi,A\phi\rangle$ and $\langle\xi_2,\phi\rangle=\langle\psi,A\phi\rangle$ for every $\phi\in\cH$, for otherwise $\langle\xi_1-\xi_2,\phi\rangle=0$ for  every $\phi\in\cH$, whence $\xi_1=\xi_2(=A^\dagger\psi)$. When $A$ is hermitian and unbounded, instead, $\phi$ only runs over a proper subspace $\mathcal{D}(A)\subset\cH$ and in order to characterise unambiguously $A^\dagger\psi$ for some admissible $\psi$ one needs $\mathcal{D}(A)$ to be dense.

This explains why the actual notion of adjoint $A^\dagger$ of $A$ in the general case is
\begin{equation}\label{eq:def-adjoint}
 \begin{split}
  \mathcal{D}(A^\dagger)\;&:=\;\{\psi\in\cH\,|\,\exists\,\xi_\psi\in\cH\textrm{ with }\langle\xi_\psi,\phi\rangle=\langle\psi,A\phi\rangle\;\forall\phi\in\mathcal{D}(A)\} \\
  A^\dagger\psi\;\;&:=\;\xi_\psi\,,
 \end{split}
\end{equation}
and \eqref{eq:def-adjoint} is an unambiguous definition (namely it associates to each such $\psi$ a \emph{unique} $\xi_\psi$) only when $\mathcal{D}(A)$ is \emph{dense} in $\cH$.

\begin{example} Let $\cH=L^2(\mathbb{R})$, $\psi(x)=e^{-x^2}$, and 
 \[
  A\;=\;\textrm{multiplication by $x$}\,,\qquad\mathcal{D}(A)\;=\;
  \left\{
  \phi\in L^2(\mathbb{R})\,\left|
  \begin{array}{c}
   x\phi\in  L^2(\mathbb{R}), \\
   \phi(-x)=\phi(x)
  \end{array}\!\!\right.\right\}\,.
 \]
 Clearly $\psi\in\mathcal{D}(A)$, and $\mathcal{D}(A)$ is not dense in $\cH$ (it is orthogonal to all odd functions). Now, should one want to define $A^\dagger$ on $\psi$ by imposing $\langle A^\dagger\psi,\phi\rangle=\langle\psi,A\phi\rangle$ for every $\phi\in\mathcal{D}(A)$, this would be manifestly ambiguous: for, $\langle\psi,A\phi\rangle=\langle\psi,x\phi\rangle=0$ (scalar product between an even and an odd function), thus $A^\dagger\psi$ could be \emph{any} odd function. 
\end{example}

% \begin{example}~
% 
% \begin{itemize}
%  \item[(i)] Let $\cH=L^2(\mathbb{R})$, $\psi(x)=e^{-x^2}$, and 
%  \[
%   A\;=\;\textrm{multiplication by $x$}\,,\qquad\mathcal{D}(A)\;=\;\{\phi\in L^2(\mathbb{R})\,|\,\phi(-x)=\phi(x)\}\,.
%  \]
%  Clearly $\psi\in\mathcal{D}(A)$, and $\mathcal{D}(A)$ is not dense in $\cH$ (it is orthogonal to all odd functions). Now, should one want to define $A^\dagger$ on $\psi$ by imposing $\langle A^\dagger\psi,\phi\rangle=\langle\psi,A\phi\rangle$ for every $\phi\in\mathcal{D}(A)$, this would be manifestly ambiguous: for, $\langle\psi,A\phi\rangle=\langle\psi,x\phi\rangle=0$ (scalar product between an even and an odd function), thus $A^\dagger\psi$ could be \emph{any} odd function. 
%  \item[(ii)] Definition \eqref{eq:def-adjoint} evidently qualifies $A^\dagger$ also when $A$ is not hermitian (provided that $A$ has dense domain). But in this case $\mathcal{D}(A^\dagger)$ can be as small as a trivial subspace. For instance, for the operator 
%  \[
%    A\psi\:=\:\Big(\int_\mathbb{R}\psi(x)\,\ud x\Big)\psi_0\,,\qquad \mathcal{D}(A)\:=\:\mathcal{S}(\mathbb{R})%\{\textrm{Schwartz functions $\mathbb{R}\to\mathbb{C}$}\}
%  \]
% 
%  
% \end{itemize}
% \end{example}

\subsection{Hermiticity and self-adjointness}\label{sec:hermitian-selfadj}~

All the previous considerations should now make the abstract definition of self-adjoint operator fairly comprehensible and distinguishable from the definition of hermitian operator, even though self-adjointness has not been fully motivated on physical grounds yet (this is going to be the object of Sect.~\ref{sec:emergence_SA}).

\textbf{Hermitian operator}: a linear operator $A$, with domain $\mathcal{D}(A)$, acting in a complex Hilbert space $\cH$ is `hermitian' (with respect to $\cH$) when
\begin{equation}\label{eq:defsa1}
 \langle \psi,A\phi\rangle\;=\;\langle A\psi,\phi\rangle\qquad\forall\psi,\phi\in\mathcal{D}(A)\,.
\end{equation}
Owing to the polarisation identity
\begin{equation*}
\begin{split}
  4\langle \psi,A\phi\rangle\,&=\,\langle \psi+\phi,A(\psi+\phi)\rangle-\langle \psi-\phi,A(\psi-\phi)\rangle \\
  &\quad +\ii\,\langle \psi+\ii\phi,A(\psi+\ii\phi)\rangle-\ii\,\langle \psi-\ii\phi,A(\psi-\ii\phi)\rangle\quad\forall\psi,\phi\in\mathcal{D}(A)\,,
\end{split}
\end{equation*}
condition \eqref{eq:defsa1} is equivalent to
\begin{equation}
 \langle \psi,A\psi\rangle\;=\;\langle A\psi,\psi\rangle\qquad\forall\psi\in\mathcal{D}(A)
\end{equation}
and also equivalent to
\begin{equation}\label{eq:defsa1ter}
 \langle \psi,A\psi\rangle\,\in\,\mathbb{R}\qquad\forall\psi\in\mathcal{D}(A)\,.
\end{equation}
If in addition $\mathcal{D}(A)$ is dense, and therefore one can give meaning to the adjoint $A^\dagger$, one sees from definition \eqref{eq:def-adjoint} that the hermiticity of $A$ is tantamount as
\begin{equation}
 \mathcal{D}(A)\,\subset\,\mathcal{D}(A^\dagger)\qquad\textrm{and}\qquad A\psi=A^\dagger\psi\quad\forall\psi\in\mathcal{D}(A)\,.
\end{equation}

\textbf{Self-adjoint operator}: a linear operator $A$, with dense domain $\mathcal{D}(A)$, acting in a complex Hilbert space $\cH$ is `self-adjoint' (with respect to $\cH$) if $A=A^\dagger$, meaning that
\begin{equation}
  \mathcal{D}(A)\,=\,\mathcal{D}(A^\dagger)\qquad\textrm{and}\qquad A\psi=A^\dagger\psi\quad\forall\psi\in\mathcal{D}(A)\,=\,\mathcal{D}(A^\dagger)\,.
\end{equation}
A self-adjoint operator is therefore (densely defined and) hermitian, whereas the opposite in general is not true. Both a densely defined hermitian operator $A$ and a self-adjoint operator $A$ satisfy the fact that they have the same formal action on the vectors of $\mathcal{D}(A)$, but only when $A$ is self-adjoint do the two domains $\mathcal{D}(A)$ and $\mathcal{D}(A^\dagger)$ coincide.

\begin{example}\label{ex:momentum-in-ab}
 Let $\cH=L^2(a,b)$ with $-\infty<a<b<+\infty$ and 
\begin{align*}
A_1&=-\ii\frac{\ud}{\ud x}           &  \mathcal{D}(A_1) &=\{\psi\in C^1[a,b]\,|\,\psi(a)=\psi(b)=0\}     \\
A_2&=-\ii\frac{\ud}{\ud x}           &  \mathcal{D}(A_2) &=\{\psi\in \mathscr{H}^1(a,b)\,|\,\psi(a)=\psi(b)\}\\
A_3&=-\ii\frac{\ud}{\ud x}           &  \mathcal{D}(A_3) &=\mathscr{H}^1(a,b)\,.
\end{align*}
(Let us recall that $\mathscr{H}^1(a,b)$ is the subspace of $L^2(a,b)$ of functions that are (weakly) differentiable on $(a,b)$ and such that their derivative is still square-integrable; as a matter of fact each element of $\mathscr{H}^1(a,b)$ turns out to be an absolutely continuous function on $[a,b]$ and hence it makes sense to evaluate it at $x=a$ and $x=b$.) One has $\mathcal{D}(A_1)\varsubsetneq\mathcal{D}(A_2)\varsubsetneq\mathcal{D}(A_3)$, and all three subspaces are dense in $L^2(a,b)$. The formal action of the operator is the same in all cases, but the three operators are profoundly different: for example $A_1$ has no eigenvector at all (there is no non-zero solution to $A_1\psi=\lambda\psi$ with $\psi\in\mathcal{D}(A_1)$), whereas $A_2$ has an orthonormal basis of eigenvectors. Working out the definition \eqref{eq:def-adjoint} in this case (see, e.g., \cite[Sect.~2.2]{Teschl-MMQM-2014} for details) one finds: $A_1^\dagger=A_3$, therefore $A_1$ is hermitian but not self-adjoint; $A_2^\dagger=A_2$, therefore $A_2$ is self-adjoint; $A_3$ is not even hermitian, for 
\begin{align*}
 A_3^\dagger&=-\ii\frac{\ud}{\ud x}           &  \mathcal{D}(A_3) &=\{\psi\in \mathscr{H}^1(a,b)\,|\,\psi(a)=\psi(b)=0\}\,.
\end{align*}
\end{example}

\subsection{Closed operators. Self-adjoint operators are closed}\label{sec:closedoperators}~

There is in fact another technical feature that somehow naturally pops up when dealing, among others, with unbounded hermitian operators on Hilbert space: the possible \emph{closedness} of an operator. This is a notion that is reasonable to conceive  after realising that certain operators are only densely defined (not everywhere defined) and unbounded: this is why as somewhat technical as it may appear, it is natural to flash it at this stage of our discussion. Its relevance in the physically grounded path for self-adjointness will emerge in Sect.~\ref{sec:emergence_SA}.

If $A$ is (everywhere defined and) bounded on $\cH$, even not hermitian, an elementary argument based on the linearity of $A$ shows that $A$ is continuous, and in fact boundedness and continuity are equivalent: thus, for a sequence $(\psi_n)_{n\in\mathbb{N}}$ and a vector $\psi$, all in $\cH$, if $\psi_n\to\psi$ as $n\to\infty$, then also $A\psi_n\to A\psi$ (the limits being in the Hilbert norm).

For unbounded operators continuity is lost, but a similar, albeit weaker, property that \emph{sometimes} is still satisfied is the following: if, for a sequence $(\psi_n)_{n\in\mathbb{N}}$ in $\mathcal{D}(A)$ and two vectors $\psi,\phi\in\cH$ one has $\psi_n\to\psi$ and $A\psi_n\to\phi$, then $\psi\in\mathcal{D}(A)$ and $A\psi=\phi$. At first sight this seems such an obvious requirement to be fulfilled, in particular by operators representing quantum observables, but in general it is not. When it is, $A$ is said to be `closed'. An everywhere defined bounded operator is obviously also closed. 
%It is also straightforward to see from the above definition that the closedness of the operator $A$ is equivalent to the completeness of the vector space $\mathcal{D}(A)$ with respect to the new norm $\|\psi\|_A:=(\|\psi\|^2+\|A\psi\|^2)^{1/2}$.

\begin{example}\label{ex:closed-nonclosed}
 With respect to $\cH=L^2(\mathbb{R})$ consider the ``position operators''
 \begin{align*}
A_1&=\textrm{multiplication by $x$},          &  \mathcal{D}(A_1) &=\{\psi\in L^2(\mathbb{R})\,|\,\mathrm{supp}(\psi)\textrm{ is a compact subset of }\mathbb{R}\},  \\
A_2&=\textrm{multiplication by $x$},            &  \mathcal{D}(A_2) &=\{\psi\in L^2(\mathbb{R})\,|\,x\psi\in L^2(\mathbb{R})\}\,.
\end{align*}
 %As customary, $C^\infty_0(\mathbb{R})$ denotes the space of functions that are differentiable (in the classical sense) infinitely many times and such that their support is a compact subset of $\mathbb{R}$. 
 Here $\mathrm{supp}(\psi)$ is the support of the function $\psi$. Clearly, $\mathcal{D}(A_1)\subset\mathcal{D}(A_2)$ and it is a standard  fact from functional analysis that both such domains are dense in $L^2(\mathbb{R})$, although we will not need that. Both $A_1$ and $A_2$ are manifestly hermitian. Now, set
 \[
  \begin{split}
   \psi_n(x)\;&:=\;\mathbf{1}_{[-n,n]}(x)\,e^{-x^2}\qquad (n\in\mathbb{N}) \\
    \psi(x)\;&:=\;e^{-x^2}\,,
  \end{split}
 \]
 where $\mathbf{1}_{[-n,n]}$ is the characteristic function of the interval $[-n,n]$. One has $\psi_n\in\mathcal{D}(A_1)$ and moreover, by dominated convergence, $\psi_n\xrightarrow[]{(L^2)}\psi$ and $x\psi_n\xrightarrow[]{(L^2)}x\psi$ as $n\to\infty$. However, $\psi\notin\mathcal{D}(A_1)$: the operator $A_1$ is \emph{not} closed. Conversely, the operator $A_2$ \emph{is} closed. To see that, assume that for a sequence $(\psi_n)_{n\in\mathbb{N}}$ in $\mathcal{D}(A_2)$ and two functions $\psi,\phi\in L^2(\mathbb{R})$ one has $\psi_n\xrightarrow[]{(L^2)}\psi$ and $x\psi_n\xrightarrow[]{(L^2)}\phi$.  As $L^2$-limits are also point-wise (almost everywhere) limits, then $\phi(x)=\lim_{n\to\infty}x\psi_n(x)=x\psi(x)$ for almost every $x$, whence $x\psi=\phi\in L^2(\mathbb{R})$. This means precisely that $\psi\in\mathcal{D}(A_2)$ and $A_2\psi=\phi$, therefore $A_2$ is closed.
 \end{example}

For the pedagogical path that we are elaborating here there is no need to discuss the basics of the general theory of closed operators on Hilbert space (see, e.g., \cite[Chapters 1--3]{schmu_unbdd_sa}, but for one property that it is important to highlight.

\begin{lemma}\label{lem:Aselfadj-is-closed}
 If $A$ is self-adjoint on $\cH$, then $A$ is closed.
\end{lemma}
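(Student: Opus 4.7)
The plan is to exploit the identity $A = A^\dagger$ and to prove the more general fact that the adjoint of any densely defined operator is automatically closed: once this is established, closedness of $A$ is immediate.

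So, first I would take sequences $(\psi_n)_{n\in\mathbb{N}} \subset \mathcal{D}(A^\dagger)$ and vectors $\psi,\phi \in \cH$ with $\psi_n \to \psi$ and $A^\dagger \psi_n \to \phi$ in norm, and aim to show that $\psi \in \mathcal{D}(A^\dagger)$ with $A^\dagger \psi = \phi$. The key move is to test against an arbitrary $\eta \in \mathcal{D}(A)$: by the defining property \eqref{eq:def-adjoint} of the adjoint one has
\[
\langle A^\dagger \psi_n,\eta\rangle \;=\; \langle \psi_n, A\eta\rangle \qquad \forall n\in\mathbb{N}.
\]
Continuity of the scalar product in each entry then allows one to pass to the limit on both sides, yielding $\langle \phi,\eta\rangle = \langle \psi, A\eta\rangle$ for every $\eta \in \mathcal{D}(A)$. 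By the definition \eqref{eq:def-adjoint}, this says exactly that $\psi\in\mathcal{D}(A^\dagger)$ with $A^\dagger\psi = \phi$, where uniqueness of the associated vector $\xi_\psi=\phi$ is guaranteed by the density of $\mathcal{D}(A)$ discussed in Subsection \ref{sec:density-of-domain}. Hence $A^\dagger$ is closed.

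Second, since $A$ is assumed self-adjoint, one has by definition $\mathcal{D}(A) = \mathcal{D}(A^\dagger)$ and $A\psi = A^\dagger\psi$ on this common domain, so the closedness just established for $A^\dagger$ transfers verbatim to $A$.

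I do not expect any real obstacle here: the only subtle point is recognising that the density of $\mathcal{D}(A)$ (which is built into the definition of self-adjointness) is exactly what makes the limit identity $\langle \phi,\eta\rangle = \langle \psi, A\eta\rangle$ uniquely determine $A^\dagger\psi$, rather than merely determine it up to an element of $\mathcal{D}(A)^\perp$. Everything else reduces to continuity of $\langle\cdot,\cdot\rangle$.
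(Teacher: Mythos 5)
Your proposal is correct and follows essentially the same route as the paper's own proof: show that the adjoint of a densely defined operator is always closed by passing to the limit in $\langle A^\dagger\psi_n,\eta\rangle=\langle\psi_n,A\eta\rangle$ and invoking definition \eqref{eq:def-adjoint}, then use $A=A^\dagger$. Your extra remark on why density of $\mathcal{D}(A)$ is what makes the limiting identity pin down $A^\dagger\psi$ uniquely is a correct and welcome clarification, but not a different argument.
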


The proof is rather simple.

\begin{proof}[Proof of Lemma \ref{lem:Aselfadj-is-closed}]
 By assumption $A=A^\dagger$. Let us show that $A^\dagger$ is closed. Assume that for a sequence $(\psi_n)_{n\in\mathbb{N}}$ in $\mathcal{D}(A^\dagger)$ and two vectors $\psi,\phi\in\cH$ one has $\psi_n\to\psi$ and $A^\dagger\psi_n\to\phi$. Then, for any $\xi\in\mathcal{D}(A)$ one has
 \[
  \langle\phi,\xi\rangle\;=\;\lim_{n\to\infty}\langle A^\dagger\psi_n,\xi\rangle\;=\;\lim_{n\to\infty}\langle\psi_n,A\xi\rangle\;=\;\langle\psi,A\xi\rangle\,.
 \]
 In view of definition \eqref{eq:def-adjoint}, $\psi\in\mathcal{D}(A^\dagger)$ and $A^\dagger\psi=\phi$. Thus, $A^\dagger$ is closed.
\end{proof}

Summarising: generic hermitian operators are not necessarily closed; self-adjoint operators always are.

%The closedness of $A$ is therefore the closedness of the so-called `graph' of $A$, namely the set $\{(\psi,A\psi)\,|\,\psi\in\mathcal{D}(A)\}$ as a subset of 

\subsection{Algebraic manipulation of unbounded quantum observables is subject to domain issues}\label{sec:domains-touchy}~

At the end of the current Section, and prior to embarking on the core part of our pedagogical path (Section \ref{sec:emergence_SA}), we find instructive to highlight a circumstance inescapably connected with the conclusion that quantum observables, as hermitian (and, eventually, actually more: self-adjoint) operators on Hilbert space, are defined on a domain of hermiticity (eventually: of self-adjointness) which is only a proper dense subspace of the underlying Hilbert space whenever the observable is unbounded, and which is autonomously declared together with the formal action of the operator.

We refer here to the circumstance that deceptively innocent algebraic manipulations of unbounded quantum observables, such as those manipulations that are most common in physics (think of the formulation of the canonical commutation relation ``$QP-PQ=\ii\hbar$'') are indeed a touchy business.

In the bounded case, the family $\mathcal{B}(\cH)$ of (everywhere defined and) bounded operators on a given Hilbert space $\cH$ is an algebra with respect to the natural operator sum and operator multiplication (in fact, $\mathcal{B}(\cH)$ displays interlaced algebraic and analytic properties that give rise to a much richer structure, a $C^*$-algebra \cite[Sect.~2.1]{Bratteli-Robinson-1}). In the unbounded case, sums and products of hermitian (self-adjoint) operators are still of course under control: only, one must take into account domain issues of all sort, overlooking which, one easily falls into paradoxes (the best scenario, as at least paradoxes are manifest) or erroneous conclusions.

\begin{example}\label{ex:paradox1} Consider the one-dimensional position and momentum observables, acting formally as $Q\psi=x\psi$ and $P\psi=-\ii\psi'$, on the Hilbert space $\cH=L^2(\mathbb{R})$, and the functions $\psi_p$, $p\in\mathbb{R}$, on the real line defined by $\psi_p(x):=e^{\ii p x}$. Since $P\psi_p=p\psi_p$, a formal computation yields
 \[
  \begin{split}
   \langle\psi_p,[Q,P]\psi_p\rangle\;&=\;\langle\psi_p,QP\psi_p\rangle-\langle\psi_p,PQ\psi_p\rangle\;=\;p\langle\psi_p,Q\psi_p\rangle-\langle P\psi_p,Q\psi_p\rangle \\
   &=\; p\langle\psi_p,Q\psi_p\rangle-p\langle\psi_p,Q\psi_p\rangle\;=\;0\,.
  \end{split}
 \]
 This does \emph{not} disprove the standard canonical commutation relation between $P$ and $Q$. Indeed, the $\psi_p$'s do not belong to $\cH$ and hence the above expressions are not scalar products in $\cH$. Moreover, from the actual domains of self-adjointness of $Q$ and $P$ in $\cH$, that can be proved to be
 \[
  \begin{split}
   \mathcal{D}(Q)\;&=\;\{\psi\in L^2(\mathbb{R}\,|\,x\psi\in L^2(\mathbb{R}\}\,, \\
   \mathcal{D}(P)\;&=\; \mathscr{H}^1(0,1)\;=\;\{\psi\in L^2(\mathbb{R}\,|\,\psi'\in L^2(\mathbb{R}\}\,,
  \end{split}
 \]
 it is clear that the observable $P$ admits no eigenfunction in $\cH$.
\end{example}

\begin{example}
 Consider a free quantum particle in the one-dimensional box $[0,1]$, thus with Hilbert space $\cH=L^2(0,1)$, and Hamiltonian given by the self-adjoint free energy Dirichlet operator $H_D$ already introduced in Example \ref{ex:bc}. Defining $\psi(x):=x(1-x)$, then clearly $\psi\in\mathcal{D}(H_D)$ and $(H_D\psi)(x)=2$. Thus,
 \[
  \langle H_D\psi,H_D\psi\rangle\;=\;4\,.
 \]
 On the other hand, though, the fourth derivative of $\psi$ is the zero function: interpreting this as $H_D^2\psi\equiv 0$, and exploiting the self-adjointness of $H_D$, one would be led to conclude
 \[
   \langle H_D\psi,H_D\psi\rangle\;=\;\langle\psi,H_D^2\psi\rangle\;=\;0\,.
 \]
 The apparent contradiction is due to the fact that the function $H_D\psi$, namely the constant function identically equal to 2, does \emph{not} belong to $\mathcal{D}(H_D)$ (it fails to satisfy Dirichlet boundary conditions), hence one cannot further apply $H_D$ to $H_D\psi$ (or, in other words, $\psi\notin\mathcal{D}(H_D^2)$). In the second computation the expression $\langle\psi,H_D^2\psi\rangle$ is therefore meaningless.
 \end{example}

 \begin{example}\label{ex:momentum-box-2}
  Consider the momentum observable for a quantum particle in the one-dimensional box $[0,1]$, defined as the self-adjoint operator $P_\theta$ already introduced in Example \ref{ex:momentum-box} for some fixed $\theta\in[0,2\pi)$, that is,
  \[
   P_\theta\;=\;-\ii\frac{\ud}{\ud x}\,,\qquad\mathcal{D}(P_\theta)\;=\;\{\psi\in \mathscr{H}^1(0,1)\,|\,\psi(1)=e^{\ii\theta}\psi(0)\}\,.
  \]
  (The hermiticity of $P_\theta$ follows by integration by parts; the actual self-adjointness $P_\theta=P_\theta^\dagger$ may be proved by working out the definition \eqref{eq:def-adjoint} for $P_\theta^\dagger$ or by applying the standard criterion of self-adjointness \cite[Theorem VIII.3]{rs1}, as we tacitly do in some of the mathematical proofs of Sect.~\ref{sec:math-proofs} below.) Consider also the orthonormal basis $(\psi_n)_{n\in\mathbb{N}_0}$ of $\cH$ given by $\psi_n(x):=\sqrt{2}\cos\pi n x$. The matrix elements of $P_\theta$ with respect to such basis are
  \[
   p_{n,m}\;:=\;\langle\psi_n,P_\theta\psi_m\rangle\;=\;-\ii\int_0^1\psi_n(x)\,\psi_m'(x)\,\ud x\,,\qquad n,m\in\mathbb{N}_0\,.
  \]
  Now, integration by parts yields
  \[
   \overline{p_{n,m}}\;=\;p_{m.n}+\ii\big(\psi_m(1)\psi_n(1)-\psi_m(0)\psi_n(0)\big)\,,
  \]
  therefore $\overline{p_{n,m}}\neq p_{m.n}$ whenever $n+m$ is an odd integer, as if the matrix representing $P_\theta$ was not hermitian. This does \emph{not} disprove the hermiticity, though. Indeed, as follows from the relation $\psi_n(1)=(-1)^n\psi_n(0)$, the basis $(\psi_n)_{n\in\mathbb{N}_0}$ cannot be entirely contained in $\mathcal{D}(P_\theta)$, whatever the initial choice of $\theta$, and for those $\psi_m$'s not belonging to $\mathcal{D}(P_\theta)$ the above expression $\langle\psi_n,P_\theta\psi_m\rangle$ is meaningless.  
 \end{example}

\section{Second part: emergence of self-adjointness for quantum observables}\label{sec:emergence_SA}

Let us enter the central part of our pedagogical path: the discussion on physically grounded motivations that qualify quantum observables mathematically as self-adjoint, and not merely hermitian operators.

We should like to start with the most relevant class of quantum observables, the Hamiltonians (the generators of the quantum dynamics), and then proceed with other classes of observables for which certain physical features translate mathematically into self-adjointness.

\subsection{Self-adjointness of the quantum Hamiltonian inferred from the Schr\"{o}dinger equation}\label{sec:selfadj-spectralthm}~

The most relevant type of quantum observable for which to discuss the emergence and the role of self-adjointness is surely the Hamiltonian of a given quantum system.

In the typical physical introduction to quantum mechanics the Hamiltonian emerges as a distinguished operator in connection with the time evolution of the system.

More precisely \cite[Sect.~27]{Dirac-PrinciplesQM}, \cite[Sect.~2.1]{sakurai_napolitano_2017}, \cite[Sect.~3.6]{weinberg_2015}, by means of subtle physical reasonings one argues that a quantum system evolves in time along a trajectory $\psi(t)$ of states of the considered Hilbert space $\cH$ of the form $\psi(t)=U(t)\psi(0)$, where $\{U(t)\,|\,t\in\mathbb{R}\}$ is a collection of everywhere defined bounded operators constituting, technically speaking, a `strongly continuous one-parameter unitary group'. That is:
\begin{itemize}
 \item[1.] each $U(t)$ preserves the norm $\|U(t)\psi\|=\|\psi\|$ of every $\psi\in\cH$ and its range is the whole $\cH$, equivalently, $U(t)^\dagger U(t)=\mathbbm{1}=U(t)U(t)^\dagger$ (\emph{unitarity});
 \item[2.] the composition rule $U(t_1)U(t_2)=U(t_1+t_2)$ is satisfied for every instant $t_1,t_2\in\mathbb{R}$ with $U(0)=\mathbbm{1}$, the identity operator (\emph{group composition});
 \item[3.] $\|U(t)\psi-\psi\|\to 0$ as $t\to 0$ for every $\psi\in\cH$ (\emph{strong continuity}).
\end{itemize}
In turn, to the collection of the $U(t)$'s one associates an operator $H$ obtained from the $O(t)$-term of a formal analytic expansion of $U(t)$ as $t\to 0$ and finally argues that $\psi(t)$ is determined by the celebrated Schr\"{o}dinger equation
\begin{equation}\label{eq:schreq}
 \ii\frac{\ud}{\ud t}\psi(t)\;=\;H\psi(t)
\end{equation}
that governs the evolution of the quantum system. Besides, first-quantisation-like arguments prescribe the explicit form of $H$ from its classical counterpart in those cases when the system is described by a wave-function $\psi(t;x)$ of space-time coordinates only: for concreteness, when $\cH=L^2(\mathbb{R})$ 
%(or $\cH=L^2[a,b]$ and the like) 
the Hamiltonian of a one-particle quantum system has the form $H=-\frac{\ud^2}{\ud x^2}+V(x)$, a Schr\"{o}dinger operator with real-valued potential $V$. The formal hermiticity of such $H$ follows as usual from integration by parts tested on a suitable class of functions that be regular and fast decaying to a degree that depends on the potential $V$.

A crucial fact for the mathematical framework of quantum mechanics, as we shall now discuss, is that the Schr\"{o}dinger dynamics does have the above-mentioned properties of unitarity, continuity in time, and group composition at subsequent times \emph{if and only if} the Hamiltonian $H$ governing the Schr\"{o}dinger equation \eqref{eq:schreq} is self-adjoint (and not merely hermitian).

For an efficient way to present this point of view we find the following formulation as the most convenient -- let us postpone the proof to Sect.~\ref{sec:math-proofs}.

\begin{theorem}\label{thm:selfadj-SchrEq}
 Let $\cH$ be a complex Hilbert space and let $H$ be a \emph{hermitian} operator acting on $\cH$ with domain $\mathcal{D}(H)\subset\cH$. The two conditions (i) and (ii) below are equivalent.
 \begin{itemize}
  \item[(i)] There exists a strongly continuous one-parameter unitary group  $\{U(t)\,|\,t\in\mathbb{R}\}$ acting on $\cH$ such that
 \begin{itemize}
  \item[$\bullet$] for every $t\in\mathbb{R}$ one has $U(t)\mathcal{D}(H)\subset\mathcal{D}(H)$,
  \item[$\bullet$] for every $\psi_0\in\mathcal{D}(H)$ the collection of vectors $\psi(t):=U(t)\psi_0$ defined for every $t\in\mathbb{R}$ constitute a solution to the problem 
  \begin{equation}\label{eq:SchrIVP}
   \begin{cases}
    \ii\frac{\ud}{\ud t}\psi(t)\;=\;H\psi(t) \\
    \psi(0)\;=\;\psi_0
   \end{cases}
  \end{equation}
 \end{itemize}
  \item[(ii)] The operator $H$ is self-adjoint.
 \end{itemize}
  When either condition above is satisfied, one has the following:
  \begin{itemize}
   \item[1.] the domain and the action of $H$ satisfy
  \begin{equation}\label{eq:actdomH}
   \begin{split}
    \mathcal{D}(H)\;&=\;\left\{\psi\in\cH\,\left|\,\exists\,\frac{\ud}{\ud t}\Big|_{t=0}U(t)\psi:=\lim_{t\to 0}\frac{U(t)-\mathbbm{1}}{t}\psi\in\cH \right.\right\} \\
    H\psi\;&=\;\ii\frac{\ud}{\ud t}\Big|_{t=0}U(t)\psi
   \end{split}
  \end{equation}
  and in particular $\mathcal{D}(H)$ is dense in $\cH$;
  \item[2.] one has
  \begin{equation}\label{eq:HU=UH=dU}
   HU(t)\psi_0\;=\;U(t)H\psi_0\;=\;\ii\frac{\ud}{\ud t}U(t)\psi_0\qquad\forall \psi_0\in\mathcal{D}(H)\,,\;\forall t\in\mathbb{R}\,;
  \end{equation}
  \item[3.] there exists a unique solution $\psi(\cdot)\in C^1(\mathbb{R},\cH)$, with $\psi(t)\in\mathcal{D}(H)$ $\forall t\in\mathbb{R}$, to the problem \eqref{eq:SchrIVP}, and it is given precisely by $\psi(t)=U(t)\psi_0$.
  \end{itemize}
\end{theorem}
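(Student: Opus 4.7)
The plan is to prove the two implications separately, and then to harvest items 1--3 from the identification of $H$ with the infinitesimal generator of the group that the implications will produce.

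For $(ii)\Rightarrow(i)$ my strategy is to lean on the spectral theorem. Writing $H=\int_\mathbb{R}\lambda\,\ud E(\lambda)$ and setting $U(t):=\int_\mathbb{R} e^{-\ii t\lambda}\,\ud E(\lambda)$ through the Borel functional calculus, one reads off unitarity from $|e^{-\ii t\lambda}|=1$, the group law from $e^{-\ii(t_1+t_2)\lambda}=e^{-\ii t_1\lambda}e^{-\ii t_2\lambda}$, and strong continuity from dominated convergence as $e^{-\ii t\lambda}\to 1$. The invariance of $\mathcal{D}(H)$ is the functional-calculus commutation of $e^{-\ii tH}$ with $H$; differentiating the spectral integral in $t$ (once more by dominated convergence, using $|e^{-\ii t\lambda}-1|/|t|\leqslant|\lambda|$ with $\lambda\in L^2(\ud E_\psi)$ whenever $\psi\in\mathcal{D}(H)$) delivers precisely \eqref{eq:SchrIVP}.

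For $(i)\Rightarrow(ii)$, which is the heart of Stone's theorem, I would introduce the infinitesimal generator $G\psi:=\ii\,\tfrac{\ud}{\ud t}\big|_{t=0}U(t)\psi$, defined on the set $\mathcal{D}(G)$ of vectors for which that strong limit exists, and carry out three standard sub-steps: (a) density of $\mathcal{D}(G)$, by exhibiting the smoothings $\psi_f:=\int_\mathbb{R} f(s)U(s)\psi\,\ud s$ with $f\in C_c^\infty(\mathbb{R})$, which lie in $\mathcal{D}(G)$ and approximate any $\psi\in\cH$ as $f$ concentrates at $0$; (b) hermiticity of $G$, by differentiating $\langle U(t)\phi,\psi\rangle=\langle\phi,U(-t)\psi\rangle$ at $t=0$; (c) self-adjointness of $G$, via the von Neumann criterion $\ran(G\pm\ii\mathbbm{1})=\cH$ through the explicit resolvent formula $(G\mp\ii\mathbbm{1})^{-1}\chi=\mp\ii\int_0^{\pm\infty}e^{\mp s}U(s)\chi\,\ud s$, whose integrands are absolutely integrable by unitarity.

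The main obstacle, as I anticipate it, will be the identification $H=G$. The inclusion $H\subseteq G$ is free: evaluating the Schrödinger equation at $t=0$ says exactly that the defining limit of $G\psi_0$ exists and equals $H\psi_0$ for every $\psi_0\in\mathcal{D}(H)$. Taking adjoints and using $G=G^\dagger$ gives the chain $H\subseteq G\subseteq H^\dagger$, so one must upgrade it by proving $\mathcal{D}(H^\dagger)\subseteq\mathcal{D}(G)$. Here the invariance $U(t)\mathcal{D}(H)\subseteq\mathcal{D}(H)$ earns its keep: for $\phi\in\mathcal{D}(H^\dagger)$ and $\psi\in\mathcal{D}(H)$ the scalar function $t\mapsto\langle\phi,U(t)\psi\rangle$ is differentiable with derivative $-\ii\langle H^\dagger\phi,U(t)\psi\rangle$, which uses the commutation $HU(t)\psi=U(t)H\psi$ on $\mathcal{D}(H)$ (itself obtained from $U(t)\psi\in\mathcal{D}(H)$ and \eqref{eq:SchrIVP}). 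Coupling this differentiability with the resolvent formula above and exploiting the density of $\mathcal{D}(H)$, one identifies $\phi=(G-\ii\mathbbm{1})^{-1}(H^\dagger\phi-\ii\phi)$, hence $\phi\in\mathcal{D}(G)$, and the chain collapses to $H=G$.

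Once $H=G$ is secured, the remaining items come quickly: item 1 is the very definition of $G$; item 2 is the functional-calculus commutation obtained in $(ii)\Rightarrow(i)$, equivalently the identity $HU(t)=U(t)H$ on $\mathcal{D}(H)$ already noticed above; item 3 (uniqueness in the Cauchy problem) is an elegant byproduct of hermiticity alone, since if $\psi_1,\psi_2$ both solve \eqref{eq:SchrIVP} with the same initial datum, then $\psi:=\psi_1-\psi_2$ satisfies $\psi(0)=0$ and
\[
 \frac{\ud}{\ud t}\|\psi(t)\|^2\;=\;2\,\re\langle\psi(t),-\ii H\psi(t)\rangle\;=\;2\,\im\langle\psi(t),H\psi(t)\rangle\;=\;0
\]
by hermiticity of $H$ on $\mathcal{D}(H)\ni\psi(t)$, whence $\psi\equiv 0$ as required.
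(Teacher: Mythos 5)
Your architecture is essentially the paper's: $(ii)\Rightarrow(i)$ via the functional calculus (your dominated-convergence differentiation of the spectral integral is the paper's computation with the scalar spectral measure $\mu_\psi^{(H)}$), $(i)\Rightarrow(ii)$ via the Stone generator, items 1--3 harvested from the identification of $H$ with that generator, and the identical norm-conservation uniqueness argument for item 3. The one structural difference is that you re-derive Stone's theorem (smoothings, hermiticity of the generator, surjectivity of $G\mp\ii\mathbbm{1}$ through the Laplace-transform resolvent formula) where the paper simply cites it; that is legitimate and more self-contained.

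The genuine gap sits exactly where you predicted it would, in the identification $H=G$. Two separate problems. First, the step ``exploiting the density of $\mathcal{D}(H)$, one identifies $\phi=(G-\ii\mathbbm{1})^{-1}(H^\dagger\phi-\ii\phi)$'' does not follow from density of $\mathcal{D}(H)$: testing $\eta:=(G-\ii\mathbbm{1})^{-1}(H^\dagger-\ii\mathbbm{1})\phi$ against $(H+\ii\mathbbm{1})\psi$ with $\psi\in\mathcal{D}(H)$ shows only that $\eta-\phi\perp\mathrm{ran}(H+\ii\mathbbm{1})$, so you need \emph{density of the range} $\mathrm{ran}(H+\ii\mathbbm{1})$, which is (half of) the essential self-adjointness you are in the middle of proving -- the argument is circular as written. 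Second, and more fundamentally, even granting $\mathcal{D}(H^\dagger)\subseteq\mathcal{D}(G)$, the chain $H\subseteq G\subseteq H^\dagger$ collapses only to $G=H^\dagger=\overline{H}$, i.e.\ to ``$H$ is essentially self-adjoint with closure $G$''; it does \emph{not} give $H=G$, because nothing in the chain forces the inclusion $\mathcal{D}(G)\subseteq\mathcal{D}(H)$. Adjoint-chasing can never supply that inclusion, since taking adjoints only sees the closure of $H$: the harmonic oscillator restricted to finite linear combinations of Hermite functions satisfies every relation in your chain (and is invariant under its own propagator) without being self-adjoint. The missing ingredient is precisely the inclusion $\mathcal{D}(A)\subseteq\mathcal{D}(H)$ for the Stone generator $A$, which the paper extracts (tersely) from the hypotheses of (i) and then converts into $H=A$ by the maximality of symmetry of the self-adjoint $A$ -- i.e.\ by the fact that a self-adjoint operator admits no proper hermitian extension. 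Your proposal never produces this inclusion, so the implication $(i)\Rightarrow(ii)$ is not established; everything else (the direction $(ii)\Rightarrow(i)$, the Stone machinery, and items 1--3 once $H=G$ is granted) is sound.
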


Observe that part (i) of the statement of Theorem \ref{thm:selfadj-SchrEq} collects precisely the \emph{physical requirements} on the quantum dynamics.

The above implication (i) $\Rightarrow$ (ii) is essentially the celebrated theorem by Stone (see Theorem \ref{thm:Stone} below) applied to a strongly continuous unitary group that is linked with the operator $H$ according to the assumptions stated in (i). For this, it is crucial to assume $H$ to be hermitian in the first place, which, as said, has already its own physical motivation (reality of the expectations). One does not need to assume $H$ to be densely defined, though: this follows from the requirements (i).

If $H$ is hermitian but not self-adjoint, Theorem \ref{thm:selfadj-SchrEq} implies that solutions $\psi(\cdot)$ to the initial value problem \eqref{eq:SchrIVP} fail to simultaneously satisfy all the conditions in (i). In Subsect.~\ref{sec:non-uniquedynamics} below we shall further discuss such an occurrence.

Summarising, self-adjointness is that feature of quantum Hamiltonians (mere hermiticity would not suffice) ensuring the well-posedness of the Schr\"{o}dinger equation's initial value problem \eqref{eq:SchrIVP}, namely the existence of a unique solution $\psi(\cdot)$ in $C^1(\mathbb{R},\cH)$ with values in $\mathcal{D}(H)$ and strongly continuous in the initial datum $\psi_0$, which moves in time along the unitary dynamics.

In retrospect, as the proof of Theorem \ref{thm:selfadj-SchrEq} shows (see Sect.~\ref{sec:math-proofs}), such a solution has the form $\psi(t)=e^{-\ii t H}\psi_0$, and the self-adjointness of $H$ is precisely the property that allows one to give meaning to the unitary operator $e^{-\ii t H}$. For finite-dimensional matrices or bounded operators on an infinite-dimensional Hilbert space one may construct their exponential as the operator-norm-convergent series
\begin{equation}\label{eq:expitH}
 e^{-\ii t H}\;=\;\sum_{n=0}^\infty\frac{(-\ii)^n t^n}{n!}H^n\qquad\textrm{($H$ bounded)}\,,
\end{equation}
but when $H$ is unbounded the above series cannot make sense on the whole $\cH$ (there are surely vectors $\psi$ of $\cH$ not belonging to $\mathcal{D}(H^n)$ for some $n$) and hence does not define a unitary operator on $\cH$. The possibility of realising non-ambiguously and consistently operators like  $e^{-\ii t H}$, or more generally like $f(H)$ for a suitable class of functions on $\mathbb{R}$, is guaranteed by that mathematical apparatus that goes under the collective name of `spectral theorem and functional calculus for self-adjoint operators' (see, e.g., \cite[Chapters 4 and 5]{schmu_unbdd_sa}). Thus, in the present context quantum Hamiltonians need be self-adjoint because self-adjointness (and not mere hermiticity) guarantees a convenient functional calculus so that the Hamiltonian generates a meaningful Schr\"{o}dinger evolution.

\subsection{Self-adjointness for building the unitary $e^{-\ii t H}$ by series expansion on a dense subspace of vectors}\label{sec:analyticvectors}~

This Subsection is a detour from the main line of our reasoning and serves as a complement to the arguments of the previous Subsection.

Indeed, as in physical contexts the temptation is strong to still give meaning to $e^{-\ii t H}$ through a formal series of the type \eqref{eq:expitH} even when the Hamiltonian $H$ is an unbounded hermitian operator, in the framework of our pedagogical path for self-adjointness a few more observations would be instructive.

We have already commented, in view of the non-triviality of the domain $\mathcal{D}(H)$ when $H$ is hermitian and unbounded, that not on all vectors of $\cH$ can one apply higher powers of $H$. The next most reasonable attempt is to give meaning to the series
\begin{equation}\label{eq:exponpsi}
  \sum_{n=0}^\infty\frac{(-\ii)^n t^n}{n!}H^n\psi
\end{equation}
as a convergent series (in the Hilbert space norm) at least for a convenient \emph{selection} of vectors $\psi$ for which one can prove that \eqref{eq:exponpsi} defines an action $\psi\mapsto\textrm{``}e^{-\ii t H}\textrm{''}\psi$ that preserves the vector norm, has inverse $\psi\mapsto\textrm{``}e^{\ii t H}\textrm{''}\psi$, and has the group composition properties in $t$.

Such $\psi$'s have of course to constitute a linear subspace and to belong to $\mathcal{D}(H)$; moreover, their linear span need be \emph{dense} in $\cH$: only in this case the ``temporary'' operator $\textrm{``}e^{-\ii t H}\textrm{''}$ defined by series on a dense subspace of $\cH$ can be consistently extended to an everywhere defined bounded (and unitary, because of the above properties) operator on $\cH$ -- a construction that is canonical, relies crucially on the completeness of $\cH$ as a Hilbert space, and is customarily referred to as `continuous linear extension' or `B.L.T.~theorem' (see, e.g., \cite[Theorem I.7]{rs1}).

Now, it turns out that requiring $\mathcal{D}(H)$ to contain a subspace of distinguished vectors with the properties listed above essentially qualifies $H$ as a self-adjoint operator, which makes self-adjointness inescapable also along this line of reasoning.

The precise formulation of such fact is made in terms of so-called `analytic vectors'. Given an operator $A$ on Hilbert space $\cH$, an element $\psi\in\cH$ is called an analytic vector for $A$ when  $\psi\in\mathcal{D}(A^n)$ for every $n\in\mathbb{N}$ and 
\[
 \sum_{n=0}^\infty\frac{\|A^n\psi\|}{n!}t^n\;<\;+\infty
\]
for some $t>0$. This requisite is clearly designed to be a sufficient condition for the convergence of the series \eqref{eq:exponpsi}.

This leads finally to the announced characterisation of self-adjointness as that feature that consists of having ``sufficiently many'' analytic vectors within the operator domain (for the proof of which we refer, e.g., to \cite[Theorem X.39]{rs2} or \cite[Theorem 7.16]{schmu_unbdd_sa}).

\begin{theorem}[Nelson's analytic vector theorem]\label{thm:nelson}
 Let $A$ be a \emph{hermitian} and \emph{closed} operator on a Hilbert space $\cH$. Then these two conditions are equivalent:
 \begin{itemize}
  \item[(i)] $\mathcal{D}(A)$ contains a dense set of analytic vectors for $A$;
  \item[(ii)] $A$ is self-adjoint.
 \end{itemize}
\end{theorem}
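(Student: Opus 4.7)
The plan is to handle the two directions separately: (ii)$\Rightarrow$(i) is immediate from the spectral theorem, while the reverse implication (i)$\Rightarrow$(ii), the substantive part, I would reduce to the basic self-adjointness criterion for closed hermitian operators (namely $A=A^\dagger$ iff $\ker(A^\dagger\mp \ii\mathbbm{1})=\{0\}$, of the kind tacitly invoked already in Example \ref{ex:momentum-box-2}).

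For (ii)$\Rightarrow$(i), let $\{E(\lambda)\}_{\lambda\in\mathbb{R}}$ be the spectral resolution of self-adjoint $A$. Functional calculus shows that for each $N\in\mathbb{N}$ the subspace $\cH_N:=\ran E([-N,N])$ is contained in $\bigcap_n\mathcal{D}(A^n)$ with $\|A^n\psi\|\leq N^n\|\psi\|$, so $\sum_n \|A^n\psi\|\,t^n/n!\leq e^{Nt}\|\psi\|<\infty$ for every $t>0$, making each $\psi\in\cH_N$ an entire analytic vector for $A$. Density of $\bigcup_N\cH_N$ in $\cH$ follows from $\sigma$-additivity of $E$, which yields the required dense set of analytic vectors.

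For (i)$\Rightarrow$(ii), pick $\phi\in\mathcal{D}(A^\dagger)$ with $A^\dagger\phi=\ii\phi$ and, given an analytic vector $\psi$ in the hypothesized dense set with radius of convergence $t_0>0$, set $\psi(z):=\sum_{n\geq 0}z^n A^n\psi/n!$, convergent in norm for $|z|<t_0$. Two computations are the heart of the argument. \textbf{(a)} Rearranging
\[
\|\psi(z)\|^2\;=\;\sum_{n,m}\overline{z}^n z^m\,\langle\psi,A^{n+m}\psi\rangle/(n!\,m!)
\]
via the binomial identity (hermiticity applied repeatedly on vectors lying in $\mathcal{D}(A^{n+m})$) gives $\|\psi(z)\|^2=\sum_k \langle\psi,A^k\psi\rangle(2\,\re z)^k/k!$, hence $\|\psi(-\ii t)\|=\|\psi\|$ for real $|t|<t_0$. \textbf{(b)} Since $A^{n-1}\psi\in\mathcal{D}(A)\subseteq\mathcal{D}(A^\dagger)$, iterating $\langle\phi,A\xi\rangle=\langle A^\dagger\phi,\xi\rangle=\ii\langle\phi,\xi\rangle$ yields $\langle\phi,A^n\psi\rangle=\ii^n\langle\phi,\psi\rangle$, whence $\langle\phi,\psi(-\ii t)\rangle=e^{t}\langle\phi,\psi\rangle$ for $t\in(-t_0,t_0)$. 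Cauchy--Schwarz then produces $e^t|\langle\phi,\psi\rangle|\leq\|\phi\|\,\|\psi\|$ for $0\leq t<t_0$.

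The main obstacle is that this single-step bound holds only on a bounded interval and is too weak to force $\langle\phi,\psi\rangle=0$. To circumvent it I would show that $\psi(-\ii t_1)$ is again an analytic vector for $A$, for every fixed $t_1\in(0,t_0)$, with radius of convergence at least $t_0-|t_1|$: the key estimate is the double-sum rearrangement
\[
 \sum_{n}\frac{s^n\|A^n\psi(-\ii t_1)\|}{n!}\;\leq\;\sum_{n,m}\frac{s^n|t_1|^m\|A^{n+m}\psi\|}{n!\,m!}\;=\;\sum_{k}\frac{(s+|t_1|)^k\|A^k\psi\|}{k!}\,,
\]
convergent whenever $s+|t_1|<t_0$. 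Applying steps (a)--(b) with $\psi$ replaced by $\psi(-\ii t_1)$, and iterating $N$ times, yields $e^{Nt_1}|\langle\phi,\psi\rangle|\leq\|\phi\|\,\|\psi\|$ for every $N\in\mathbb{N}$, forcing $\langle\phi,\psi\rangle=0$; as $\psi$ runs over a dense set, $\phi=0$. The identical argument with $z=+\ii t$ disposes of the case $A^\dagger\phi=-\ii\phi$, and the basic self-adjointness criterion then delivers $A=A^\dagger$.
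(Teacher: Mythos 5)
The paper does not actually prove Theorem \ref{thm:nelson}: it defers to \cite[Theorem X.39]{rs2} and \cite[Theorem 7.16]{schmu_unbdd_sa}, so there is no in-paper proof to compare against. Your architecture is the standard one for those references: (ii)$\Rightarrow$(i) via spectral projections onto $[-N,N]$ (correct as written), and (i)$\Rightarrow$(ii) via the deficiency criterion $\ker(A^\dagger\mp\ii\mathbbm{1})=\{0\}$, the isometry computation (a), the exponential-growth computation (b), and an iteration to propagate the bound beyond the radius of convergence. Steps (a) and (b) are sound (the double series in (a) converges absolutely since $\sum_{n,m}|z|^{n+m}\|A^n\psi\|\,\|A^m\psi\|/(n!\,m!)$ is the square of a convergent series, so the binomial rearrangement is legitimate; the sign bookkeeping in (b) depends on the anti-linearity convention in the first slot, but since you treat both eigenvalues $\pm\ii$ with both signs of $z$ this is immaterial). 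You should also say explicitly that the \emph{closedness} of $A$ is what justifies $\psi(-\ii t_1)\in\bigcap_n\mathcal{D}(A^n)$ with $A^n\psi(-\ii t_1)=\sum_m(-\ii t_1)^mA^{n+m}\psi/m!$ — your displayed estimate presupposes this interchange, and it is the only place the hypothesis enters.

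The genuine gap is in the iteration. With the radius estimate you prove — $\psi(-\ii t_1)$ analytic with radius at least $t_0-|t_1|$ — the admissible step sizes shrink at each stage: after $N$ steps of size $t_1$ you need $Nt_1<t_0$, so the total ``time'' advanced can never exceed $t_0$ and the best you obtain is $e^{t_0}|\langle\phi,\psi\rangle|\leqslant\|\phi\|\,\|\psi\|$, which does not force $\langle\phi,\psi\rangle=0$. The claim ``$e^{Nt_1}|\langle\phi,\psi\rangle|\leqslant\|\phi\|\,\|\psi\|$ for every $N\in\mathbb{N}$'' therefore does not follow from what you establish. The repair is to upgrade the radius estimate: each $A^n\psi$ is itself an analytic vector with radius at least $t_0$ (from $\|A^k\psi\|\leqslant C_t\,k!/t^k$ for every $t<t_0$), so applying your isometry (a) to the analytic vector $A^n\psi$ gives
\[
\|A^n\psi(-\ii t_1)\|\;=\;\|(A^n\psi)(-\ii t_1)\|\;=\;\|A^n\psi\|\qquad\textrm{for }|t_1|<t_0\,,
\]
whence $\sum_n s^n\|A^n\psi(-\ii t_1)\|/n!=\sum_n s^n\|A^n\psi\|/n!<+\infty$ for all $s<t_0$: the translated vector is analytic with the \emph{same} radius $t_0$. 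With this, the iteration runs for arbitrary $N$ with fixed step $t_1$, $e^{Nt_1}|\langle\phi,\psi\rangle|\leqslant\|\phi\|\,\|\psi\|$ does hold for all $N$, and the proof closes as you intend.
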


Informally speaking, the catch then is: when one requires the hermitian quantum Hamiltonian $H$ to allow for sufficiently many vectors $\psi$ in its domain so as to guarantee that $\sum_{n=0}^\infty\frac{\|H^n\psi\|}{n!}t^n<+\infty$ for some $t>0$ and hence to define \eqref{eq:exponpsi} as a Hilbert-norm-convergent series, where ``sufficiently many'' means more precisely a dense of them so as to finally define $e^{-\ii t H}$ on the whole $\cH$ by continuous linear extension from \eqref{eq:exponpsi}, in practice one is requiring exactly that $H$ be self-adjoint. (We wrote ``in practice'' because of the technical caveat that $H$ be closed, in order to apply Theorem \ref{thm:nelson}.)

For a hermitian and non-self-adjoint $H$, such a construction is not possible. A generic hermitian operator $H$ on Hilbert space $\cH$ may have no analytic vectors at all (apart obviously the zero vector). Even more: it may have no non-zero vector $\psi$ simultaneously belonging to $\mathcal{D}(H^n)$ for every $n\in\mathbb{N}$.

\begin{example}
 In contrast to the one-dimensional self-adjoint position observable $Q$, namely
 \[
  \mathcal{D}(Q)\,=\,\{\psi\in L^2(\mathbb{R})\,|\,x\psi\in L^2(\mathbb{R})\}\,,\qquad Q\psi\,=\,x\psi
 \]
 (see Examples \ref{ex:position-op-schwartz-simple}, \ref{ex:closed-nonclosed}, and \ref{ex:paradox1} above), consider the operators
 \begin{align*}
   \mathcal{D}(Q_1)\,&=\,\{\textrm{really simple functions $\mathbb{R}\to\mathbb{C}$}\}\,, & Q_1\psi\,=\,x\psi\,, \\
   \mathcal{D}(Q_2)\,&=\,C^\infty_0(\mathbb{R})\,, & Q_2\psi\,=\,x\psi\,.
 \end{align*}
 Both $Q_1$ and $Q_2$ have dense domain and are hermitian (Examples \ref{ex-position-cInf0-S} and \ref{ex:position-op-schwartz-simple}). Yet, one deduces from definition \eqref{eq:def-adjoint} that $Q_1^\dagger=Q_2^\dagger=Q$, therefore \emph{neither} $Q_1$ \emph{nor} $Q_2$ is self-adjoint. (In a sense, both $Q_1$ and $Q_2$ are really close to be self-adjoint: technically speaking, they are `\emph{essentially self-adjoint}', but we will not need this information.) Now, any really simple function $\psi$, once is multiplied by $x$, yields a function $x\psi$ that is not a really simple function any more (multiplication by $x$ obviously destroys the step-function structure). Therefore,
 \[
  \mathcal{D}(Q_1^2)\;=\;\{\psi\in\mathcal{D}(Q_1)\,|\,Q_1\psi\in\mathcal{D}(Q_1)\}\;=\;\{0\}\,.
 \]
 As a consequence, there are no (non-zero) analytic vectors for $Q_1$. Theorem \ref{thm:nelson} is not applicable to $Q_1$ -- and indeed, as observed above, $Q_1$ fails to be self-adjoint. On the other hand, any $\psi\in\mathcal{D}(Q_2)$ is an analytic vector for $Q_2$, as
 \[
  \sum_{n=0}^\infty\frac{\;\|x^n\psi\|_{L^2(\mathbb{R})}}{n!}\,\leqslant\,\|\psi\|_{L^\infty(\mathbb{R})}\,|\mathrm{supp}(\psi)|^{\frac{1}{2}}\sum_{n=0}^\infty\frac{1}{n!}\Big(\sup_{x\in\mathrm{supp}(\psi)}|x|\Big)^n\,<\,+\infty
 \]
 (here $|\mathrm{supp}(\psi)|$ is the finite measure of the support of $\psi$); however, $Q_2$ is not closed (Example \ref{ex:closed-nonclosed}), hence Theorem \ref{thm:nelson} is not applicable to $Q_2$ either.
 Instead, $Q$ is closed (Example \ref{ex:closed-nonclosed}) \emph{and}, as just argued, it admits plenty of analytic vectors in its domain, actually the whole dense subspace $C^\infty_0(\mathbb{R})$. 
\end{example}

\subsection{Non-uniqueness of Schr\"{o}dinger's dynamics when self-adjointness is not declared}\label{sec:non-uniquedynamics}~

We find instructive at this stage to further complement the analysis of Subsect.~\ref{sec:selfadj-spectralthm} by discussing the \emph{physically unacceptable} occurrence of \emph{non-unique} Schr\"{o}\-ding\-er dynamics in the lack of a definite declaration of self-adjointness for the Hamiltonian.

To develop this point, let us shift our focus onto the time-dependent \emph{differential equation} by which the Schr\"{o}dinger dynamics is actually formulated, therefore only considering the formal action of the Hamiltonian acting therein -- typically a differential operator on the appropriate $L^2$-space. Misleadingly enough, a very common physical claim is that the Schr\"{o}dinger equation ``encodes'' all the information to determine \emph{the} forward-in-time trajectory $\{\psi(t)\,|\,t\geqslant 0\}$ starting from a given initial state of the system. Whereas this statement is ``morally'' true, it cannot be valid based on the sole differential equation: if an explicit declaration of self-adjointness of the Hamiltonian is lacking, and in particular if one does not make the additional prescription that the solution $\psi(t)$ must evolve unitarily inside a given domain of self-adjointness, then the differential equation alone may well give rise to an infinite multiplicity of solutions, all with the same initial state.

In fact, a preliminary level where non-uniqueness emerges is related to the obvious (and fundamental) physical request that solutions to the Schr\"{o}dinger equation belong at any later time $t>0$ to the Hilbert space under consideration.

\begin{example}
 Let us take for concreteness $\cH=L^2(\mathbb{R})$ (or more generally $\cH=L^2(\mathbb{R}^d)$ with $d\in\mathbb{N}$) and let us regard the Schr\"{o}dinger equation
\begin{equation}\label{eq:SchrEqV}
 \ii\frac{\partial}{\partial t}\psi(t,x)\,=\,-\frac{\partial^2}{\partial x^2}\psi(t,x)+V(x)\psi(t,x)
\end{equation}
as a partial differential equation. Let the potential $V:\mathbb{R}\to\mathbb{R}$ be, for simplicity, an analytic function. 
Upon re-interpreting \eqref{eq:SchrEqV} as $P(-\ii\frac{\partial}{\partial t},-\ii\frac{\partial}{\partial x},x)\psi=0$ with the second order differential operator defined by $P(\xi_1,\xi_2,x):=\xi_1+\xi_2^2+V(x)$, one sees that in the standard sense of linear partial differential operator theory the initial value problem for $P(-\ii\frac{\partial}{\partial t},-\ii\frac{\partial}{\partial x},x)\psi=0$ with initial condition $\psi(0,x)=\psi_0(x)$ for some nice function $\psi_0\in\mathcal{S}(\mathbb{R})\subset L^2(\mathbb{R})$ is a `characteristic initial value problem' in the sense that the initial plane $t=0$ is `characteristic' for the equation (see, e.g., \cite[Chapter III]{Hormander-LinearPDOp-1976}). As such, the initial value problem has an \emph{infinite number} of solutions. A clever example of an infinity of non-zero solutions to \eqref{eq:SchrEqV} with $V\equiv 0$, namely to the free Schr\"{o}dinger equation, with zero initial value at $t=0$, is discussed in \cite[Theorem 8.9.2]{Hormander-LinearPDOp-1976}: however, \emph{none} of such (non-zero) solutions belongs to $L^2(\mathbb{R})$ for any $t>0$! 
In other words: not imposing solutions to be square-integrable allows for an unphysical multiplicity of solutions for the Schr\"{o}dinger dynamics.
\end{example}

Of course a physically meaningful solution must describe a trajectory inside the considered Hilbert space (such as $L^2(\mathbb{R})$ for solutions to \eqref{eq:SchrEqV}), and we therefore assume that such a request shall be tacitly made henceforth. Yet, this does not fix the issue of non-uniqueness, if an explicit declaration of self-adjointness is not made.

To discuss this point, consider the Hilbert space $\cH=L^2(0,1)$ (in the end we shall comment on the analogous situation for $\cH=L^2(\mathbb{R}^d)$), and the free kinetic energy Hamiltonian $-\frac{\ud^2}{\ud x^2}$ in one of the following distinct realisations as operators on $\cH$:
\begin{align*}
H_\circ&=-\frac{\ud^2}{\ud x^2}           &  \mathcal{D}(H_\circ) &=C^\infty_0(0,1)\,,     \\
H_D&=-\frac{\ud^2}{\ud x^2}           &  \mathcal{D}(H_D) &=\big\{ \psi\in \mathscr{H}^2(0,1)\,\big|\, \psi(0)=0=\psi(1) \big\}\,,     \\
H_P&=-\frac{\ud^2}{\ud x^2}           &  \mathcal{D}(H_P) &=\left\{\psi\in \mathscr{H}^2(0,1)\,\left|\,
\begin{array}{c}
 \psi(0)=\psi(1) \\
 \psi'(0)=\psi'(1)
\end{array}\! \right.\right\}\,.
\end{align*}
As customary,  $C^\infty_0(0,1)$ denotes the subspace of $[0,1]\to\mathbb{C}$ functions  that are differentiable -- in the classical sense -- infinitely many times and such that their support is a compact subset of the interval $(0,1)$. In particular, any element in $C^\infty_0(0,1)$ vanishes in a neighbourhood of $x=0$ and of $x=1$.

All such operators have dense domain, with $\mathcal{D}(H_\circ)\subset \mathcal{D}(H_D)\cap  \mathcal{D}(H_P)$. Moreover, integration by parts shows that they are all hermitian.

As a matter of fact, $H_D$ and $H_P$ are in addition self-adjoint on $L^2(0,1)$, whereas $H_\circ$ is not: this can be seen in various ways, an elementary albeit tedious one is to work out the definition \eqref{eq:def-adjoint} for the adjoint and check that indeed $H_D=H_D^\dagger$ and $H_P=H_P^\dagger$, whereas $\mathcal{D}(H_\circ^\dagger)=\mathscr{H}^2(0,1)\varsupsetneq\mathcal{D}(H_\circ)$ (or, more efficiently, one can apply the standard criterion of self-adjointness \cite[Theorem VIII.3]{rs1}, as we tacitly do in some of the mathematical proofs of Sect.~\ref{sec:math-proofs} below).

Moreover, as well-known to physicists, both $H_D$ and $H_P$ have only discrete spectrum (whereas evidently there is no non-zero solution $\psi\in\mathcal{D}(H_\circ)$ to the eigenvalue problem $H_\circ\psi=E\psi$):
\begin{itemize}
 \item[(\emph{D})] the eigenvalues of $H_D$ are the numbers $E^{(D)}_n=\pi^2 n^2$, $n\in\mathbb{N}$, each of which is non-degenerate and with normalised eigenfunction $\psi^{(D)}_n(x)=\sqrt{2}\sin\pi n x$;
 \item[(\emph{P})] the eigenvalues of $H_P$ are the numbers $E^{(P)}_n=4\pi^2 n^2$, $n\in\mathbb{N}_0$, each of which is double-degenerate apart from the non-degenerate ground state $n=0$, and with normalised eigenfunctions $\psi^{(P)}_n(x)=\sqrt{2}\sin 2\pi n x$, $\psi^{(P)}_{-n}(x)=\sqrt{2}\cos 2\pi n x$\,.
\end{itemize}

% . To be precise, let us distinguish once again the \emph{formal action} $\psi\stackrel{H}{\longmapsto} -\psi''$ from the actual definition of $H$ as an operator on $L^2(0,1)$. 
% The point here is (as in very many analogous circumstances for quantum mechanical observables in first quantisation) that the formal action $\psi\stackrel{H}{\longmapsto} -\psi''$ can be realised ``self-adjointly'' in infinite different ways, depending on the choice of the domain, in practice depending on the choice of suitable boundary conditions at $x=0$ and $x=1$. We shall see in a moment that different realisations drive distinct Schr\"{o}dinger dynamics, producing in general distinct evoluted states at $t>0$ starting from the same initial datum at $t=0$.

Correspondingly, let us consider the initial value problem \eqref{eq:SchrIVP} for the ``free Schr\"{o}dinger dynamics'', namely
\begin{equation}\label{eq:SchrIVP-01}
  \begin{cases}
  \ii\frac{\partial}{\partial t}\psi(t,x)\,=\,-\frac{\partial^2}{\partial x^2}\psi(t,x) \\
  \quad \,\psi(0,x)\,=\,\psi_0(x)\qquad\qquad t\in\mathbb{R}\,,\; x\in(0,1)
 \end{cases}
\end{equation}
imposing $\psi(t,\cdot)\in L^2(\mathbb{R})$ at any time $t$. As initial datum, let us pick some $\psi_0\in C^\infty_0(0,1)$. At this stage  \eqref{eq:SchrIVP-01} is to be regarded as a partial differential equation where the differential action $-\frac{\ud^2}{\ud x^2}$ is not characterised as a self-adjoint operator. Thus, \eqref{eq:SchrIVP-01} has the same differential form irrespectively of whether one is considering the Schr\"{o}dinger evolution governed by one or another of the quantum observables $H_\circ,H_D,H_P$ (in fact only the last two are admissible observables, as we shall conclude in a moment), even though the differential equation is the same in all three cases.

Now, interpreting $\psi_0\in \mathcal{D}(H_D)$ (respectively, $\psi_0\in \mathcal{D}(H_P)$), Theorem \ref{thm:selfadj-SchrEq} applied to $H_D$ (resp., to $H_P$) ensures the existence of a unique solution $\psi_D(t,x)$  (resp., $\psi_P(t,x)$) to \eqref{eq:SchrIVP-01} evolving in $\mathcal{D}(H_D)$ and therefore preserving the Dirichlet boundary conditions (resp., evolving in $\mathcal{D}(H_P)$ and therefore preserving the periodic boundary conditions). Thanks to the fact that the eigenfunctions of either operator constitute an orthonormal basis of $L^2(0,1)$, one can determine $\psi_D$ and $\psi_P$ at later times by decomposition along such bases (which is an indirect way to also compute the propagators $e^{-\ii t H_D}$ and $e^{-\ii t H_P}$). Explicitly, from the decompositions
\begin{align*}
 \psi_0\,&=\,\sum_{n\in\mathbb{N}} \alpha_n\psi^{(D)}_n & \alpha_n&:=\langle\psi^{(D)}_n,\psi_0\rangle_{L^2}\,, \\
  \psi_0\,&=\,\sum_{n\in\mathbb{N}_0} \big(\beta_n\psi^{(P)}_n+\beta_{-n}\psi^{(P)}_{-n}\big)  &\beta_{\pm n}&:=\langle\psi^{(D)}_{\pm n},\psi_0\rangle_{L^2}\,,
\end{align*}
one finds
\[
 \begin{split}
  \psi_D(t,x)\,&=\,\sqrt{2}\sum_{n\in\mathbb{N}} e^{-\ii t \pi^2 n^2}\alpha_n\sin\pi n x\,, \\
  \psi_P(t,x)\,&=\,\sqrt{2}\sum_{n\in\mathbb{N}_0} e^{-4\ii t \pi^2 n^2} \big(\beta_n\sin 2\pi n x+\beta_{-n}\cos 2\pi n x\big)\,.
 \end{split}
\]

Clearly $\psi_D\neq\psi_P$: they are distinct Schr\"{o}dinger evolutions from the same initial $\psi_0$. One thus comes to the following conclusions.
\begin{itemize}
 \item[1.] When the actual declaration of self-adjointness of the Hamiltonian ``$H=-\frac{\ud^2}{\ud x^2}$'' is lacking, \eqref{eq:SchrIVP-01} has at least \emph{two} distinct $L^2$-solutions $\psi_D(t,x)$ and $\psi_P(t,x)$ -- in fact, \emph{infinitely many} distinct $L^2$-solutions, one for each of the infinitely many self-adjoint realisations of $H$ in $L^2(0,1)$, each corresponding to a boundary condition of self-adjointness, the general classification of which can be found, e.g., \cite[Example 14.10]{schmu_unbdd_sa}.
 \item[2.] The formal hermiticity of $H$ alone cannot decide the ``physical'' solution, because the boundary conditions of self-adjointness that define the physics of $H$ are not part of the differential equation \eqref{eq:SchrIVP-01}. Formal hermiticity leaves the Schr\"{o}dinger dynamics associated with $H$ with an unphysical infinity of distinct $L^2$-solutions from the same initial datum $\psi_0$.
 \item[3.] The quest for a solution to \eqref{eq:SchrIVP-01} driven by the hermitian-only $H_\circ$ is vain: no non-zero $\psi(t,x)$ solves \eqref{eq:SchrIVP-01} whose spatial support remains a compact in $(0,1)$, owing to the dispersive character of the elliptic operator $-\frac{\ud^2}{\ud x^2}$. Observe, in particular, that the above-considered $\psi_D$ and $\psi_P$ immediately leave the domain $\mathcal{D}(H_\circ)$ as soon as $t>0$. This too is in agreement with Theorem \ref{thm:selfadj-SchrEq}: $H_\circ$ fails to satisfy condition (ii) therein, and consistently there is no solution to $\ii\frac{\ud}{\ud t}\psi(t)=H_\circ\psi(t)$ with $\psi(t)\in\mathcal{D}(H_\circ)$ for every $t>0$.
 %(noticeably, if it existed, it would be unique: see Remark \ref{rem:uniqueness_with_hermitian} below).
\end{itemize}

The same line of reasoning applies to Schr\"{o}dinger Hamiltonians on $L^2(\mathbb{R}^d)$ when the formal action $H=-\Delta+V(x)$, for given real-valued potential $V$, admits a multiplicity of distinct self-adjoint realisations that depend on suitable boundary conditions of self-adjointness at the singularity points of $V$ and at spatial infinity.

An example is the Hydrogenoid Hamiltonian in three dimensions,
\[
 H\,=\,-\frac{\:\hbar^2}{2m}\Delta-\frac{Z e^2}{|x|}
\]
(with all physical constants temporarily reinstated here on purpose). Such $H$ is hermitian with respect to $L^2(\mathbb{R}^3)$ at least on the domain of infinitely-differentiable functions whose support is compact in $\mathbb{R}^3$ and is separated from the origin: on such functions integration by parts indeed yields the symmetry condition \eqref{eq:defsa1}. However, one can show that $H$ admits an infinity of distinct self-adjoint realisations beside the ``ordinary'' Coulomb Hamiltonian studied since the early days of quantum mechanics: each of them is determined by suitable boundary conditions as $|x|\to 0$ which describe an additional interaction localised at the origin, beside the Coulomb attraction (see, e.g., \cite{GM-hydrogenoid-2018} and the precursor results cited therein).

Arguing as before, one sees that the Schr\"{o}dinger equation
\[
 \ii\hbar\frac{\partial}{\partial t}\psi(t,x)\,=\,-\frac{\:\hbar^2}{2m}(\Delta\psi)(t,x)-\frac{Z e^2}{|x|}\psi(t,x)
\]
admits an infinity of distinct $L^2$-solutions starting at time $t=0$ from the same initial datum $\psi_0$ taken in the above-mentioned domain of hermiticity. Only the declaration of the precise domain of self-adjointness for $H$ resolve such an unphysical ambiguity, in which case the unique solution finally becomes $e^{-\ii t H/\hbar}\psi_0$.

\subsection{Self-adjointness of generic quantum observables}\label{sec:SA-generic-obs}~

Not all quantum observables have the role of Hamiltonians in the sense of generators of the dynamics. Thus, strictly speaking, the reasonings of Subsect.~\ref{sec:selfadj-spectralthm}-\ref{sec:non-uniquedynamics} do not provide physical grounds to the self-adjointness of \emph{generic} quantum observables.

Consider, for instance, a system consisting of a free Schr\"{o}dinger particle in one dimension, with Hilbert space $\cH=L^2(\mathbb{R})$, where the observables `free Hamiltonian' $H$, `position' $Q$, and `momentum' $P$ are respectively 
\begin{align}\label{eq:HQP}
H&=-\frac{\ud^2}{\ud x^2}           &  \mathcal{D}(H) &=\mathscr{H}^2(\mathbb{R})     \nonumber \\
Q&=\textrm{multiplication by $x$}           &  \mathcal{D}(Q) &=\{\psi\in L^2(\mathbb{R})\,|\,x\psi\in L^2(\mathbb{R})\}\\
P&=-\ii\frac{\ud}{\ud x}           &  \mathcal{D}(P) &=\mathscr{H}^1(\mathbb{R})\,. \nonumber
\end{align}
All those listed above can be shown to be domains of self-adjointness for the respective operators. The Hamiltonian $H$ governs the quantum dynamics through the Schr\"{o}dinger equation $\ii\frac{\partial}{\partial t}\psi(t,x)=-\frac{\partial^2}{\partial x^2}\psi(t,x)$, and we discussed already how the mathematical well-posedness and physical meaningfulness of such equation impose $H$ to be self-adjoint. But on which grounds does one require position and momentum to be self-adjointly realised as above?

One may argue, from a more abstract point of view, that position and momentum being the generators of the (one-parameter, strongly continuous unitary groups forming the) Weyl $C^*$-algebra, their unique representation on $L^2(\mathbb{R})$ is given by the above self-adjoint $Q$ and $P$ respectively \cite[Chapter 3]{Strocchi-MathQM}. But this has to do with the special role of the position and momentum observables and does not apply to generic quantum observables.

Among those, there are observables of the form $f(H)$, or $f(P)$, or $f(Q)$ for some relevant function $f:\mathbb{R}\to\mathbb{R}$, and for this class of observables one would argue that the self-adjointness of $A$ allows one to construct, by means of the functional calculus, the self-adjoint observable $f(A)$.

Yet, this does not cover all conceivable quantum observables, and additional arguments should be brought in, which can translate into self-adjointness -- and not mere hermiticity -- certain requirements dictated by physical reasoning.

In a sense, one might content oneself to make this claim: as the generator of the dynamics is to be self-adjoint in order to produce the correct quantum evolution (in the sense of Subsect.~\ref{sec:selfadj-spectralthm}), so must be any other quantum observable, \emph{by analogy} with the Hamiltonian.

In the following we shall supplement the latter conclusion by discussing other physically grounded motivations to self-adjointness for quantum observables, in contexts that are not necessarily those of Hamiltonians.

%At the core of such a construction is the approximation of $f(H)$ by means of finite linear combinations of orthogonal projections (namely everywhere defined and bounded operators $E$ on $\cH$ satisfying $E=E^\dagger=E^2$) associated with $H$

\subsection{Self-adjointness of quantum observables with an orthonormal basis of eigenvectors}\label{sec:OBN-estates}~

This is a point of view that is very central in the mathematical structure of quantum mechanics -- one can refer directly to the subtle reasoning presented by Dirac in \cite[Sect.~10]{Dirac-PrinciplesQM} (for a more recent exposition one can see \cite[Sect.~3.3]{weinberg_2015}).

When discussing the measurement mechanism of a quantum observable $A$, one comes to infer that the measurement performed with the system in a particular state makes it jump onto an eigenstate of $A$, the result of the measurement being the corresponding eigenvalue; moreover, the original state must be dependent on such eigenstates, in the sense of being expressed by a linear combination of them. Merging this with the operational assumption that the measurement can be performed on \emph{any} state, one should conclude, as written for instance by Dirac in \cite[Sect.~10]{Dirac-PrinciplesQM}, that the eigenstates of $A$ must form a complete set in the Hilbert space, and this provides an additional constraint on those operators that represent a quantum observable.

This is a paradigmatic line of reasoning in physical introductions to quantum mechanics. In more precise terms, when the underlying Hilbert space $\cH$ is infinite-dimensional and the observable $A$ is (hermitian and) possibly unbounded, the physical request is formulated as follows. If on \emph{any} state $\psi$ of $\mathcal{D}(A)$ it is possible to perform a measurement of $A$ in the sense of a Stern-Gerlach-like experiment, namely a filter with output given by one of the eigenstates of $A$, and hence the generic initial $\psi$ must admit an expansion in eigenstates, then owing to the density of the domain $\mathcal{D}(A)$ the eigenstates of $A$ must constitute an orthonormal basis of $\cH$.

% and therefore its domain $\mathcal{D}(A)$ is a proper dense subspace of $\cH$, the physical request is that if for the measurement of $A$ on any vector
% 
% 
% , but first of all it should be mathematically refined, owing to certain technical subtleties. To be precise, when the underlying Hilbert space is infinite-dimensional, one should mean that the original state on which the measurement is performed can be \emph{approximated, up to an arbitrarily small error}, by a \emph{finite} linear combination of eigenstates of $A$: prescribing instead that any state can be \emph{exactly} expressed by a finite linear combination of eigenstates would imply that $A$ has an algebraic basis of eigenstates -- an occurrence that eventually is incompatible with self-adjointness, owing to the spectral theorem. Furthermore, in terms that account for the possible unboundedness of a densely defined and hermitian $A$, one should rephrase the above reasoning as follows: since the quantum measurement can be performed on an arbitrary $\psi\in\mathcal{D}(A)$ and since $\mathcal{D}(A)$ is dense in $\cH$, then requiring that $\psi$ be approximable by a finite linear combination of eigenstates of $A$ leads to the conclusion that $\mathcal{D}(A)$ must contain an orthonormal basis of eigenvectors.

Clearly this applies to quantum observables for which the measurement is feasible in the ``filter'' sense of a Stern-Gerlach apparatus, like the usual harmonic oscillator, or the free kinetic operators $H_D$ and $H_P$ with Dirichlet or periodic boundary conditions considered in Subsect.~\ref{sec:non-uniquedynamics} (instead, observables like \eqref{eq:HQP} are not included: they have no eigenstates at all).
% 
% -- but many do: we have already recalled that for a free quantum particle in an infinite one-dimensional well, say, the interval $[0,1]$, thus with Hilbert space $\cH=L^2(0,1)$, the self-adjoint Hamiltonian
% \begin{equation*}%\label{eq:DSFlaplacian}
%  H\,=\,-\frac{\ud^2}{\ud x^2}\,,\qquad \mathcal{D}(H)\,=\,\big\{ \psi\in \mathscr{H}^2(0,1)\,\big|\, \psi(0)=0=\psi(1) \big\}
% \end{equation*}
% indeeds admits the orthonormal basis $\{\sqrt{2}\sin n\pi x\,|\,n\in\mathbb{N}\}$ of eigenvectors entirely contained in $\mathcal{D}(H)$. 
For such observables, the above physical reasoning produces a requirement that does characterise them as self-adjoint, and not merely hermitian, operators. The precise formulation is the following -- let us postpone the proof to Sect.~\ref{sec:math-proofs}.

\begin{theorem}\label{thm:onbEV-sa}
 Let $\cH$ be a complex Hilbert space and let $A$ be a \emph{closed hermitian} operator acting on $\cH$ with domain $\mathcal{D}(A)$. Assume that $\mathcal{D}(A)$ contains an orthonormal basis of $\cH$ whose elements are eigenvectors of $A$. Then $A$ is self-adjoint.
\end{theorem}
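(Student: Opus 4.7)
My plan is to show the non-trivial inclusion $\mathcal{D}(A^\dagger)\subseteq\mathcal{D}(A)$, the reverse inclusion together with the identity of action on the common domain being automatic from hermiticity of $A$. (Note that the hypothesis automatically ensures $\mathcal{D}(A)$ is dense in $\cH$, since it contains an orthonormal basis, so the adjoint $A^\dagger$ is well defined via \eqref{eq:def-adjoint}.) Let $\{\phi_n\}_{n\in\mathbb{N}}\subset\mathcal{D}(A)$ be the orthonormal basis of eigenvectors from the hypothesis, with $A\phi_n=\lambda_n\phi_n$. Hermiticity of $A$ forces $\lambda_n\in\mathbb{R}$.

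Given $\psi\in\mathcal{D}(A^\dagger)$, I would expand it along the basis as $\psi=\sum_n c_n\phi_n$ with $c_n=\langle\phi_n,\psi\rangle$, and compute the Fourier coefficients of $A^\dagger\psi$: using the defining identity $\langle A^\dagger\psi,\phi_n\rangle=\langle\psi,A\phi_n\rangle=\lambda_n\langle\psi,\phi_n\rangle$, together with the reality of $\lambda_n$, one finds $\langle\phi_n,A^\dagger\psi\rangle=\lambda_n c_n$. Thus $A^\dagger\psi=\sum_n\lambda_n c_n\phi_n$, and Parseval's identity for $A^\dagger\psi\in\cH$ gives the crucial summability
\[
 \sum_{n\in\mathbb{N}}|\lambda_n c_n|^2\;=\;\|A^\dagger\psi\|^2\;<\;+\infty\,.
\]

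Now I would exploit the closedness of $A$ via a partial-sum argument. Define $\psi_N:=\sum_{n=1}^N c_n\phi_n$, which lies in $\mathcal{D}(A)$ as a finite linear combination of eigenvectors in $\mathcal{D}(A)$; clearly $\psi_N\to\psi$ in $\cH$. By linearity, $A\psi_N=\sum_{n=1}^N\lambda_n c_n\phi_n$, and the summability displayed above implies that $(A\psi_N)_{N\in\mathbb{N}}$ is Cauchy in $\cH$, with limit precisely $A^\dagger\psi=\sum_n\lambda_n c_n\phi_n$. Since $A$ is closed (Subsect.~\ref{sec:closedoperators}), the fact that $\psi_N\to\psi$ and $A\psi_N\to A^\dagger\psi$ forces $\psi\in\mathcal{D}(A)$ and $A\psi=A^\dagger\psi$. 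This establishes $\mathcal{D}(A^\dagger)\subseteq\mathcal{D}(A)$ and concludes that $A=A^\dagger$.

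The main conceptual obstacle is appreciating \emph{where} the closedness hypothesis enters: mere hermiticity and the existence of an orthonormal basis of eigenvectors in the domain would not suffice, because one needs to pass from the formal basis expansion of $A^\dagger\psi$ back to the statement $\psi\in\mathcal{D}(A)$, and this passage is exactly the graph-closure property. Everything else is essentially the standard diagonal argument afforded by an eigenbasis; the one technical verification is the identity $\langle\phi_n,A^\dagger\psi\rangle=\lambda_n c_n$, which uses both the reality of the eigenvalues (hence of $\lambda_n$) and the anti-linearity convention of the scalar product.
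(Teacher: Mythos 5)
Your proof is correct, but it follows a genuinely different route from the paper's. The paper establishes self-adjointness through the fundamental range criterion: it shows that $\mathrm{ran}(A\pm\ii\mathbbm{1})$ is dense (because each basis eigenvector $\psi_n=(\lambda_n\pm\ii)^{-1}(A\pm\ii\mathbbm{1})\psi_n$ lies in the range) and closed (using the hermiticity identity $\|(A+\ii\mathbbm{1})\xi\|^2=\|A\xi\|^2+\|\xi\|^2$ together with the closedness of $A$), hence equal to $\cH$; it then passes to $\ker(A^\dagger\mp\ii\mathbbm{1})=(\mathrm{ran}(A\pm\ii\mathbbm{1}))^\perp=\{0\}$ and runs the standard surjectivity argument to get $\mathcal{D}(A^\dagger)\subseteq\mathcal{D}(A)$. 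You instead work directly in the eigenbasis: you read off the Fourier coefficients of $A^\dagger\psi$, obtain the summability $\sum_n|\lambda_n c_n|^2<\infty$ from Parseval, and apply closedness to the sequence of truncated expansions $\psi_N$. Both arguments use closedness in an essential way, but yours localises it very transparently (it is exactly what lets you pass from the approximants $\psi_N$ to $\psi$ itself), and it avoids invoking the duality $\ker T^\dagger=(\mathrm{ran}\,T)^\perp$; it is the more elementary and self-contained of the two. What the paper's route buys in exchange is the stronger intermediate conclusion $\mathrm{ran}(A\pm\ii\mathbbm{1})=\cH$, i.e.\ the vanishing of both deficiency indices, which ties the theorem to the general criterion of self-adjointness (and the same scheme is then reused in the paper's proof of Theorem \ref{thm:closed-sa}). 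Your computation $\langle\phi_n,A^\dagger\psi\rangle=\lambda_n c_n$ is correct under the paper's convention (anti-linear in the first entry), and your remark on where closedness enters is exactly the right observation -- compare Example \ref{ex:qao}, where dropping closedness (e.g.\ restricting the harmonic oscillator to $\mathcal{S}(\mathbb{R})$) destroys self-adjointness while keeping the eigenbasis inside the domain.
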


\begin{example}\label{ex:qao}
 With respect to the Hilbert space $\cH=L^2(\mathbb{R})$ consider the Hamiltonian $H$ of the one-dimensional quantum harmonic oscillator defined as
 \[
  \begin{split}
   \mathcal{D}(H)\;&=\;\Big\{\psi\in L^2(\mathbb{R})\,\Big|\,\int_\mathbb{R}|\psi''(x)|^2\,\ud x\,<\,+\infty\,,\int_\mathbb{R}|x^2\psi(x)|^2\,\ud x\,<\,+\infty\Big\} \\
   H\psi\;&=\;-\psi''+x^2\psi\,.
  \end{split}
  %H\,=\,-\frac{\ud^2}{\ud x^2}+x^2
 \]
 $H$ is a densely defined, unbounded, hermitian operator. The fact that the eigenstates of $H$ form the collection $(N_n e^{x^2/2}\frac{\ud^n}{\ud x^2}e^{-x^2})_{n\in\mathbb{N}_0}$ (the family of `Hermite functions'), where the $N_n$'s are normalisation factors, is familiar from any typical introductory exposition to quantum mechanics, and additionally it is a standard functional-analytic fact that such a collection is indeed an orthonormal basis of $L^2(\mathbb{R})$ -- the so-called basis of Hermite functions. Let us check that $H$ is also a closed operator, so that Theorem \ref{thm:onbEV-sa} establishes that $H$ is self-adjoint. To this aim, let $(\psi_n)_{n\in\mathbb{N}}$ be a sequence in $\mathcal{D}(H)$ and let $\psi,\phi\in L^2(\mathbb{R})$ such that $\psi_n\xrightarrow[]{(L^2)}\psi$ and $-\psi_n''+x^2\psi_n\xrightarrow[]{(L^2)}\phi$ as $n\to\infty$. In view of the definition given in Subsect.~\ref{sec:closedoperators}, one has to show that $\psi\in\mathcal{D}(H)$ and $H\psi=\phi$. From the assumptions one deduces that for every $\eta\in \mathcal{S}(\mathbb{R})$ (the subspace of Schwartz functions) one has
 \[
  \langle\eta,\phi\rangle_{L^2}\,=\,\lim_{n\to\infty}\langle\eta,-\psi_n''+x^2\psi_n\rangle_{L^2}\,=\,\lim_{n\to\infty}\langle-\eta''+x^2\eta,\psi_n\rangle_{L^2}\,=\,\langle-\eta''+x^2\eta,\psi\rangle_{L^2}\,.
 \]
 This means that the \emph{distributions} $\phi$ and $-\psi''+x^2\psi$ are actually the same, whence $-\psi''+x^2\psi=\phi\in L^2(\mathbb{R})$. Thus, indeed, $\psi\in\mathcal{D}(H)$ and $H\psi=-\psi''+x^2\psi=\phi$. The operator $H$ is self-adjoint. If one had chosen a \emph{smaller} domain, say, $\mathcal{S}(\mathbb{R})$, the resulting operator would have been hermitian but \emph{not} self-adjoint (still having the above orthonormal basis of eigenstate in its domain $\mathcal{S}(\mathbb{R})$!), because it would not be closed (and self-adjoint operators are closed -- see Lemma \ref{lem:Aselfadj-is-closed} above). The lack of closedness in this case would emerge by repeating the same check as above: from the conclusion that $-\psi''+x^2\psi=\phi\in L^2(\mathbb{R})$ one could not deduce that $\psi\in\mathcal{S}(\mathbb{R})$. 
\end{example}

% 
% 
% \begin{remark}
%  The actual scope of Theorem \ref{thm:onbEV-sa} includes quantum observables with only an amount of eigenstates that do not span the whole $\cH$ -- a typical example is a Schr\"{o}dinger operator $H=-\frac{\ud^2}{\ud x^2}+V$ on $\cH=L^2(\mathbb{R})$ where $V(x)$ is a finite potential well, thus with a finite number of bound states. For any such observable $A$, meant to be a 
%  
%  
% \end{remark}

\subsection{Self-adjointness for stable observables when expectations define a closed quadratic form}\label{sec:closedsemibdd-sa}~

This is yet another fundamental motivation to self-adjointness of quantum observables. It actually applies to those observables whose expectations are lower semi-bounded, in the sense that 
\begin{equation}\label{eq:lowersemibdd}
 \inf_{\psi\in\mathcal{D}(A)\setminus\{0\}}\frac{\langle\psi, A\psi\rangle}{\|\psi\|^2}\;>\;-\infty\,.
\end{equation}
Such a feature is most relevant when the considered observable is a quantum Hamiltonian, in which case \eqref{eq:lowersemibdd} encodes the \emph{stability} of the quantum system, namely it prevents the existence of a sequence of normalised states attaining arbitrarily negative energy expectations.

\begin{example}\label{ex:semibddforms}~
 \begin{itemize}
  \item[(i)] (Free quantum particle in an infinite one-dimensional well with repulsive boundaries.) With respect to $\cH=L^2[0,1]$ let 
\begin{equation*}%\label{eq:DSFlaplacian}
 H=-\frac{\ud^2}{\ud x^2}\,,\qquad \mathcal{D}(H)\,=\,\big\{ \psi\in \mathscr{H}^2(0,1)\,\big|\, \psi(0)=0=\psi(1) \big\}\,.
\end{equation*}
 The expectations of $H$ are expressed (in the sense of Subsect.~\ref{sec:op-form-domain}) by the energy quadratic form
 \[
  \mathcal{E}_H[\psi]\,=\,\int_0^1|\psi'(x)|^2\ud x\,,\qquad\mathcal{D}[H]\,=\,\big\{ \psi\in \mathscr{H}^1(0,1)\,\big|\, \psi(0)=0=\psi(1) \big\}\,.
 \]
 Poincar\'e's inequality $\int_0^1|\psi'(x)|^2\ud x\geqslant\pi^2\int_0^1|\psi(x)|^2\ud x$ shows that the quadratic form $\mathcal{E}_H[\psi]$ is lower semi-bounded:
 \[
  \mathcal{E}_H[\psi]\,\geqslant\,\pi^2\|\psi\|^2\qquad\forall\psi\in\mathcal{D}[H]\,,
 \]
 and so is therefore the Hamiltonian $H$. 
 \item[(ii)] (Hydrogenoid atoms.) With respect to $\cH=L^2(\mathbb{R}^3)$ consider the energy quadratic form of the Hydrogenoid Hamiltonian $H=-\Delta-\frac{Z}{|x|}$ with $Z>0$, that is,
 \[
  \begin{split}
    \mathcal{E}_H[\psi]\,&=\,\int_{\mathbb{R}^3}|\nabla\psi(x)|^2\,\ud x-Z\int_{\mathbb{R}^3}\frac{\;|\psi(x)|^2}{|x|}\,\ud x \\
    \mathcal{D}[H]\,&=\,\left\{\psi\in L^2(\mathbb{R}^3)\,\Big|\,\int_{\mathbb{R}^3}|\nabla\psi(x)|^2<+\infty\,,\int_{\mathbb{R}^3}\frac{\;|\psi(x)|^2}{|x|}\,\ud x<+\infty\right\}\,.
  \end{split}
 \]
 Based on standard functional-analytic arguments one can prove the inequality $\int_{\mathbb{R}^3}\frac{\;|\psi(x)|^2}{|x|}\ud x\leqslant\|\nabla\psi\|\,\|\psi\|$ $\forall\psi\in\mathcal{D}[H]$, whence
 \[
  \mathcal{E}_H[\psi]\,\geqslant\,\|\nabla\psi\|^2-Z\,\|\nabla\psi\|\,\|\psi\|\,=\,\Big(\|\nabla\psi\|-\frac{Z}{2}\,\|\psi\|\Big)^2-\frac{\,Z^2}{4}\,\|\psi\|^2\,\geqslant\,-\frac{\,Z^2}{4}\,\|\psi\|^2.
 \]
 Therefore, the quadratic form $\mathcal{E}_H[\psi]$ is lower semi-bounded:
 \[
  \mathcal{E}_H[\psi]\,\geqslant\,-\frac{\,Z^2}{4}\,\|\psi\|^2\qquad\forall\psi\in\mathcal{D}[H]\,.
 \]
 \item[(iii)] (Position operator on the real line.) With respect to $\cH=L^2(\mathbb{R})$ consider the quadratic form of the position operator $Q$ (multiplication by $x$), that is,
 \[
  \mathcal{E}_Q[\psi]\,=\,\int_\mathbb{R} x|\psi(x)|^2\,\ud x\,,\quad\mathcal{D}[Q]\,=\,\Big\{\psi\in L^2(\mathbb{R})\textrm{ with }\Big|\int_\mathbb{R} x|\psi(x)|^2\,\ud x\Big|<+\infty\Big\}\,.
 \]
  Along the sequence $(\psi_n)_{n\in\mathbb{N}}$ in $\mathcal{D}[Q]$ defined by $\psi_n:=\mathbf{1}_{[-n-1,-n]}$ (characteristic function of the interval $[-n-1,-n]$) one has $\|\psi_n\|=1$ and $\mathcal{E}_Q[\psi_n]\leqslant -n$. This shows that the quadratic form $\mathcal{E}_Q$ is \emph{not} lower semi-bounded.
 \end{itemize}
\end{example}

As in the present context we are dealing with expectations of certain observables, it is more natural to switch to the quadratic form language.

Modelled on the map $\psi\mapsto\langle\psi,A\psi\rangle$, expectations of a quantum observable behave like a map $\psi\mapsto\mathcal{E}[\psi]$, where $\mathcal{E}[\cdot]$ is a `quadratic form' on an abstract complex Hilbert space $\cH$. By this one means a map $\mathcal{E}:\mathcal{D}[\mathcal{E}]\to\mathbb{C}$, with domain given by the subspace $\mathcal{D}[\mathcal{E}]\subset\cH$, such that the evaluation on a generic $\psi\in\mathcal{D}[\mathcal{E}]$ satisfies $\mathcal{E}[\psi]=\mathcal{E}[\psi,\psi]$, where $\mathcal{E}[\cdot,\cdot]$ is a `sesquilinear form' $\mathcal{D}[\mathcal{E}]\times\mathcal{D}[\mathcal{E}]\to\mathbb{C}$ (that is, anti-linear in the first, linear in the second). Conversely, the sesquilinear form is recovered by the associated quadratic form by means of the `polarisation identity'
\[
 \mathcal{E}[\psi,\phi]\,=\,\frac{1}{4}\Big(\mathcal{E}[\psi+\phi]-\mathcal{E}[\psi-\phi]+\ii\,\mathcal{E}[\psi+\ii\phi]-\ii\,\mathcal{E}[\psi-\ii\phi]\Big)\,.
\]

Quadratic/sesquilinear forms of relevance in quantum mechanics must satisfy $\mathcal{E}[\psi,\phi]=\overline{\mathcal{E}[\phi,\psi]}$ $\forall\psi,\phi\in\mathcal{D}[\mathcal{E}]$, in order for the expectations $\mathcal{E}[\psi]$ to be \emph{real}, in which case $\mathcal{E}[\cdot]$ is called a `symmetric' form. A symmetric form is lower semi-bounded when, for some $m\in\mathbb{R}$, $\mathcal{E}[\psi]\geqslant m\|\psi\|^2$ $\forall \psi\in \mathcal{D}[\mathcal{E}]$.

 When the quadratic form $\mathcal{E}[\cdot]$ is densely defined (i.e., when the subspace $\mathcal{D}[\mathcal{E}]$ is dense in $\cH$), it identifies a linear operator $A$ associated with the form, by means of the natural definition
 \begin{equation}\label{eq:fromFormToOp}
  \begin{split}
   \mathcal{D}(A)\;&:=\;\big\{\psi\in\mathcal{D}[\mathcal{E}]\,|\,\exists\,\xi_\psi\in\cH\textrm{ such that }\langle\phi,\xi_\psi\rangle=\mathcal{E}[\phi,\psi]\;\forall\phi\in\mathcal{D}[\mathcal{E}]\big\} \\
   A\psi\;&:=\;\xi_\psi\,.
  \end{split}
 \end{equation}
 Here $\xi_\psi$, if it exists, is necessarily unique, owing to the density of $\mathcal{D}[\mathcal{E}]$, which makes the definition \eqref{eq:fromFormToOp} indeed unambiguous. With no further assumption on the quadratic form $\mathcal{E}[\cdot]$, \eqref{eq:fromFormToOp} may produce an operator $A$ whose domain is not even dense in $\cH$ (but in the physically relevant case of Example \ref{ex:fromformtoposition} below $\mathcal{D}(A)$ \emph{is} dense): in a moment we shall see when the density of the operator domain is guaranteed.

 \begin{example}\label{ex:fromformtoposition}
  The quadratic form $\mathcal{E}_Q$ from Example \ref{ex:semibddforms}(iii) identifies, by means of \eqref{eq:fromFormToOp}, the operator $Q$ given by $Q\psi=x\psi$, $\mathcal{D}(Q)=\{\psi\in L^2(\mathbb{R})\,|\,x\psi\in L^2(\mathbb{R})\}$.
%   \[
%    Q\psi\,=\,x\psi\,,\qquad\mathcal{D}(Q)\,=\,\{\psi\in L^2(\mathbb{R}\,|\,x\psi\in L^2(\mathbb{R}\}\,.
%   \]
 \end{example}

 Now, if one regards $\mathcal{E}[\psi]$ as an average of measurements of some quantum observable, performed on the state $\psi$ of the system, a frequent occurrence is: expectations are lower semi-bounded (there exists a lower bound $m\in\mathbb{R}$ with $\mathcal{E}[\psi]\geqslant m\|\psi\|^2$ $\forall \psi\in \mathcal{D}[\mathcal{E}]$), and some special form of weak continuity holds.

 Imagine for concreteness a measurement apparatus that is fixed (like a counter saying whether a particle crosses a given spatial region at a given time, and with which energy) and a quantum system that is slightly perturbed (e.g., shifted) so as to occupy a sequence $(\psi_n)_{n\in\mathbb{N}}$ of very close states on which $\mathcal{E}$ is evaluated, namely states belonging to $\mathcal{D}[\mathcal{E}]$, up to a final configuration and hence a final state $\psi\in\cH$ with $\|\psi_n-\psi\|\to 0$ as $n\to\infty$. Assume furthermore that on the difference $\psi_n-\psi_m$, as $n$ and $m$ becomes larger and larger, the evaluation of $\mathcal{E}$ becomes smaller and smaller, that is, $\mathcal{E}[\psi_n-\psi_m]\to 0$ as $n,m\to\infty$. In such a circumstance one would like to be able to conclude, as natural as it appears, that on the limit state $\psi$ too it is possible to evaluate the expectation $\mathcal{E}[\psi]$, i.e., $\psi\in\mathcal{D}[\mathcal{E}]$, and that indeed $\mathcal{E}[\psi_n-\psi]\to 0$ as $n\to\infty$. A quadratic form with such a behaviour is said to be closed.

 Thus, a lower semi-bounded form $\mathcal{E}$ is `closed' when, if $(\psi_n)_{n\in\mathbb{N}}$ is a sequence in $\mathcal{D}[\mathcal{E}]$ with $\psi_n\to\psi$ for some $\psi\in\cH$ and with $\mathcal{E}[\psi_n-\psi_m]\to 0$ as $n,m\to\infty$, then $\psi\in\mathcal{D}[\mathcal{E}]$ and $\mathcal{E}[\psi_n-\psi]\to 0$ as $n\to\infty$.

 \begin{example}\label{ex:form-closed-nonclosed} Let $\cH=L^2(\mathbb{R})$.
 \begin{itemize}
  \item[(i)] Consider the (non-negative) quadratic form
  \[
   \mathcal{E}[\psi]\,=\,\int_\mathbb{R} x^2|\psi(x)|^2\,\ud x\,,\qquad\mathcal{D}[\mathcal{E}]\,=\,\{\psi\in L^2(\mathbb{R})\,|\,x\psi\in L^2(\mathbb{R})\}\,,
  \]
  namely the quadratic form that expresses the expectation of the observable ``position squared'', i.e., the operator $\psi\mapsto x^2\psi$. Pick a sequence $(\psi_n)_{n\in\mathbb{N}}$ in $\mathcal{D}[\mathcal{E}]$ such that $\psi_n\xrightarrow[]{(L^2)}\psi$ for some $\psi\in L^2(\mathbb{R})$ and $\mathcal{E}[\psi_n-\psi_m]\to 0$ as $n,m\to\infty$. Then $(x\psi_n)_{n\in\mathbb{N}}$ is a sequence in $L^2(\mathbb{R})$ with the Cauchy property $\| x\psi_n-x\psi_m\|^2_{L^2}=\mathcal{E}[\psi_n-\psi_m]\to 0$ and therefore, by completeness, there exists $\phi\in L^2(\mathbb{R})$ such that $x\psi_n\xrightarrow[]{(L^2)}\phi$. As $L^2$-limits are also point-wise (almost everywhere) limits, then $\phi(x)=\lim_{n\to\infty}x\psi_n(x)=x\psi(x)$ for almost every $x\in\mathbb{R}$: thus, $x\psi=\phi\in L^2(\mathbb{R})$, which proves that $\psi\in\mathcal{D}[\mathcal{E}]$. Moreover,  $\mathcal{E}[\psi_n-\psi]=\|x\psi_n-x\psi\|^2_{L^2}\to 0$. The conclusion is that the form $\mathcal{E}$ is closed. 
  \item[(ii)] Consider now the (non-negative) quadratic form $\mathcal{E}[\psi]=|\psi(0)|^2$, $\mathcal{D}[\mathcal{E}]=C(\mathbb{R})\cap L^2(\mathbb{R})$ (the square-integrable functions on $\mathbb{R}$ that are also continuous). Along the sequence $(\psi_n)_{n\in\mathbb{N}}$ in $\mathcal{D}[\mathcal{E}]$ defined by $\psi_n:=e^{-nx^2}$ one has $\|\psi_n\|_{L^2}\to 0$ as $n\to\infty$ and $\mathcal{E}[\psi_n-\psi_m]=|\psi_n(0)-\psi_m(0)|^2=|1-1|^2=0$; however, $\mathcal{E}[\psi_n]=1\to\!\!\!\!\!\!\!{/}\;\;\,0$. Hence this form is not closed.
 \end{itemize} 
 \end{example}

 The crucial point in this context is: requiring that the expectations of an observable behave as a densely defined, lower semi-bounded, closed quadratic form, forces the underlying linear operator \eqref{eq:fromFormToOp} associated with the form to be self-adjoint (and not merely hermitian). Indeed, one has the following.

 \begin{theorem}\label{thm:closed-sa}
  Let $\cH$ be a complex Hilbert space and let $\mathcal{E}$ be a quadratic form on $\cH$ such that the domain $\mathcal{D}[\mathcal{E}]$ is a dense subspace of $\cH$ and that the form is lower semi-bounded and closed. Then the operator $A$ associated with $\mathcal{E}$ through definition \eqref{eq:fromFormToOp} is self-adjoint, and $\mathcal{E}[\psi]=\langle\psi,A\psi\rangle$ for every $\psi\in\mathcal{D}(A)$. 
 \end{theorem}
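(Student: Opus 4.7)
The plan is the classical route via an auxiliary ``energy'' Hilbert space and the Riesz representation theorem. First I would reduce to the strictly positive case: if $\mathcal{E}[\psi]\geqslant m\|\psi\|^2$ on $\mathcal{D}[\mathcal{E}]$, then the shifted form $\widetilde{\mathcal{E}}:=\mathcal{E}+(1-m)\langle\cdot,\cdot\rangle$ is densely defined and closed on the same domain, satisfies $\widetilde{\mathcal{E}}[\psi]\geqslant\|\psi\|^2$, and generates an operator that differs from $A$ only by the additive scalar $(1-m)\mathbbm{1}$. Non-restrictively I may thus assume $\mathcal{E}[\psi]\geqslant\|\psi\|^2$, so that $\mathcal{E}[\cdot,\cdot]$ is a bona fide inner product on $\mathcal{D}[\mathcal{E}]$. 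I denote by $\cH_{\mathcal{E}}$ the space $\mathcal{D}[\mathcal{E}]$ endowed with this inner product and its norm $\|\cdot\|_{\mathcal{E}}=\mathcal{E}[\cdot]^{1/2}$. The closedness hypothesis on $\mathcal{E}$ is precisely the statement that $\cH_{\mathcal{E}}$ is \emph{complete}, hence a Hilbert space in its own right that embeds continuously into $\cH$ via $\|\psi\|\leqslant\|\psi\|_{\mathcal{E}}$.

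Next I would use this structure to manufacture a bounded self-adjoint operator $B$ whose inverse will turn out to be $A$. For each $\xi\in\cH$ the anti-linear functional $\phi\mapsto\langle\phi,\xi\rangle$ on $\mathcal{D}[\mathcal{E}]$ is bounded with respect to $\|\cdot\|_{\mathcal{E}}$, since $|\langle\phi,\xi\rangle|\leqslant\|\phi\|\,\|\xi\|\leqslant\|\phi\|_{\mathcal{E}}\|\xi\|$. Riesz' theorem in $\cH_{\mathcal{E}}$ therefore delivers a unique $B\xi\in\mathcal{D}[\mathcal{E}]$ with
\begin{equation*}
 \mathcal{E}[\phi,B\xi]\;=\;\langle\phi,\xi\rangle\qquad\forall\phi\in\mathcal{D}[\mathcal{E}]\,,
\end{equation*}
and the rule $\xi\mapsto B\xi$ defines an everywhere defined linear map $B:\cH\to\cH$. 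Choosing $\phi=B\xi$ and using the lower bound $\|B\xi\|^2\leqslant\mathcal{E}[B\xi]=\langle B\xi,\xi\rangle\leqslant\|B\xi\|\|\xi\|$ gives $\|B\|_{\mathrm{op}}\leqslant 1$. Hermitian symmetry of $\mathcal{E}$ yields
\begin{equation*}
 \langle B\eta,\xi\rangle\;=\;\mathcal{E}[B\eta,B\xi]\;=\;\overline{\mathcal{E}[B\xi,B\eta]}\;=\;\overline{\langle B\xi,\eta\rangle}\;=\;\langle\eta,B\xi\rangle\,,
\end{equation*}
so $B$ is hermitian, and being bounded and everywhere defined it is actually self-adjoint. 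Finally $B$ is injective: $B\xi=0$ implies $\langle\phi,\xi\rangle=0$ for every $\phi\in\mathcal{D}[\mathcal{E}]$, and density of the form domain forces $\xi=0$.

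To close the argument I would identify the operator of \eqref{eq:fromFormToOp} with $B^{-1}$. The prescription $\langle\phi,\xi_\psi\rangle=\mathcal{E}[\phi,\psi]$ is equivalent, via the defining identity for $B$, to $\mathcal{E}[\phi,B\xi_\psi]=\mathcal{E}[\phi,\psi]$ for every $\phi\in\mathcal{D}[\mathcal{E}]$, i.e.\ to $\psi=B\xi_\psi$. Hence $\mathcal{D}(A)=\ran(B)$ and $A=B^{-1}$ there. Injectivity and self-adjointness of $B$ imply $\ran(B)^\perp=\ker B=\{0\}$, so $\mathcal{D}(A)$ is dense in $\cH$; and given any $\eta\in\mathcal{D}(A^\dagger)$ with $A^\dagger\eta=\chi$, testing $\langle\chi,\phi\rangle=\langle\eta,A\phi\rangle$ on the generic $\phi=B\psi\in\ran(B)$ gives $\langle B\chi,\psi\rangle=\langle\eta,\psi\rangle$ for every $\psi\in\cH$, whence $\eta=B\chi\in\ran(B)=\mathcal{D}(A)$, so $A=A^\dagger$. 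The identity $\mathcal{E}[\psi]=\langle\psi,A\psi\rangle$ on $\mathcal{D}(A)$ is then immediate from the defining equation taken at $\phi=\psi$, and the preliminary shift is undone at the end. The genuinely delicate point in the whole argument is the very first one: translating the \emph{closedness} of $\mathcal{E}$ -- a hypothesis phrased with respect to the ambient norm $\|\cdot\|$ -- into the full \emph{completeness} of $\cH_{\mathcal{E}}$ in its intrinsic energy norm. Once this is in place, Riesz manufactures $B$ and everything else is algebra.
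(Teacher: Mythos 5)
Your argument is correct and rests on the same two pillars as the paper's proof: the completeness of the energy space $(\mathcal{D}[\mathcal{E}],\langle\cdot,\cdot\rangle_{\mathcal{E}})$, which is exactly what the closedness hypothesis delivers after the shift to lower bound $1$, and the Riesz representation theorem applied in that space. The difference is one of packaging. The paper applies Riesz to the functional $\psi\mapsto\langle\xi_0,\psi\rangle$ and reads the representing vector directly as a preimage of $\xi_0$ under $A$, concluding $\mathrm{ran}(A)=\cH$; it then deduces density of $\mathcal{D}(A)$ from surjectivity and finishes via $\ker A^\dagger=(\mathrm{ran}\,A)^\perp=\{0\}$ combined with surjectivity, the same endgame as in the proof of Theorem \ref{thm:onbEV-sa}. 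You instead name the Riesz map $B$, verify that it is an everywhere defined, bounded ($\|B\|_{\mathrm{op}}\leqslant 1$), self-adjoint, injective operator, and identify $A=B^{-1}$ --- the Friedrichs-style construction --- so that the self-adjointness of $A$ is inherited from that of the bounded operator $B$. Both routes are standard and of comparable length; yours isolates the clean general principle that the inverse of a bounded self-adjoint injection with dense range is self-adjoint, while the paper's avoids introducing the auxiliary operator. One small point to make explicit: your final computation establishes $A^\dagger\subseteq A$ (every $\eta\in\mathcal{D}(A^\dagger)$ lies in $\mathrm{ran}(B)=\mathcal{D}(A)$ with $A\eta=A^\dagger\eta$); to conclude $A=A^\dagger$ you also need the reverse inclusion $A\subseteq A^\dagger$, i.e.\ the hermiticity of $A$, which is immediate from the symmetry of $\mathcal{E}$ (equivalently from the self-adjointness of $B$) but deserves a sentence.
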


 Let us postpone the proof to Section \ref{sec:math-proofs}.
 
 %The catch from Theorem \ref{thm:closed-sa} ****

 For completeness of presentation, it is instructive to also mention an alternative, equivalent characterisation of lower semi-bounded, closed quadratic form, because it too has quite a transparent interpretation of how expectations of a quantum observable behave along a convergent sequence of states.

 \begin{proposition}\emph{[See, e.g. \cite[Proposition 10.1]{schmu_unbdd_sa}.]}
   A lower semi-bounded quadratic form $\mathcal{E}$ on a complex Hilbert space $\cH$ is closed \emph{if and only if} the map $\psi\mapsto\mathcal{E}[\psi]$ is lower semi-continuous, meaning that for \emph{every} convergent sequence $(\psi_n)_{n\in\mathbb{N}}$ one has
 \begin{equation}\label{eq:lowersemicont}
  \mathcal{E}\Big[\lim_{n\to\infty}\psi_n\Big]\,\leqslant\,\liminf_{n\to\infty}\mathcal{E}[\psi_n]
 \end{equation}
 (having tacitly extended $\mathcal{E}$ on the whole $\cH$ by setting $\mathcal{E}[\psi]=+\infty$ for all those vectors that do not belong to $\mathcal{D}[\mathcal{E}]$, in order for the above inequality to be consistent).
 \end{proposition}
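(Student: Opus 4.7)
The plan is to reduce the statement to the completeness of a naturally associated inner-product space, via the standard device of the ``form norm''. First I would shift $\mathcal{E}$ by a scalar multiple of the identity so as to assume, without loss of generality, $\mathcal{E}[\psi]\geqslant\|\psi\|^2$ for every $\psi\in\mathcal{D}[\mathcal{E}]$: replacing $\mathcal{E}$ by $\mathcal{E}+(1-m)\|\cdot\|^2$ affects neither the closedness nor the lower semi-continuity, since $\|\cdot\|^2$ is $\cH$-continuous and the extra term is harmless in the Cauchy condition. In this reduced setting $\mathcal{E}[\cdot,\cdot]$ is a positive-definite inner product on $\mathcal{D}[\mathcal{E}]$, yielding the form norm $\|\psi\|_\mathcal{E}:=\sqrt{\mathcal{E}[\psi]}$, which dominates the ambient $\cH$-norm.

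The implication \emph{lower semi-continuity $\Rightarrow$ closedness} is then easy. Given a sequence $(\psi_n)\subset\mathcal{D}[\mathcal{E}]$ with $\psi_n\to\psi$ in $\cH$ and $\mathcal{E}[\psi_n-\psi_m]\to 0$, fix $\varepsilon>0$ and choose $N$ so large that $\mathcal{E}[\psi_n-\psi_m]<\varepsilon$ for $n,m\geqslant N$. For each fixed $n\geqslant N$ the sequence $(\psi_n-\psi_m)_{m\geqslant N}$ lies in $\mathcal{D}[\mathcal{E}]$ and converges in $\cH$ to $\psi_n-\psi$, so lower semi-continuity yields $\mathcal{E}[\psi_n-\psi]\leqslant\liminf_m\mathcal{E}[\psi_n-\psi_m]\leqslant\varepsilon$. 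In particular $\mathcal{E}[\psi_n-\psi]<+\infty$, forcing $\psi_n-\psi\in\mathcal{D}[\mathcal{E}]$ and hence $\psi\in\mathcal{D}[\mathcal{E}]$, and since $\varepsilon$ is arbitrary, $\mathcal{E}[\psi_n-\psi]\to 0$.

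The reverse direction \emph{closedness $\Rightarrow$ lower semi-continuity} I would handle with a weak-compactness argument. Closedness translates, in the shifted setting, into the statement that $(\mathcal{D}[\mathcal{E}],\|\cdot\|_\mathcal{E})$ is a Hilbert space: any $\|\cdot\|_\mathcal{E}$-Cauchy sequence is also $\cH$-Cauchy, hence admits a $\cH$-limit $\psi$, which closedness places in $\mathcal{D}[\mathcal{E}]$ with $\|\psi_n-\psi\|_\mathcal{E}\to 0$. Given now $\psi_n\to\psi$ in $\cH$, I may assume $L:=\liminf_n\mathcal{E}[\psi_n]<+\infty$ (otherwise the target inequality is trivial) and pass to a subsequence along which $\mathcal{E}[\psi_n]\to L$, so that $(\psi_n)$ is bounded in the form-Hilbert space. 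Reflexivity then produces a further subsequence $(\psi_{n_k})$ converging weakly, with respect to $\mathcal{E}[\cdot,\cdot]$, to some $\phi\in\mathcal{D}[\mathcal{E}]$. For every $\eta\in\cH$ the functional $\chi\mapsto\langle\eta,\chi\rangle$ is bounded on $(\mathcal{D}[\mathcal{E}],\|\cdot\|_\mathcal{E})$, since $|\langle\eta,\chi\rangle|\leqslant\|\eta\|\,\|\chi\|\leqslant\|\eta\|\,\|\chi\|_\mathcal{E}$, so weak form-convergence descends to weak $\cH$-convergence; by uniqueness of weak limits combined with the strong convergence $\psi_{n_k}\to\psi$ in $\cH$, one gets $\phi=\psi$, whence $\psi\in\mathcal{D}[\mathcal{E}]$. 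Lower semi-continuity of the Hilbertian norm under weak convergence then gives $\mathcal{E}[\psi]=\|\psi\|_\mathcal{E}^2\leqslant\liminf_k\|\psi_{n_k}\|_\mathcal{E}^2=L$, as desired.

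The main obstacle is the weak-compactness step of the second implication: one must both extract a weakly form-convergent subsequence (which relies on the reflexivity supplied by closedness) and identify its weak limit with the prescribed strong $\cH$-limit $\psi$ (which hinges on the form topology being strictly finer than the $\cH$-topology). Once these two points are secured, the argument closes by the classical lower semi-continuity of a Hilbert-space norm under weak convergence, and the preliminary shift ensures that no sign issues spoil the conclusion.
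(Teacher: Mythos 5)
Your proof is correct. Note that the paper itself does not prove this Proposition: it is stated with a bare pointer to \cite[Proposition 10.1]{schmu_unbdd_sa}, so there is no in-text argument to compare against; what you have written is a complete, self-contained substitute. Your first implication (lower semi-continuity $\Rightarrow$ closedness), applied to the sequences $(\psi_n-\psi_m)_{m}$ and using the $+\infty$ convention to conclude $\psi_n-\psi\in\mathcal{D}[\mathcal{E}]$, is exactly the standard argument. For the converse you first observe that closedness amounts to completeness of $(\mathcal{D}[\mathcal{E}],\|\cdot\|_{\mathcal{E}})$ --- which is precisely the observation the paper does make and exploit in its proof of Theorem \ref{thm:closed-sa} --- and then run the classical weak-compactness argument: bounded sequences in the form Hilbert space have weakly convergent subsequences, the continuous embedding $(\mathcal{D}[\mathcal{E}],\|\cdot\|_{\mathcal{E}})\hookrightarrow\cH$ forces the weak form-limit to coincide with the prescribed strong $\cH$-limit, and the norm of a Hilbert space is weakly lower semi-continuous. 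Two points deserve one extra line each but are cosmetic: after passing to a subsequence realizing the finite $\liminf$ $L$ you should note that its tail automatically lies in $\mathcal{D}[\mathcal{E}]$ (since eventually $\mathcal{E}[\psi_{n_k}]<L+1<+\infty$); and the preliminary shift by $(1-m)\|\cdot\|^2$ preserves lower semi-continuity because $\|\psi_n\|^2$ actually converges, so the $\liminf$ splits over the sum. For completeness, an alternative proof of the second implication that avoids weak compactness writes a closed lower semi-bounded form as the supremum of an increasing family of bounded, hence $\cH$-continuous, forms and uses that a supremum of continuous functions is lower semi-continuous; your route is equally standard and arguably more elementary in its prerequisites.
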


 For instance, with reference to the quadratic form $\mathcal{E}$ of Example \ref{ex:form-closed-nonclosed}(ii) above, the sequence with $\psi_n(x)=e^{-x^2}(1-e^{-nx^2})$ has the properties $\psi_n\in\mathcal{D}[\mathcal{E}]$, $\mathcal{E}[\psi_n]=0$, $\psi_n\to e^{-x^2}$, $\mathcal{E}[\lim_{n\to\infty}\psi_n]=\mathcal{E}[e^{-x^2}]=1$, and therefore inequality \eqref{eq:lowersemicont} is violated (one would have $1\leqslant 0$).

\subsection{Self-adjointness to ensure the expansion of any state in generalised eigenfunctions of the observable}\label{sec:generalized-EF}~

 As a counterpart and follow-up of Subsect.~\ref{sec:OBN-estates}, let us discuss one last instance where the actual self-adjointness of quantum observables (and not just their mere hermiticity) is crucial for certain fundamental physical requirements of the theory.

 Here we admittedly revert the previous line of reasoning: we assume in the first place that quantum observables are self-adjoint and discuss one physically notable consequence of such assumption. Since the mere assumption of hermiticity would not suffice to deduce such consequence, this discussion is meant to shed further light on the physical role of self-adjointness in quantum mechanics.

 We refer to the possibility, required at various stages of standard physical discussions on the subject (see, e.g., \cite[Sect.~3.10-3.11]{Schiff-QM-1968}), to expand a generic state $\psi\in\cH$ not just as a discrete series of eigenstates, if any, of a given observable $A$, but also as a continuous expansion in terms of new objects that, while not being vectors in $\cH$, yet behave as eigenvectors \emph{of the sole formal action} of $A$. This possibility has relevance both technically, for computational manipulations, and conceptually, for the interpretation given to the coefficients of such expansion. Let us stress that two notions of ``eigenvectors'' are involved now: proper physical states $\psi\in\mathcal{D}(A)$ such that $A\psi=a\psi$ for some $a\in\mathbb{R}$, and generalised objects (in practice: non-square-integrable functions) that only satisfy the eigenvalue/eigenvector problem for the formal action of $A$. The standard concrete example below clarifies the picture.

%  
%  demand that, given an observable $A$ acting on a Hilbert space $\cH$, one can express a generic physical state $\psi\in\cH$, as a linear combination of eigenstates of $A$ (see, e.g., \cite[Sect.~3.10]{Schiff-QM-1968}). Such assertion, that implicitly dates back even to the old quantum theory and Bohr's postulates, is intimately connected with the role that the axioms assign to the eigenfunctions and the eigenvalues of $A$ in the measurement process.
%  
%  The typical physical discussion at this point distinguishes between the ordinary series expansion of $\psi$ over a countable basis of eigenvectors, and a continuous expansion in terms of ``generalised eigenstates''. In fact, a quantum observable may admit an orthonormal basis of eigenstates in $\cH$ as well as no eigenstates at all. 

 \begin{example}\label{ex:expansions-discr-cont}~
 
 \begin{itemize}
  \item[(i)] With respect to $\cH=L^2(\mathbb{R})$, the harmonic oscillator's self-adjoint Hamiltonian previously considered in Example \ref{ex:qao} does have in its domain the orthonormal basis $(\psi_n)_{n\in\mathbb{N}_0}$ of Hermite functions, and consequently a generic $\psi\in L^2(\mathbb{R})$ can be written as
 \begin{equation}\label{eq:discrete-exp}
  \psi\;=\;\sum_{n=0}^\infty c_n \psi_n\,,\qquad c_n\;:=\;\langle\psi_n,\psi\rangle\;=\;\int_{\mathbb{R}}\overline{\psi_n(x)}\,\psi(x)\,\ud x\,.
 \end{equation}
 \item[(ii)] Instead, the self-adjoint momentum operator
 \[
  P\;=\;-\ii\frac{\ud}{\ud x}\,,\qquad\mathcal{D}(P)\;=\;\mathscr{H}^1(\mathbb{R})
 \]
 has no square-integrable eigenfunctions at all. In this case, however, one can represent a generic $\psi\in L^2(\mathbb{R})$ in terms of its Fourier transform $\widehat{\psi}\in L^2(\mathbb{R})$ as
 \begin{equation}\label{eq:Ftransfdef}
  \psi(x)\;=\;\frac{1}{\sqrt{2\pi}}\int_\mathbb{R}e^{\ii x p}\,\widehat{\psi}(p)\,\ud p\,,
 \end{equation}
 which in analogy to \eqref{eq:discrete-exp} can be interpreted as
 \begin{equation}\label{eq:cont-exp}
  \psi(x)\,=\,\int_\mathbb{R}c(p)\,\psi_p(x)\,\ud p\,,\quad \psi_p(x)\,:=\,\frac{e^{\ii p x}}{\sqrt{2\pi}}\,,\quad c(p)\,:=\,\int_\mathbb{R}\overline{\psi_p(x)}\,\psi(x)\,\ud x\,.
 \end{equation}
 The plane waves $\psi_p$, $p\in\mathbb{R}$, have the natural interpretation of ``generalised'' eigenfunctions of $P$, in that $-\ii\frac{\ud}{\ud x}\psi_p=p\psi_p$ but $\psi_p\notin L^2(\mathbb{R})$, and the expansion \eqref{eq:cont-exp} provides a continuous counterpart to \eqref{eq:discrete-exp}. (In \eqref{eq:Ftransfdef} one requires the customary technical caveat that such expression is only well-defined when $\widehat{\psi}$ has good integrability properties, e.g., $\widehat{\psi}\in L^1(\mathbb{R})\cap L^2(\mathbb{R})$, whereas in general it must be understood as an $L^2$-limit of integrals over $[-L,L]$ with $L\to +\infty$. Analogous considerations apply to \eqref{eq:cont-exp}.)
 \end{itemize}
 \end{example}

 Keeping for concreteness the setting $\cH=L^2(\mathbb{R})$ of Example \ref{ex:expansions-discr-cont}, we consider the following physical request for an operator $A$ acting on $\cH$ to be interpreted as quantum observable.

 One requires $A$ to admit an at most countable set of functions $\psi_n\in L^2(\mathbb{R})$ and a continuum of non-$L^2$ functions $\psi_\lambda$ (thus, with discrete label $n$ and continuum label $\lambda$) such that:
 \begin{itemize}
  \item[(i)] each $\psi_n$ is a normalised eigenfunction of $A$, say, $A\psi_n=E_n\psi_n$ with real eigenvalues $E_n$ (the $E_n$'s are said to constitute the `point spectrum' of $A$),
  \item[(ii)] each $\psi_\lambda$ is an eigenfunction of the \emph{formal action} of $A$, say, $A\psi_\lambda=E(\lambda)\psi_\lambda$ with real eigenvalues $E_\lambda$ (the $E_\lambda$'s are said to constitute the `continuous spectrum' of $A$),
  \item[(iii)] and furthermore any $\psi\in L^2(\mathbb{R})$ can be expressed as
 \begin{equation}\label{eq:cont-disc-exp}
  \psi\;=\;\sum_n c_n\psi_n+\int c(\lambda)\,\psi_\lambda\,\ud\lambda
 \end{equation}
 (in the sense of $L^2$-convergent series and integral) with coefficients
 \begin{equation}
  c_n\;=\;\int_{\mathbb{R}}\overline{\psi_n(x)}\,\psi(x)\,\ud x\,,\qquad c(\lambda)\;=\;\int_{\mathbb{R}}\overline{\psi_\lambda(x)}\,\psi(x)\,\ud x\,.
 \end{equation}
 \end{itemize}

 This is interpreted by saying that (for a normalised state $\psi$ of the considered quantum system) $|c_n|^2$ expresses the probability of finding the value $E_n$ of the point spectrum when measuring the observable $A$ on $\psi$, and $|c(\lambda)|^2$ gives the probability density to measure the value $E_\lambda$ of the continuum spectrum.
 %, with $\sum_n|c_n|^2+\int|c(\lambda)|^2\ud\lambda=1$.

 As a matter of fact, the above requirements are matched, and the expansion \eqref{eq:cont-disc-exp} is made precise and consistent, only when the operator $A$ under consideration is self-adjoint with respect to the underlying Hilbert space $\cH$. The sole hermiticity would not be sufficient.

 Rigorous formulations of results of this sort require a beautiful and well established mathematical apparatus (spectral theorem, rigged Hilbert spaces, nuclear spaces, nuclear theorem, theory of generalised functions) that is not part of the minimal background of physicists exposed to early courses of quantum mechanics (and here we are even avoiding dealing with the precise definition of spectrum $\sigma(A)$ of $A$ and its basic properties (see, e.g., \cite[Sect.~VIII.1]{rs1}). We refer to the exhaustive analysis developed in full rigour across the 1950's and 1960's for abstract self-adjoint operators by Gel$'$fand and Vilenkin \cite[Chapt.~I]{Gelfand-Vilenkin-1964}, Gel$'$fand and Shilov \cite[Chapt.~IV]{Gelfand-Shilov-1967}, Berezans$'$ki\u{\i} \cite[Chapt.~5]{Berezanskii-1968}, with also specific results by Browder, Garding, Kac, Povzner (see \cite[Sect.~C.5]{Simon-82-Schroedinger_semigroups} and references therein). We also refer to Berezin and Shubin \cite[Chapt.~S1.2]{Berezin-Shubin-1983} and Simon \cite[Sect.~C.5]{Simon-82-Schroedinger_semigroups} for subsequent discussions, especially for Schr\"{o}dinger self-adjoint operators on $L^2$-space. A somewhat more direct abstract construction was later established by Poerschke, Stolz, and Weidmann \cite{Poerschke-Stolz-Weidmann-1989}.

 For our presentation, we select the following explicit statement, that has the advantage of referring to the much more familiar classes of Schwartz functions $\mathcal{S}(\mathbb{R}^d)$ and distributions $\mathcal{S}'(\mathbb{R}^d)$ (namely the topological dual of $\mathcal{S}(\mathbb{R}^d)$): these are indeed mathematical notions that physicists encounter pretty early in their training.

  \begin{theorem}\label{thm:resized}
  Let $d\in\mathbb{N}$ and let $A$ be a self-adjoint operator with respect to the Hilbert space $L^2(\mathbb{R}^d)$ such that  $\mathcal{S}(\mathbb{R}^d)\subset\mathcal{D}(A)$ and $A\mathcal{S}(\mathbb{R}^d)\subset \mathcal{S}(\mathbb{R}^d)$. Then $A$ admits a collection $(F_\lambda)_{\lambda\in\mathcal{L}}$ of elements of $\mathcal{S}'(\mathbb{R}^d)$ and a collection $(E_\lambda)_{\lambda\in\mathcal{L}}$ of reals, where $\mathcal{L}$ is a possibly continuous index set, such that
    \begin{equation}\label{eq:geneigenfFlambda}
    F_\lambda(A\varphi)\,=\,E_\lambda F_\lambda(\varphi)\qquad \forall\varphi\in \mathcal{S}(\mathbb{R}^d)\,,\quad\forall\lambda\in\mathcal{L}\,,
   \end{equation}
  and 
     \begin{equation}\label{eq:Flambdacompleteness}
    F_\lambda(\varphi)\,=\,0\quad\forall\lambda\in\mathcal{L}\qquad\Rightarrow\qquad \varphi\,\equiv\,0\,.
   \end{equation}
 \end{theorem}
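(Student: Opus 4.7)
The plan is to invoke what is essentially the Gel$'$fand–Maurin (nuclear) spectral theorem, which combines the standard spectral theorem for self-adjoint operators with the nuclearity of $\mathcal{S}(\mathbb{R}^d)$. First, using the spectral theorem applied to $A$, one obtains a scalar spectral measure $\mu$ supported on $\sigma(A)\subset\mathbb{R}$ and a unitary isomorphism $U:L^2(\mathbb{R}^d)\to\int^{\oplus}_{\sigma(A)}\mathcal{H}_\lambda\, d\mu(\lambda)$ onto a direct integral of Hilbert spaces, under which $A$ is intertwined with multiplication by $\lambda$ in each fiber $\mathcal{H}_\lambda$. In particular, every $\psi\in\mathcal{D}(A)$ is represented by a measurable field $\lambda\mapsto\psi_\lambda\in\mathcal{H}_\lambda$ with $(A\psi)_\lambda=\lambda\,\psi_\lambda$ for $\mu$-almost every $\lambda$.

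Next, I would upgrade the $\mu$-almost-everywhere defined ``evaluation at the fiber'' map $\psi\mapsto\psi_\lambda$ to an honest family of tempered distributions $F_\lambda\in\mathcal{S}'(\mathbb{R}^d)$ by exploiting the nuclearity of the embedding $\mathcal{S}(\mathbb{R}^d)\hookrightarrow L^2(\mathbb{R}^d)$. A concrete realisation uses the harmonic-oscillator operator $N:=-\Delta+|x|^2+1$: its negative powers $N^{-s}$ are Hilbert–Schmidt (in fact trace-class) on $L^2(\mathbb{R}^d)$ for $s>d/2$, map $L^2(\mathbb{R}^d)$ continuously into $\mathcal{S}(\mathbb{R}^d)$, and satisfy $N\mathcal{S}(\mathbb{R}^d)=\mathcal{S}(\mathbb{R}^d)$. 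Combining this Hilbert–Schmidt property with the spectral-measurable structure of the direct integral, one may select, for $\mu$-a.e.~$\lambda\in\mathcal{L}:=\sigma(A)$, a continuous linear functional $F_\lambda:\mathcal{S}(\mathbb{R}^d)\to\mathbb{C}$ representing $\varphi\mapsto\varphi_\lambda$. Setting $E_\lambda:=\lambda$, the intertwining with multiplication in the direct integral then yields
\begin{equation*}
 F_\lambda(A\varphi)\,=\,\lambda\,F_\lambda(\varphi)\,=\,E_\lambda F_\lambda(\varphi)\qquad\forall\varphi\in\mathcal{S}(\mathbb{R}^d)\,,
\end{equation*}
valid outside an exceptional $\mu$-null set, which, after being discarded from $\mathcal{L}$, is precisely \eqref{eq:geneigenfFlambda}.

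For the completeness statement \eqref{eq:Flambdacompleteness}, if $F_\lambda(\varphi)=0$ for every $\lambda\in\mathcal{L}$, then the representing section $\lambda\mapsto\varphi_\lambda$ vanishes in each fiber, so $U\varphi=0$ in the direct integral and hence $\varphi=0$ in $L^2(\mathbb{R}^d)$; being a Schwartz function, $\varphi$ is in particular continuous, so this forces $\varphi\equiv 0$ pointwise.

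The main technical obstacle is the construction in the second step: producing the family $(F_\lambda)_{\lambda\in\mathcal{L}}$ in $\mathcal{S}'(\mathbb{R}^d)$ in a way that is coherent across a common dense domain. The abstract fiber evaluations $\psi\mapsto\psi_\lambda$ are \emph{a priori} defined only on a full-$\mu$-measure subspace of $L^2$ that depends on $\lambda$, and without further structure they need not be continuous in any natural Fr\'echet topology. It is precisely the nuclearity of the embedding $\mathcal{S}(\mathbb{R}^d)\hookrightarrow L^2(\mathbb{R}^d)$, concretely encoded by the trace-class operator $N^{-s}$, that permits a joint measurable selection of the $F_\lambda$'s enjoying $\mathcal{S}$-continuity for $\mu$-a.e.~$\lambda$, thereby making the generalised eigenfunction expansion rigorous.
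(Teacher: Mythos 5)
A point of calibration first: the paper does not actually prove Theorem \ref{thm:resized}. It explicitly declares the required apparatus (spectral theorem, rigged Hilbert spaces, nuclear spaces) beyond its intended scope and refers the reader to Gel$'$fand--Shilov, Gel$'$fand--Vilenkin, Berezanski{\u\i}, and Poerschke--Stolz--Weidmann. Your outline is precisely the strategy of those references --- direct-integral diagonalisation of $A$, followed by the Gel$'$fand--Kostyuchenko disintegration of a Hilbert--Schmidt map such as $N^{-s}\colon L^2(\mathbb{R}^d)\to\int^{\oplus}\mathcal{H}_\lambda\,\ud\mu(\lambda)$ into almost-everywhere defined bounded fibre maps --- so as a roadmap it is the right one, and invoking the nuclearity lemma rather than proving it is acceptable at this level of detail.

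Two steps nevertheless need repair. First, multiplicity: a single scalar functional $F_\lambda$ per spectral point cannot ``represent $\varphi\mapsto\varphi_\lambda$'' when $\dim\mathcal{H}_\lambda>1$, and your completeness argument breaks exactly there --- from $F_\lambda(\varphi)=0$ you cannot conclude that the vector $\varphi_\lambda$ vanishes in its fibre. For $A=-\Delta$ on $L^2(\mathbb{R}^d)$ with $d\geqslant 2$ the fibre over $\lambda>0$ is an $L^2$ space over the sphere $|p|^2=\lambda$, and one functional per $\lambda$ annihilates a large subspace of Schwartz functions, so \eqref{eq:Flambdacompleteness} would fail. The fix is cheap, since the statement allows $\mathcal{L}$ to be an arbitrary index set: take $\mathcal{L}$ to run over pairs $(\lambda,j)$ with $(e_{\lambda,j})_j$ an orthonormal basis of (the closure of the range of the fibre map in) $\mathcal{H}_\lambda$ and set $E_{(\lambda,j)}=\lambda$ --- but this must be said. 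Second, the exceptional null set: the identity $(UA\varphi)_\lambda=\lambda(U\varphi)_\lambda$ holds off a $\mu$-null set that a priori depends on $\varphi$, and $\mathcal{S}(\mathbb{R}^d)$ is uncountable, so ``discarding an exceptional null set from $\mathcal{L}$'' is not yet licensed. One fixes a countable dense subset of $\mathcal{S}(\mathbb{R}^d)$, obtains a single co-null set on which \eqref{eq:geneigenfFlambda} holds for all its members, and extends by continuity; this last step requires $\varphi\mapsto F_\lambda(A\varphi)$ to be $\mathcal{S}$-continuous, which follows because the hypotheses ($A$ closed on $L^2$ by self-adjointness, $A\mathcal{S}(\mathbb{R}^d)\subset\mathcal{S}(\mathbb{R}^d)$) give continuity of $A\colon\mathcal{S}(\mathbb{R}^d)\to\mathcal{S}(\mathbb{R}^d)$ via the closed graph theorem for Fr\'echet spaces. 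With these two emendations your sketch matches the classical proof the paper points to.
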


 The above statement is only scratching the surface of a much richer result (see Gel$'$fand and Shilov \cite[Sect.~IV.5, Theorems 1 and 2]{Gelfand-Shilov-1967} or \cite[Sect.~I.4]{Gelfand-Vilenkin-1964}), but at least in this re-sized version it presents a straightforward interpretation: \eqref{eq:geneigenfFlambda} expresses, in the duality sense of distributions, the eigenvalue equation
 \[
  AF_\lambda\,=\,E_\lambda F_\lambda\,, \qquad \lambda\in\mathcal{L}\,,
 \]
 whence the nomenclature of generalised eigenfunctions for $A$ for the $F_\lambda$'s, and \eqref{eq:Flambdacompleteness} expresses their completeness. In Example \ref{ex:expansions-discr-cont}, where $A$ is the one-dimensional self-adjoint momentum observable, the $F_\lambda$'s are precisely the plane waves denoted as $\psi_p$ therein: they are not square-integrable, yet belong to $\mathcal{S}'(\mathbb{R})$. Completeness of the plane waves amounts to saying that if the Fourier transform $\widehat{\varphi}$ of some $\varphi\in\mathcal{S}(\mathbb{R})$ vanishes, so does $\varphi$.

 As just mentioned, various other properties of physical relevance emerge in the setting of Theorem \ref{thm:resized}, except that their rigorous formulation requires a considerable amount of mathematical weaponry. In particular, the generalised eigenvalue $E_\lambda$ do exhaust the spectrum of $A$ in a precise measure-theoretic sense, and moreover, 
 which was the original motivation, it is possible to expand a generic $\varphi\in\mathcal{S}(\mathbb{R}^d)$ as a suitable integral of the generalised eigenfunctions $F_\lambda$, much in the same spirit as the concrete expansion \eqref{eq:cont-exp} and the heuristic expansion \eqref{eq:cont-disc-exp} above (only, for such integral, the measure is \emph{not} necessarily the Lebesgue measure as in \eqref{eq:cont-exp} and need be constructed from the spectral measure of $A$).

\section{The self-adjointness problem in quantum mechanics: \\ a partial retrospective review}\label{sec:selfadjproblem}

 It is fair to claim (see, e.g., \cite[Sect.~VIII.11]{rs1}) that among the mathematical problems in quantum mechanics the self-adjointness problem is conceptually the first that need be settled, prior to embarking on the spectral, dynamical, and scattering analysis of the considered quantum system. It consists of characterising a formal Hamiltonian, whose action is initially dictated by physical reasonings such as first quantisation, as a self-adjoint operator acting on the underlying Hilbert space, thus declaring a valid domain of self-adjointness. And, in those cases where the same formal action can be associated with distinct self-adjoint realisations, the problem is to characterise and investigate the various realisations in order to distinguish among the different physics modelled by each such observables.
 
 We devote this brief Section to a concise list of the most relevant self-adjointness problems that have been solved in the past or are object of current investigation.

 For a vast majority of quantum Hamiltonians of interest, their self-adjointness is indeed already established today, which explains why physicists do not usually have to bother with it. Yet, we believe it is instructive, in the the pedagogical presentation we are proposing, both to make one aware that the proof of self-adjointness of typical quantum Hamiltonians has been a non-trivial task on top of past research agendas, and to stress that the problem is still active today for quantum models of theoretical or applied relevance.

 Here is a (non-exhaustive) list of the most relevant categories of quantum observables in terms of the corresponding self-adjointness problem.

 \textbf{I. Molecular Hamiltonians.} Non-relativistic Hamiltonians for ordinary atoms and molecules were first proved to be self-adjoint by Kato in 1951 \cite{Kato-1951}. Additional references and details are in the notes to \cite[Chapter X]{rs2}, in \cite[Chapter 1]{Cycon-F-K-S-Schroedinger_ops}, and in \cite{Simon-Schr-XX}.
 
 \textbf{II. External magnetic fields.} Self-adjointness of non-relativistic Hamiltonians given by Schr\"{o}dinger operators minimally coupled with an external magnetic vector field: the first general proof (in terms of generality of assumptions) was established by Leinfelder and Simander in 1981 \cite{Leinfelder-Simander-81}. Additional references and details in \cite{Cycon-F-K-S-Schroedinger_ops,Simon-Schr-XX}.

 \textbf{III. Aharonov-Bohm Hamiltonians.} Clearly this is closely connected to the previous point: the self-adjointness problem for Aharonov-Bohm Hamiltonians (thus, with external magnetic field) results in the study of a family of self-adjoint realisations, by now well understood in a variety of settings: details and references in 
 \cite{Dabrowski-Stovicek-1997-AharonovBohm,Adami-Teta-1998-AharonovBohm,Oliveira-Pereira-2008-AharonovBohm}.

 \textbf{IV. Dirac operators.} Self-adjointness proved for one-body free Dirac operators, as well as Dirac operators with external fields, in particular with Coulomb interaction: a research line that became particularly active in the 1970's, until contemporary times. References and details in \cite{Thaller-Dirac-1992,Gallone-AQM2017,MG_DiracCoulomb2017}. The proof of a self-adjoint realisation of the two-body Dirac-Coulomb Hamiltonian in three dimensions is much more recent: \cite{Deckert-Oelker__2DiracCoulomb}.
 
 \textbf{V. Contact interactions.} Self-adjointness problem solved for Hamiltonians of one non-relativistic quantum particle subject to a contact interaction supported at one point or at a collection of distinct points, as well as Hamiltonians of two non-relativistic quantum particles coupled by an interaction of zero range: a complex of investigations mainly developed in the 1970's and early 1980's. References and details in \cite{albeverio-solvable}.
 
 \textbf{VI. Non-relativistic particles on metric graphs.} Hamiltonians for non-relativistic particles constrained on a metric graph (``metric'' here indicating that each edge is isometric to a segment $(0,L)$ for some $L\in\mathbb{R}^+\cup\{+\infty\}$: sometimes this structure is induced by a global metric, as when the graph is embedded in $\mathbb{R}^d$), either moving freely on each edge or possibly subject to additional external interaction.  The original underlying physical model were aromatic hydrocarbon molecules, more modern applications are quantum wires of semiconductors, carbon nanotubes, etc. Their self-adjointness was first characterised, in terms of local boundary conditions at the graph's vertices, by Kostrykin and Schrader \cite{Kostrykin-Schrader-1999}, and by Kurasov and Stenberg \cite{Kurasov-Stenberg-JPA-2002}, in the late 1990's and early 2000's. Further details and references in \cite[Sect.~K.4.2]{albeverio-solvable} and \cite{Kuchment-quantumgraphs-2008}.

 \textbf{VII. Multi-particle systems with zero-range interactions.} The study of Hamiltonians for $N$-body particle systems, $N\geqslant 3$, subject to a two-body interaction with zero-range has occupied a central position in the mathematical physics research agenda since the late 1980's (with precursors in the late 1960's), and is being boosted by current experimental advances in preparing cold atoms with such effective interactions \cite{Braaten-Hammer-2006,Naidon-Endo-Review_Efimov_Physics-2017}. The self-adjointness problem has only been solved in partial cases, depending on the number of particles and the content of bosons and fermions: details and references in \cite{Minlos-1987,Minlos-Shermatov-1989,Kuperin-Makarov-Merk-Motovilov-Pavlov-1989-JMP1990,DFT-1994,CDFMT-2012,CDFMT-2015,MO-2016,MO-2017,M2020-BosonicTrimerZeroRange}.
 %Griesemer-Hofacker-Linden-2019,
  
 \textbf{VIII. Interaction supported on curves or surfaces.} Self-adjointness proved, along a recent mainstream, for non-relativistic or semi-relativistic quantum Hamiltonians for particles subject to an interaction only supported on lower-dimensional sets, such as Landau Hamiltonians with $\delta$-potentials supported on curves, Dirac operators with $\delta$-shell interactions, Schr\"{o}dinger Hamiltonians with interaction supported at surfaces \cite{Behrndt-Langer-Loto-AHP2013,Behrndt-Exner-Holtzmann-Loto-2019-Diracshell,Behrndt-Exner-Holtzmann-Loto-2020-LandauHamilt}

 \textbf{IX. Continuous models of topological quantum phases.} Self-adjointness solved in the recently flourishing subject of topological quantum phases in artificial anon-structures or in bulk crystals \cite{Bernevig-TopIns-2013,Shen2013-TopInsul}, when the Hamiltonian is studied in the continuous limit (field theoretic) representation. In this setting, at a technical level one faces the self-adjointness problem for certain Dirac operators: for a recent application to a topological quantum wire see \cite{Ahari-Ortiz-Seradjeh-SelfAdjTopolQphases2016} and references therein.

\section{Some mathematical proofs}\label{sec:math-proofs}

As announced, we postponed to this Section some mathematical proofs whose essence, if not their entirety, can be presented as instructive side material to a physical audience that already possesses some standard functional-analytic and operator-theoretic tools.

\begin{proof}[Proof of the estimate in Example \ref{Popas-example}]
 Let us simplify for $\ii$ in ``$QP-PQ=\ii\mathbbm{1}$'' and re-interpret it as ``$[X,D]=\mathbbm{1}$'' (``$D$'' here is a reminder for derivative). The assumption then takes the form  
  \[
  \big\|[X,D]-\mathbbm{1}\big\|_{\mathrm{op}}\;\leqslant\;\varepsilon\,.
 \]
 By multiplying $D$ by a constant and dividing $X$ by the same constant, one may normalise $\|X\|_{\mathrm{op}}=\frac{1}{2}$. The estimate to prove takes the form $\|D\|_{\mathrm{op}}\geqslant\log\frac{1}{\varepsilon}$.
 Set $E:=[X,D]-\mathbbm{1}$. Then $\|E\|_{\mathrm{op}}\leqslant\varepsilon$ and, by standard induction,
 \[
  [X,D^n]\;=\;nD^{n-1}+D^{n-1}E+D^{n-2}ED+\cdots+ED^{n-1}\qquad\forall n\in\mathbb{N}\,.
 \]
 The triangular inequality then yields
 \[
  n\|D^{n-1}\|_{\mathrm{op}}\;\leqslant\;\|[X,D^n]\|_{\mathrm{op}}+ n\varepsilon\|D\|^{n-1}\qquad\forall n\in\mathbb{N}\,.
 \]
 We can further estimate $\|[X,D^n]\|_{\mathrm{op}}\leqslant\|D^n\|_{\mathrm{op}}$, having used the triangular inequality again and the fact that $\|X\|_{\mathrm{op}}=\frac{1}{2}$. Therefore,
  \[
  n\|D^{n-1}\|_{\mathrm{op}}\;\leqslant\;\|D^n\|_{\mathrm{op}}+ n\varepsilon\|D\|^{n-1}\qquad\forall n\in\mathbb{N}\,.
 \] 
 Dividing both sides by $n!$ and summing in $n$ we get 
 \[
  \sum_{n=1}^{\infty}\frac{\;\|D^{n-1}\|_{\mathrm{op}}}{(n-1)!}\;\leqslant\;\sum_{n=1}^{\infty}\frac{\;\|D^{n}\|_{\mathrm{op}}}{n!}+\varepsilon\sum_{n=1}^{\infty}\frac{\;\|D^{n-1}\|_{\mathrm{op}}}{(n-1)!}\,,
 \]
 i.e.,
  \[
  \sum_{n=0}^{\infty}\frac{\;\|D^n\|_{\mathrm{op}}}{n!}\;\leqslant\;\sum_{n=1}^{\infty}\frac{\;\|D^{n}\|_{\mathrm{op}}}{n!}+\varepsilon \sum_{n=0}^{\infty}\frac{\;\|D^n\|_{\mathrm{op}}}{n!}\,.
 \]
 This implies
 \[
  1\;\leqslant\;\varepsilon\,e^{\|D\|_{\mathrm{op}}}\,,
 \]
 whence the conclusion $\|D\|_{\mathrm{op}}\geqslant\log\frac{1}{\varepsilon}$. 
\end{proof}

Let us proceed with the proof of Theorem \ref{thm:selfadj-SchrEq}. As mentioned already, one implication is based upon Stone's theorem.

\begin{theorem}[Stone's theorem]\label{thm:Stone} Let $\{U(t)\,|\,t\in\mathbb{R}\}$ be a strongly continuous one-parameter unitary group on a Hilbert space $\cH$. Then there exists a unique self-adjoint operator $A$ on $\cH$ such that $U(t)=e^{-\ii t A}$ for $t\in\mathbb{R}$.  
\end{theorem}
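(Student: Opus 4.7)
The plan is to construct the generator $A$ directly from the group, verify it is self-adjoint, and then recognise $U(t)$ as the spectral exponential $e^{-\ii tA}$. As a first step I would set
\[
 \mathcal{D}(A)\;:=\;\Big\{\psi\in\cH\,\Big|\,\textstyle\lim_{t\to 0}\tfrac{1}{t}(U(t)\psi-\psi)\;\text{exists in}\;\cH\Big\}\,,\qquad A\psi\;:=\;\ii\lim_{t\to 0}\tfrac{U(t)\psi-\psi}{t}\,,
\]
which is exactly the object that formula \eqref{eq:actdomH} predicts. The first non-trivial task is showing that $\mathcal{D}(A)$ is \emph{dense}, which I would do by the classical G\aa{}rding smoothing trick: for $f\in C^\infty_0(\mathbb{R})$ and $\psi\in\cH$ form $\psi_f:=\int_\mathbb{R}f(t)U(t)\psi\,\ud t$; a change-of-variable computation exploiting the group property $U(s)U(t)=U(s+t)$ rewrites the strong difference quotient of $U(s)\psi_f$ at $s=0$ as $-\int_\mathbb{R} f'(t)U(t)\psi\,\ud t$, manifestly in $\cH$, so $\psi_f\in\mathcal{D}(A)$. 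Choosing $f$ to be an approximate identity gives $\psi_f\to\psi$ by strong continuity of the group.

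Symmetry comes essentially for free from unitarity $U(t)^\dagger=U(-t)$: for $\psi,\phi\in\mathcal{D}(A)$ one has $\langle\psi,\tfrac{U(t)-\mathbbm{1}}{t}\phi\rangle=\langle\tfrac{U(-t)-\mathbbm{1}}{t}\psi,\phi\rangle$, and passing to the limit yields $\langle\psi,A\phi\rangle=\langle A\psi,\phi\rangle$. A similar use of the group property (commuting $U(s)$ with $U(t)$ inside the difference quotient) also shows $U(t)\mathcal{D}(A)\subset\mathcal{D}(A)$ with $AU(t)\psi=U(t)A\psi$; these facts will be needed later.

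The main obstacle is promoting symmetry to self-adjointness, and I would do so via the range criterion $\ran(A\pm\ii\mathbbm{1})=\cH$. The key instrument is the explicit resolvent-like formula
\[
 R_\pm\psi\;:=\;\int_0^{\infty}e^{-t}\,U(\pm t)\psi\,\ud t\,,
\]
well defined because the integrand has exponentially decaying norm. A computation analogous to the one for $\psi_f$, comparing $U(\pm h)R_\pm\psi$ with $R_\pm\psi$ via a change of variables in $t$, shows $R_\pm\psi\in\mathcal{D}(A)$ and $(\mathbbm{1}\pm\ii A)R_\pm=\mathbbm{1}$. Since $\mathbbm{1}+\ii A=\ii(A-\ii\mathbbm{1})$ and $\mathbbm{1}-\ii A=-\ii(A+\ii\mathbbm{1})$, this yields surjectivity of $A\pm\ii\mathbbm{1}$, and combined with symmetry it forces $A$ to be self-adjoint by the standard criterion. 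The delicate point is the rigorous justification of exchanging the integral with the strong limit in $h$, which rests on the uniform bound $\|U(\pm t)\|=1$ together with the integrability of $e^{-t}$.

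Having $A$ self-adjoint, I would invoke the spectral theorem to define $V(t):=e^{-\ii tA}$ and then prove $V(t)=U(t)$ on all of $\cH$. For $\psi\in\mathcal{D}(A)$ both $U(t)\psi$ and $V(t)\psi$ are strongly differentiable trajectories in $\mathcal{D}(A)$ solving $\ii\tfrac{\ud}{\ud t}\xi(t)=A\xi(t)$ with $\xi(0)=\psi$, so the difference $w(t):=U(t)\psi-V(t)\psi$ obeys $\tfrac{\ud}{\ud t}\|w(t)\|^2=2\,\im\langle w(t),Aw(t)\rangle=0$ by self-adjointness of $A$, and $w(0)=0$ forces $w\equiv 0$. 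Hence $U(t)=V(t)$ on the dense $\mathcal{D}(A)$ and, by boundedness, on the whole $\cH$. Uniqueness of $A$ is then immediate from formula \eqref{eq:actdomH}: any other self-adjoint $B$ satisfying $U(t)=e^{-\ii tB}$ must have the same domain and action as $A$.
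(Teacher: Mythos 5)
Your proposal is correct: it is precisely the classical proof of Stone's theorem (G\aa{}rding vectors for density of $\mathcal{D}(A)$, the Laplace-transform operators $R_\pm$ to verify the range criterion $\ran(A\pm\ii\mathbbm{1})=\cH$, and uniqueness of solutions of $\ii\xi'=A\xi$ to identify $U(t)$ with $e^{-\ii tA}$). Note that the paper deliberately does \emph{not} prove Stone's theorem, deferring instead to \cite[Theorem VIII.8]{rs1} and \cite[Theorem 6.2]{schmu_unbdd_sa}; your argument is essentially the one found in those references, and the only ingredient you use that the paper does develop explicitly -- the strong differentiability of $t\mapsto e^{-\ii tA}\psi$ for $\psi\in\mathcal{D}(A)$ via the spectral measure -- appears in its proof of Theorem \ref{thm:selfadj-SchrEq}.
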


The  statement of Stone's theorem mentions the operator $e^{-\ii t A}$: it is constructed from $A$ by means of of the functional calculus of self-adjoint operators (see, e.g., \cite[Sect.~5.3]{schmu_unbdd_sa}). For the proof of  Stone's theorem we refer, e.g., to \cite[Theorem VIII.8]{rs1} or \cite[Theorem 6.2]{schmu_unbdd_sa}. %\textcolor{red}{$\rightarrow$ to insert nevertheless?}

\begin{proof}[Proof of Theorem \ref{thm:selfadj-SchrEq}]~
 
\underline{Implication (ii) $\Rightarrow$ (i)}. Define $U(t):=e^{-\ii t H}$ by means of the functional calculus of self-adjoint operators. Standard properties of the functional calculus immediately imply that $\{U(t)\,|\,t\in\mathbb{R}\}$ is a strongly continuous one-parameter unitary group, with also $U(t)\mathcal{D}(H)\subset\mathcal{D}(H)$ and  $HU(t)\psi_0=U(t)H\psi_0$ for every $t\in\mathbb{R}$ and $\psi_0\in\mathcal{D}(H)$.

Next, define the auxiliary operator
  \begin{equation*}\tag{*}\label{eq:defA}
   \begin{split}
    \mathcal{D}(A)\;&:=\;\left\{\psi\in\cH\,\left|\,\exists\,\frac{\ud}{\ud t}\Big|_{t=0}U(t)\psi:=\lim_{t\to 0}\frac{U(t)-\mathbbm{1}}{t}\psi\in\cH \right.\right\} \\
    A\psi\;&:=\;\ii\frac{\ud}{\ud t}\Big|_{t=0}U(t)\psi\,.
   \end{split}
  \end{equation*}
The operator $A$ is hermitian, for 
\[
 \langle\psi,A\psi\rangle\;=\;\lim_{t\to 0}\left\langle\psi,\ii\frac{U(t)-\mathbbm{1}}{t}\psi\right\rangle\;=\;\lim_{t\to 0}\left\langle-\ii\frac{U(-t)-\mathbbm{1}}{t}\psi,\psi\right\rangle\;=\;\langle A\psi,\psi\rangle
\]
for every $\psi\in\mathcal{D}(A)$.

For each $\psi\in\mathcal{D}(H)$ one has
\[
 \Big\|\,\ii\frac{U(t)-\mathbbm{1}}{t}\psi-H\psi\Big\|^2\;=\;\int_{\mathbb{R}}\Big|\,\ii\frac{e^{-\ii t\lambda}-1}{t}-\lambda\Big|^2\ud\mu_{\psi}^{(H)}(\lambda)\;\xrightarrow[]{\:t\to 0\:}\;0\,,
\]
where $\mu_{\psi}^{(H)}$ is the scalar spectral measure of $H$ relative to the vector $\psi$ (see, e.g., \cite[Lemma 4.4 and Theorem 5.7]{schmu_unbdd_sa}): the above identity is an immediate consequence of the properties of the functional calculus, whereas the limit as $t\to 0$ follows from dominated convergence and mean value theorem. Thus, $\psi\in\mathcal{D}(A)$ and
\[
 H\psi\;=\;\ii\frac{\ud}{\ud t}\Big|_{t=0}U(t)\psi\;=\;A\psi\,.
\]
This means that the hermitian operator $A$ is an extension of the self-adjoint operator $H$: by maximality of symmetry of any self-adjoint operator (see, e.g., \cite[Sect.~3.2]{schmu_unbdd_sa}), necessarily $A=H$.

Since $A$ is the same as $H$, \eqref{eq:actdomH} is therefore established. Moreover, for $t\in\mathbb{R}$ and $\psi_0\in\mathcal{D}(H)$ one has
\[
 \ii\frac{\ud}{\ud t}U(t)\psi_0\;=\;\ii\lim_{\tau\to 0}\frac{U(t+\tau)-U(t)}{\tau}\psi_0\;=\;\ii \,U(t)\lim_{\tau\to 0}\frac{U(\tau)-\mathbbm{1}}{\tau}\psi_0\;=\;U(t)H\psi_0\,.
\]
This, together with the already proved identity $HU(t)\psi_0=U(t)H\psi_0$, establishes \eqref{eq:HU=UH=dU}. Properties 1.~and 2.~are thus proved. In turn, \eqref{eq:HU=UH=dU} yields finally \eqref{eq:SchrIVP}.

Concerning property 3., \eqref{eq:SchrIVP} clearly implies that $\psi(\cdot)\in C^1(\mathbb{R},\cH)$ (in fact, with values in $\mathcal{D}(H)$ for every $t\in\mathbb{R}$). Should there exist two such solutions $\psi_1(\cdot)$ and $\psi_2(\cdot)$, then $\phi(t):=\psi_1(t)-\psi_2(t)$ would satisfy $\phi(0)=0$ as well as
\[
\begin{split}
  \frac{\ud}{\ud t}\|\phi(t)\|^2\;&=\;\Big\langle \frac{\ud}{\ud t}\phi(t),\phi(t)\Big\rangle+\Big\langle\phi(t),\frac{\ud}{\ud t}\phi(t)\Big\rangle \\
  &=\;\langle-\ii H\phi(t),\phi(t)\rangle+\langle\phi(t),-\ii H\phi(t)\rangle\;=\;0\,.
\end{split}
\]
Thus, $\|\phi(t)\|=\|\phi(0)\|=0$, meaning $\psi_1(t)=\psi_2(t)$ for every $t\in\mathbb{R}$.

\underline{Implication (i) $\Rightarrow$ (ii)}. Owing to Stone's theorem, $U(t)=e^{-\ii t A}$ for every $t\in\mathbb{R}$, where $A$ is a uniquely determined self-adjoint operator. By the very same arguments developed in the first part of the proof (replacing now $H$ with $A$), the domain and the action of $A$ are given by \eqref{eq:defA} (which is not a definition now), and moreover
\[
 \ii\frac{\ud}{\ud t}U(t)\psi_0\;=\;A U(t)\psi_0\qquad\forall \psi_0\in\mathcal{D}(A)\,,\;\forall t\in\mathbb{R}\,.
\]
This and \eqref{eq:SchrIVP} then imply that each $\psi_0$ from $\mathcal{D}(A)$ also belongs to $\mathcal{D}(H)$, with $A\psi_0=H\psi_0$. Therefore the hermitian operator $H$ extends the self-adjoint operator $A$, which by maximality implies $H=A$. $H$ is thus necessarily self-adjoint.
\end{proof}

\begin{remark}\label{rem:uniqueness_with_hermitian}
 The above uniqueness argument $\frac{\ud}{\ud t}\|\phi(t)\|^2=0$ technically speaking involves only the \emph{hermiticity} of $H$, provided that $\phi(t)\in\mathcal{D}(H)$ for generic $t$: this is all what is needed for the step $\langle H\phi(t),\phi(t)\rangle=\langle \phi(t),H\phi(t)\rangle$. One must be guaranteed in advance, though, that both $\psi_1(t)$ and $\psi_2(t)$ evolve inside $\mathcal{D}(H)$, as indeed is assumed in the course of the proof. In the discussion of Subsect.~\ref{sec:non-uniquedynamics}, instead, we observed that no solution to $\ii\frac{\partial}{\partial t}\psi(t,x)=-\frac{\partial^2}{\partial x^2}\psi(t,x)$ exists with the property that the support of $\psi(t,\cdot)$ remains a compact in $(0,1)$ for all $t\in\mathbb{R}$. That is, no solution remains in the domains of the hermitian-only operator $H_\circ$: the uniqueness argument, in that case, concerns a non-existing solution.
\end{remark}

\begin{proof}[Proof of Theorem \ref{thm:onbEV-sa}]~

 Let $(\psi_n)_{n\in\mathbb{N}}$ be the considered orthonormal basis of eigenvectors of $A$ and let $(\lambda_n)_{n\in\mathbb{N}}$ be the collection of the corresponding eigenvalues, all counted with multiplicity. As $A$ is hermitian, the $\lambda_n$'s are all real. Besides, since $(\psi_n)_{n\in\mathbb{N}}$ is an orthonormal basis for $\cH$ and is contained in $\mathcal{D}(A)$, then $\mathcal{D}(A)$ is dense in $\cH$.

 Concerning the subspace $\mathrm{ran}(A+\ii\mathbbm{1})$ (the range of the operator $A+\ii\mathbbm{1}$), we see that
 \[
  \psi_n\;=\;(\lambda_n+\ii)^{-1}(A+\ii\mathbbm{1})\psi_n\;\in\;\mathrm{ran}(A+\ii\mathbbm{1})\,,
 \]
 hence $\mathrm{ran}(A+\ii\mathbbm{1})$ too is \emph{dense} in $\cH$.

 On the other hand, let us now see that the subspace $\mathrm{ran}(A+\ii\mathbbm{1})$ is \emph{closed} in $\cH$. That is, let $(\eta_m)_{m\in\mathbb{N}}$ be a sequence in $\mathrm{ran}(A+\ii\mathbbm{1})$ that converges to some $\eta\in\cH$ and let us show that $\eta\in \mathrm{ran}(A+\ii\mathbbm{1})$. Write $\eta_m=(A+\ii\mathbbm{1})\xi_m$ for some $\xi_m\in\mathcal{D}(A)$ and observe that, owing to the hermiticity of $A$,
 \[
  \|\eta_m-\eta_{m'}\|^2\;=\;\|(A+\ii\mathbbm{1})(\xi_m-\xi_{m'})\|^2\;=\;\|A(\xi_m-\xi_{m'})\|^2+\|\xi_m-\xi_{m'}\|^2\,.
 \]
 Thus,
 \[
  \|\xi_m-\xi_{m'}\|\;\leqslant\;\|\eta_m-\eta_{m'}\|\,,
 \]
 which shows that $(\xi_m)_{m\in\mathbb{N}}$ is a Cauchy sequence in $\cH$ and hence, by completeness of $\cH$, converges to some $\xi\in\cH$. We thus have $\xi_m\to\xi$ and $(A+\ii\mathbbm{1})\xi_m\to\eta$ as $m\to\infty$: as $A$ is a closed operator, and so too is therefore $A+\ii\mathbbm{1}$, then necessarily $\xi\in\mathcal{D}(A)$ and $(A+\ii\mathbbm{1})\xi=\eta$. This shows precisely that $\eta\in\mathrm{ran}(A+\ii\mathbbm{1})$.

 The range of $(A+\ii\mathbbm{1})$ being simultaneously a dense and closed subspace of $\cH$, one concludes that $\mathrm{ran}(A+\ii\mathbbm{1})=\cH$. Analogously, $\mathrm{ran}(A-\ii\mathbbm{1})=\cH$.

 For a generic densely defined operator $T$ on Hilbert space, one has $\ker T^\dagger=(\mathrm{ran}T)^\perp$, i.e., the orthogonal complement to the range of $T$ is precisely the kernel of $T^\dagger$ (see, e.g., \cite[Proposition 1.6(ii)]{schmu_unbdd_sa}). In the present case, with $T\equiv (A+\ii\mathbbm{1})$ and hence $T^\dagger=(A^\dagger-\ii\mathbbm{1})$, the fact that $\mathrm{ran}(A+\ii\mathbbm{1})=\cH$ implies $\ker(A^\dagger-\ii\mathbbm{1})=\{0\}$.

 $A$ is densely defined and hermitian, therefore $\mathcal{D}(A)\subset\mathcal{D}(A^\dagger)$, with $A$ and $A^\dagger$ giving the same output on elements of $\mathcal{D}(A)$. Let us now show that $\mathcal{D}(A)\supset\mathcal{D}(A^\dagger)$: this would imply $\mathcal{D}(A^\dagger)=\mathcal{D}(A)$ and therefore $A^\dagger=A$.

 To this aim, let $\phi\in\mathcal{D}(A^\dagger)$ and let us show that $\phi\in\mathcal{D}(A)$. Since $\mathrm{ran}(A-\ii\mathbbm{1})=\cH$, then $(A^\dagger-\ii\mathbbm{1})\phi=(A-\ii\mathbbm{1})\psi$ for some $\psi\in\mathcal{D}(A)$. The latter identity, owing to the fact that $\psi\in\mathcal{D}(A)\subset \mathcal{D}(A^\dagger)$, implies
 \[
  (A^\dagger-\ii\mathbbm{1})(\phi-\psi)\;=\;0\,,
 \]
 i.e., $\phi-\psi\in\ker(A^\dagger-\ii\mathbbm{1})$. But, as seen above, $\ker(A^\dagger-\ii\mathbbm{1})=\{0\}$. Then $\phi=\psi\in\mathcal{D}(A)$. 
\end{proof}

 \begin{proof}[Proof of Theorem \ref{thm:closed-sa}]~
  
  By assumption, for some $m\in\mathbb{R}$, one has $\mathcal{E}[\psi]\geqslant m\|\psi\|^2$ $\forall\psi\in\mathcal{D}[\mathcal{E}]$. Without loss of generality one can assume that the form $\mathcal{E}$ has lower bound $m=1$, because $\psi\mapsto\mathcal{E}[\psi]$ is closed if and only if $\psi\mapsto\mathcal{E}[\psi]+\lambda\|\psi\|^2$ is so (on the same form domain), irrespectively of $\lambda\in\mathbb{R}$, and the corresponding operators defined by  \eqref{eq:fromFormToOp} are $A$ and $A+\lambda\mathbbm{1}$, where one is self-adjoint if and only if so is the other (on the same operator domain). Thus, non-restrictively, let us set $m=1$.

  %Let now $A$ be the operator associated with $\mathcal{E}$ through definition \eqref{eq:fromFormToOp}, and 
  Define now
  \[
   \|\psi\|_{\mathcal{E}}\;:=\;\mathcal{E}[\psi]^{1/2}\,,\qquad \langle\psi,\phi\rangle_{\mathcal{E}}\;:=\;\mathcal{E}[\psi,\phi]
  \]
  on vectors from $\mathcal{D}[\mathcal{E}]$. These are respectively a norm and the corresponding scalar product. Moreover, as a consequence of the closedness of $\mathcal{E}$, the space $\mathcal{D}[\mathcal{E}]$ equipped with the scalar product $\langle\cdot,\cdot\rangle_{\mathcal{E}}$ is a Hilbert space. Indeed, if $(\psi_n)_{n\in\mathbb{N}}$ is a Cauchy sequence in $(\mathcal{D}[\mathcal{E}],\langle\cdot,\cdot\rangle_{\mathcal{E}})$, then the inequality $\|\psi_n-\psi_m\|_{\mathcal{E}}\geqslant\|\psi_n-\psi_m\|$ implies that it is also a Cauchy sequence in $\cH$ and therefore $\|\psi_n-\psi\|\to 0$ for some $\psi\in\cH$. The definition of closed form given in Subsect.~\ref{sec:closedsemibdd-sa} now implies that $\psi\in\mathcal{D}[\mathcal{E}]$ and $\|\psi_n-\psi\|_{\mathcal{E}}\to 0$. This proves that Cauchy sequences in $(\mathcal{D}[\mathcal{E}],\langle\cdot,\cdot\rangle_{\mathcal{E}})$ converge and therefore that the latter is indeed a Hilbert space.

  Let $A$ be the operator associated with the form $\mathcal{E}$ according to the definition \eqref{eq:fromFormToOp}. $A$ is hermitian, because the form $\mathcal{E}$ is symmetric. Next, let us show that $\mathrm{ran}A=\cH$.

  To this aim, pick an arbitrary $\xi_0\in\cH$. The map $\psi\mapsto\langle\xi_0,\psi\rangle$ is a linear functional on $(\mathcal{D}[\mathcal{E}],\langle\cdot,\cdot\rangle_{\mathcal{E}})$ which is also continuous, since $\|\cdot\|\leqslant\|\cdot\|_{\mathcal{E}}$. As every continuous linear functional on Hilbert space, it can be represented by means of Riesz theorem (see, e.g.,  \cite[Theorem II.4]{rs1}) as
  \[
   \langle\xi_0,\psi\rangle\;=\;\langle\psi_0,\psi\rangle_{\mathcal{E}}\;=\;\mathcal{E}[\psi_0,\psi]\qquad\forall\psi\in\mathcal{D}[\mathcal{E}]
  \]
  for some $\psi_0\in \mathcal{D}[\mathcal{E}]$. Then \eqref{eq:fromFormToOp} says that $\psi_0\in\mathcal{D}(A)$ and $A\psi_0=\xi_0$. Thus, $\mathrm{ran}A=\cH$.

  The latter property also allows one to deduce that the subspace $\mathcal{D}(A)$ is dense in $\cH$. Indeed, for a generic $\phi_0\perp\mathcal{D}(A)$ one can write $\phi_0=A\psi_0$ for some $\psi_0\in\mathcal{D}(A)$, whence
  \[
   0\;=\;\langle\phi_0,\psi\rangle\;=\;\langle A\psi_0,\psi\rangle\;=\;\langle\psi_0,A\psi\rangle\qquad\forall\psi\in\mathcal{D}(A)\,.
  \]
  But this means $\psi_0\perp \mathrm{ran}A$, so the only possibility is $\phi_0=0$. $\mathcal{D}(A)$ is therefore dense.

  From now on, let us reason as done in the above proof of Theorem \ref{thm:onbEV-sa}, in order to show that $A=A^\dagger$. From the fact that $\mathrm{ran}A=\cH$ one concludes that $\ker A^\dagger=(\mathrm{ran}A)^\perp=\{0\}$, i.e., $A^\dagger$ injective. Let $\phi\in\mathcal{D}(A^\dagger)$ and let us show that $\phi\in\mathcal{D}(A)$. Since $\mathrm{ran}A=\cH$, then $A^\dagger\phi=A\psi$ for some $\psi\in\mathcal{D}(A)$. The latter identity, owing to the fact that $\psi\in\mathcal{D}(A)\subset \mathcal{D}(A^\dagger)$, implies $A^\dagger(\phi-\psi)=0$. By injectivity of $A^\dagger$, $\phi=\psi\in\mathcal{D}(A)$. 
   \end{proof}

% \bibliographystyle{siam}
% \bibliography{bib_ALE}

\begin{thebibliography}{10}

\bibitem{Adami-Teta-1998-AharonovBohm}
{\sc R.~Adami and A.~Teta}, {\em {On the {A}haronov-{B}ohm {H}amiltonian}},
  Lett. Math. Phys., 43 (1998), pp.~43--53.

\bibitem{albeverio-solvable}
{\sc S.~Albeverio, F.~Gesztesy, R.~H{\o}egh-Krohn, and H.~Holden}, {\em
  {Solvable {M}odels in {Q}uantum {M}echanics}}, {Texts and Monographs in
  Physics}, Springer-Verlag, New York, 1988.

\bibitem{Amrein-HilberSpMethods-2009}
{\sc W.~O. Amrein}, {\em {Hilbert space methods in quantum mechanics}},
  {Fundamental Sciences}, EPFL Press, Lausanne; distributed by CRC Press, Boca
  Raton, FL, 2009.

\bibitem{Araujo-Coutinho-Perez-2004}
{\sc V.~S. Araujo, F.~A.~B. Coutinho, and J.~{Fernando Perez}}, {\em {Operator
  domains and self-adjoint operators}}, American Journal of Physics, 72 (2004),
  pp.~203--213.

\bibitem{Behrndt-Exner-Holtzmann-Loto-2019-Diracshell}
{\sc J.~Behrndt, P.~Exner, M.~Holzmann, and V.~Lotoreichik}, {\em {On {D}irac
  operators in {$\Bbb R^3$} with electrostatic and {L}orentz scalar
  {$\delta$}-shell interactions}}, Quantum Stud. Math. Found., 6 (2019),
  pp.~295--314.

\bibitem{Behrndt-Exner-Holtzmann-Loto-2020-LandauHamilt}
\leavevmode\vrule height 2pt depth -1.6pt width 23pt, {\em {The {L}andau
  {H}amiltonian with {$\delta$}-potentials supported on curves}}, Rev. Math.
  Phys., 32 (2020), pp.~2050010, 51.

\bibitem{Behrndt-Langer-Loto-AHP2013}
{\sc J.~Behrndt, M.~Langer, and V.~Lotoreichik}, {\em {Schr{\"o}dinger
  operators with {$\delta$} and {$\delta'$}-potentials supported on
  hypersurfaces}}, Ann. Henri Poincar{\'e}, 14 (2013), pp.~385--423.

\bibitem{Berezanskii-1968}
{\sc J.~M. {Berezans$'$ki\u{\i}}}, {\em {Expansions in eigenfunctions of
  selfadjoint operators}}, {Translated from the Russian by R. Bolstein, J. M.
  Danskin, J. Rovnyak and L. Shulman. Translations of Mathematical Monographs,
  Vol. 17}, American Mathematical Society, Providence, R.I., 1968.

\bibitem{Berezin-Shubin-1983}
{\sc F.~A. Berezin and M.~A. Shubin}, {\em {The {S}chr{\"o}dinger equation}},
  vol.~66 of {Mathematics and its Applications (Soviet Series)}, Kluwer
  Academic Publishers Group, Dordrecht, 1991.
\newblock Translated from the 1983 Russian edition by Yu. Rajabov, D. A.
  Le\u{\i}tes and N. A. Sakharova and revised by Shubin, With contributions by
  G. L. Litvinov and Le\u{\i}tes.

\bibitem{Bernevig-TopIns-2013}
{\sc B.~Bernevig and T.~Hughes}, {\em {Topological insulators and topological
  superconductors}}, Princeton University Press, Mar. 2013.

\bibitem{Blank-Exner-Havlicek-2008}
{\sc J.~Blank, P.~Exner, and M.~Havl{\'i}\v{c}ek}, {\em {Hilbert space
  operators in quantum physics}}, {Theoretical and Mathematical Physics},
  Springer, New York; AIP Press, New York, second~ed., 2008.

\bibitem{Braaten-Hammer-2006}
{\sc E.~Braaten and H.-W. Hammer}, {\em {Universality in few-body systems with
  large scattering length}}, Physics Reports, 428 (2006), pp.~259--390.

\bibitem{Bratteli-Robinson-1}
{\sc O.~Bratteli and D.~W. Robinson}, {\em {Operator algebras and quantum
  statistical mechanics. 1}}, {Texts and Monographs in Physics},
  Springer-Verlag, New York, second~ed., 1987.
\newblock $C^\ast$- and $W^\ast$-algebras, symmetry groups, decomposition of
  states.

\bibitem{Cohen-Tannoudji-1977-2020}
{\sc C.~Cohen-Tannoudji, B.~Diu, and F.~Laloe}, {\em {Quantum mechanics; 2nd
  ed.}}, Wiley-VCH, Weinheim, 2020.

\bibitem{CDFMT-2012}
{\sc M.~Correggi, G.~Dell'Antonio, D.~Finco, A.~Michelangeli, and A.~Teta},
  {\em {Stability for a system of {$N$} fermions plus a different particle with
  zero-range interactions}}, Rev. Math. Phys., 24 (2012), pp.~1250017, 32.

\bibitem{CDFMT-2015}
\leavevmode\vrule height 2pt depth -1.6pt width 23pt, {\em {A Class of
  Hamiltonians for a Three-Particle Fermionic System at Unitarity}},
  Mathematical Physics, Analysis and Geometry, 18 (2015).

\bibitem{Cycon-F-K-S-Schroedinger_ops}
{\sc H.~L. Cycon, R.~G. Froese, W.~Kirsch, and B.~Simon}, {\em {Schr{\"o}dinger
  operators with application to quantum mechanics and global geometry}}, {Texts
  and Monographs in Physics}, Springer-Verlag, Berlin, study~ed., 1987.

\bibitem{Dabrowski-Stovicek-1997-AharonovBohm}
{\sc L.~D\c{a}browski and P.~\v{S}\v{t}ov{\'i}\v{c}ek}, {\em {Aharonov-{B}ohm
  effect with {$\delta$}-type interaction}}, J. Math. Phys., 39 (1998),
  pp.~47--62.

\bibitem{Oliveira-Pereira-2008-AharonovBohm}
{\sc C.~R. de~Oliveira and M.~Pereira}, {\em {Mathematical justification of the
  {A}haronov-{B}ohm {H}amiltonian}}, J. Stat. Phys., 133 (2008),
  pp.~1175--1184.

\bibitem{Deckert-Oelker__2DiracCoulomb}
{\sc D.-A. Deckert and M.~Oelker}, {\em {Distinguished self-adjoint extension
  of the two-body {D}irac operator with {C}oulomb interaction}}, Ann. Henri
  Poincar{\'e}, 20 (2019), pp.~2407--2445.

\bibitem{DellAnt_QM1-2015}
{\sc G.~Dell'Antonio}, {\em {Lectures on the mathematics of quantum mechanics.
  {I}}}, vol.~1 of {Atlantis Studies in Mathematical Physics: Theory and
  Applications}, Atlantis Press, Paris, 2015.

\bibitem{DFT-1994}
{\sc G.~F. Dell'Antonio, R.~Figari, and A.~Teta}, {\em {Hamiltonians for
  systems of {$N$} particles interacting through point interactions}}, Ann.
  Inst. H. Poincar{\'e} Phys. Th{\'e}or., 60 (1994), pp.~253--290.

\bibitem{Dirac-PrinciplesQM}
{\sc P.~A.~M. Dirac}, {\em {The {P}rinciples of {Q}uantum {M}echanics}},
  Oxford, at the Clarendon Press, 1958.
\newblock 4th ed.

\bibitem{Gallone-AQM2017}
{\sc M.~Gallone}, {\em {Self-Adjoint Extensions of Dirac Operator with Coulomb
  Potential}}, in {Advances in Quantum Mechanics}, G.~Dell{\rq}Antonio and
  A.~Michelangeli, eds., vol.~18 of {INdAM-Springer series}, Springer
  International Publishing, pp.~169--186.

\bibitem{GM-hydrogenoid-2018}
{\sc M.~Gallone and A.~Michelangeli}, {\em {Hydrogenoid spectra withcentral
  perturbations}}, Rep. Math. Phys., 84 (2019), pp.~215--243.

\bibitem{MG_DiracCoulomb2017}
\leavevmode\vrule height 2pt depth -1.6pt width 23pt, {\em {Self-adjoint
  realisations of the {D}irac-{C}oulomb {H}amiltonian for heavy nuclei}}, Anal.
  Math. Phys., 9 (2019), pp.~585--616.

\bibitem{Gelfand-Shilov-1967}
{\sc I.~M. Gel$'$fand and G.~E. Shilov}, {\em {Generalized functions. {V}ol. 3:
  {T}heory of differential equations}}, {Translated from the Russian by
  Meinhard E. Mayer}, Academic Press, New York-London, 1967.

\bibitem{Gelfand-Vilenkin-1964}
{\sc I.~M. Gel$'$fand and N.~Y. Vilenkin}, {\em {Generalized functions. {V}ol.
  4: {A}pplications of harmonic analysis}}, {Translated by Amiel Feinstein},
  Academic Press, New York - London, 1964, 1964.

\bibitem{Grubb-DistributionsAndOperators-2009}
{\sc G.~Grubb}, {\em {Distributions and operators}}, vol.~252 of {Graduate
  Texts in Mathematics}, Springer, New York, 2009.

\bibitem{Hormander-LinearPDOp-1976}
{\sc L.~H{\"o}rmander}, {\em {Linear partial differential operators}}, Springer
  Verlag, Berlin-New York, 1976.

\bibitem{Kato-1951}
{\sc T.~Kato}, {\em {Fundamental properties of {H}amiltonian operators of
  {S}chr{\"o}dinger type}}, Trans. Amer. Math. Soc., 70 (1951), pp.~195--211.

\bibitem{Kostrykin-Schrader-1999}
{\sc V.~Kostrykin and R.~Schrader}, {\em {Kirchhoff's rule for quantum wires}},
  J. Phys. A, 32 (1999), pp.~595--630.

\bibitem{Kuchment-quantumgraphs-2008}
{\sc P.~Kuchment}, {\em {Quantum graphs: an introduction and a brief survey}},
  in {Analysis on graphs and its applications}, vol.~77 of {Proc. Sympos. Pure
  Math.}, Amer. Math. Soc., Providence, RI, 2008, pp.~291--312.

\bibitem{Kuperin-Makarov-Merk-Motovilov-Pavlov-1989-JMP1990}
{\sc Y.~A. Kuperin, K.~A. Makarov, S.~P. Merkuriev, A.~K. Motovilov, and B.~S.
  Pavlov}, {\em {Extended {H}ilbert space approach to few-body problems}}, J.
  Math. Phys., 31 (1990), pp.~1681--1690.

\bibitem{Kurasov-Stenberg-JPA-2002}
{\sc P.~Kurasov and F.~Stenberg}, {\em {On the inverse scattering problem on
  branching graphs}}, J. Phys. A, 35 (2002), pp.~101--121.

\bibitem{Landau-Lifshitz-3}
{\sc L.~D. Landau and E.~M. Lifshitz}, {\em {Quantum mechanics:
  non-relativistic theory. {C}ourse of {T}heoretical {P}hysics, {V}ol. 3}},
  {Addison-Wesley Series in Advanced Physics}, Pergamon Press Ltd.,
  London-Paris; for U.S.A. and Canada: Addison-Wesley Publishing Co., Inc.,
  Reading, Mass;, 1958.
\newblock Translated from the Russian by J. B. Sykes and J. S. Bell.

\bibitem{Leinfelder-Simander-81}
{\sc H.~Leinfelder and C.~G. Simader}, {\em {Schr{\"o}dinger operators with
  singular magnetic vector potentials}}, Math. Z., 176 (1981), pp.~1--19.

\bibitem{Lieb-Loss-Analysis}
{\sc E.~H. Lieb and M.~Loss}, {\em {Analysis}}, vol.~14 of {Graduate Studies in
  Mathematics}, American Mathematical Society, Providence, RI, second~ed.,
  2001.

\bibitem{Mackey-QM-1963-2004}
{\sc G.~W. Mackey}, {\em {Mathematical foundations of quantum mechanics}},
  Dover Publications, Inc., Mineola, NY, 2004.
\newblock With a foreword by A. S. Wightman, Reprint of the 1963 original.

\bibitem{M2020-BosonicTrimerZeroRange}
{\sc A.~Michelangeli}, {\em {Models of zero-range interaction for the bosonic
  trimer at unitarity}}, Reviews in Mathematical Physics,  (2021), p.~2150010.

\bibitem{MO-2016}
{\sc A.~Michelangeli and A.~Ottolini}, {\em {On point interactions realised as
  {T}er-{M}artirosyan-{S}kornyakov {H}amiltonians}}, Rep. Math. Phys., 79
  (2017), pp.~215--260.

\bibitem{MO-2017}
\leavevmode\vrule height 2pt depth -1.6pt width 23pt, {\em {Multiplicity of
  self-adjoint realisations of the (2+1)-fermionic model of
  {T}er-{M}artirosyan---{S}kornyakov type}}, Rep. Math. Phys., 81 (2018),
  pp.~1--38.

\bibitem{Minlos-1987}
{\sc R.~A. Minlos}, {\em {On the point interaction of three particles}}, in
  {Applications of selfadjoint extensions in quantum physics ({D}ubna, 1987)},
  vol.~324 of {Lecture Notes in Phys.}, Springer, Berlin, 1989, pp.~138--145.

\bibitem{Minlos-Shermatov-1989}
{\sc R.~A. Minlos and M.~K. Shermatov}, {\em {Point interaction of three
  particles}}, Vestnik Moskov. Univ. Ser. I Mat. Mekh.,  (1989), pp.~7--14, 97.

\bibitem{Naidon-Endo-Review_Efimov_Physics-2017}
{\sc P.~Naidon and S.~Endo}, {\em {Efimov physics: a review}}, Reports on
  Progress in Physics, 80 (2017), p.~056001.

\bibitem{Poerschke-Stolz-Weidmann-1989}
{\sc T.~Poerschke, G.~Stolz, and J.~Weidmann}, {\em {Expansions in generalized
  eigenfunctions of selfadjoint operators}}, Math. Z., 202 (1989),
  pp.~397--408.

\bibitem{Popa-1982-XD-DX}
{\sc S.~Popa}, {\em {On commutators in properly infinite {$W^{\ast}
  $}-algebras}}, in {Invariant subspaces and other topics
  ({T}imi\c{s}oara/{H}erculane, 1981)}, vol.~6 of {Operator Theory: Adv.
  Appl.}, Birkh{\"a}user, Basel-Boston, Mass., 1982, pp.~195--207.

\bibitem{rs1}
{\sc M.~Reed and B.~Simon}, {\em {Methods of {M}odern {M}athematical
  {P}hysics}}, vol.~1, New York Academic Press, 1972.

\bibitem{rs2}
\leavevmode\vrule height 2pt depth -1.6pt width 23pt, {\em {Methods of modern
  mathematical physics. {II}. {F}ourier analysis, self-adjointness}}, Academic
  Press [Harcourt Brace Jovanovich, Publishers], New York-London, 1975.

\bibitem{sakurai_napolitano_2017}
{\sc J.~J. Sakurai and J.~Napolitano}, {\em {Modern Quantum Mechanics}},
  Cambridge University Press, 2~ed., 2017.

\bibitem{Schiff-QM-1968}
{\sc L.~I. Schiff}, {\em {Quantum mechanics; 3rd ed.}}, {International series
  in pure and applied physics}, McGraw-Hill, New York, NY, 1968.

\bibitem{schmu_unbdd_sa}
{\sc K.~Schm{\"u}dgen}, {\em {Unbounded self-adjoint operators on {H}ilbert
  space}}, vol.~265 of {Graduate Texts in Mathematics}, Springer, Dordrecht,
  2012.

\bibitem{Shen2013-TopInsul}
{\sc S.~Shen}, {\em {Topological Insulators: Dirac Equation in Condensed
  Matters}}, {Springer Series in Solid-State Sciences}, Springer Berlin
  Heidelberg, 2013.

\bibitem{Simon-82-Schroedinger_semigroups}
{\sc B.~Simon}, {\em {Schr{\"o}dinger semigroups}}, Bull. Amer. Math. Soc.
  (N.S.), 7 (1982), pp.~447--526.

\bibitem{Simon-Schr-XX}
\leavevmode\vrule height 2pt depth -1.6pt width 23pt, {\em {Schr{\"o}dinger
  operators in the twentieth century}}, J. Math. Phys., 41 (2000),
  pp.~3523--3555.

\bibitem{Strocchi-MathQM}
{\sc F.~Strocchi}, {\em {An introduction to the mathematical structure of
  quantum mechanics}}, vol.~28 of {Advanced Series in Mathematical Physics},
  World Scientific Publishing Co. Pte. Ltd., Hackensack, NJ, second~ed., 2008.
\newblock A short course for mathematicians.

\bibitem{Ahari-Ortiz-Seradjeh-SelfAdjTopolQphases2016}
{\sc M.~{Tanhayi Ahari}, G.~Ortiz, and B.~Seradjeh}, {\em {On the role of
  self-adjointness in the continuum formulation of topological quantum
  phases}}, American Journal of Physics, 84 (2016), pp.~858--868.

\bibitem{Tao-2019-commutators-identity}
{\sc T.~Tao}, {\em {Commutators close to the identity}}, J. Operator Theory, 82
  (2019), pp.~369--382.

\bibitem{Teschl-MMQM-2014}
{\sc G.~Teschl}, {\em {Mathematical methods in quantum mechanics}}, vol.~157 of
  {Graduate Studies in Mathematics}, American Mathematical Society, Providence,
  RI, second~ed., 2014.
\newblock With applications to Schr{\"o}dinger operators.

\bibitem{Thaller-Dirac-1992}
{\sc B.~Thaller}, {\em {The {D}irac equation}}, {Texts and Monographs in
  Physics}, Springer-Verlag, Berlin, 1992.

\bibitem{vonNeumann-MathFoundQM}
{\sc J.~von Neumann}, {\em {Mathematical foundations of quantum mechanics. New
  Edition}}, Princeton University Press, Princeton, NJ, 2018.

\bibitem{weinberg_2015}
{\sc S.~Weinberg}, {\em {Lectures on Quantum Mechanics}}, Cambridge University
  Press, 2~ed., 2015.

\bibitem{Wielandt-1949-pq-qp}
{\sc H.~Wielandt}, {\em {{\"U}ber die {U}nbeschr{\"a}nktheit der {O}peratoren
  der {Q}uantenmechanik}}, Math. Ann., 121 (1949), p.~21.

\bibitem{Wintner-1947-pq-qp}
{\sc A.~Wintner}, {\em {The Unboundedness of Quantum-Mechanical Matrices}},
  Phys. Rev., 71 (1947), pp.~738--739.

\end{thebibliography}

\def\cprime{$'$}

\end{document}